\documentclass[11pt]{article}

\usepackage[margin=1in,letterpaper]{geometry}
\usepackage[T1]{fontenc}
\usepackage{lmodern}
\usepackage{amsmath}
\usepackage{amssymb}
\usepackage{amsthm}
\usepackage{color}
\usepackage{microtype}
\usepackage{verbatim}
\usepackage{xspace}
\usepackage{enumitem}
\usepackage[usenames,dvipsnames]{xcolor}
\usepackage[colorlinks=true,allcolors=Maroon]{hyperref}
\usepackage{algpseudocode}
\usepackage{graphicx}

\newtheorem{theorem}{Theorem}[section]
\newtheorem{lemma}[theorem]{Lemma}

\newtheorem{proposition}[theorem]{Proposition}

\newtheorem{definition}[theorem]{Definition}
\newtheorem{problem}{Problem}

\newcommand{\R}{\mathbb{R}}

\newcommand{\Rbb}{\mathbb{R}}
\newcommand{\eps}{\varepsilon}
\newcommand{\calD}{\mathcal{D}}

\newcommand{\Z}{\mathbb{Z}}
\newcommand{\calL}{\mathcal{L}}

\DeclareMathOperator{\poly}{poly}

\begin{document}
\title{Approximate Near Neighbors for General Symmetric Norms}
\author{Alexandr Andoni\\{\small Columbia University}\\{\small
    \texttt{andoni@cs.columbia.edu}} 
\and
Huy L. Nguyen\\{\small Northeastern University}\\{\small
    \texttt{hu.nguyen@northeastern.edu}} 
\and
  Aleksandar Nikolov\\ {\small University of Toronto}\\{\small \texttt{anikolov@cs.toronto.edu}}
  \and Ilya Razenshteyn\\{\small MIT CSAIL}\\{\small \texttt{ilyaraz@mit.edu}} \and
  Erik Waingarten\\{\small Columbia University}\\{\small \texttt{eaw@cs.columbia.edu}}}
\maketitle
\begin{abstract}
We show that every \emph{symmetric} normed space admits an efficient nearest
neighbor search data structure with \emph{doubly-logarithmic} approximation.
Specifically, for every $n$, $d = n^{o(1)}$, and every $d$-dimensional symmetric norm $\|\cdot\|$,
there exists a data structure for $\poly(\log \log n)$-approximate nearest neighbor search over
$\|\cdot\|$ for $n$-point datasets achieving $n^{o(1)}$ query time
and $n^{1+o(1)}$ space. The main technical ingredient of the algorithm is
a low-distortion embedding of a~symmetric norm into a low-dimensional
iterated product of top-$k$ norms.

We also show that our techniques cannot be extended to \emph{general} norms.
\end{abstract}

\thispagestyle{empty}
\pagebreak
\setcounter{page}{1}

\section{Introduction}

The Approximate Near Neighbor problem (ANN) is defined as follows.
The input is a dataset $P$ lying in a metric space $(X, d_X)$, a
distance threshold $r > 0$, and a desired approximation $c > 1$.  The
goal is to preprocess $P$ so that, given a query point $q \in X$, with
the promise that at least one of the data points is within distance $r$,
output a data point within distance $cr$ from $q$.
The ANN problem is an important tool in modern data analysis, and, at the
same time, is a source of many exciting theoretical developments, see,
e.g., the survey in~\cite{andoni16-bigData}.

In many applications, the metric is defined on $d$-dimensional real
vectors $\Rbb^d$.  Depending on the relation between the dimension $d$
and the number of data points $n$, two main regimes have emerged: low- and
high-dimensional. The low-dimensional regime corresponds to $d = o(\log n)$;
 hence algorithms can afford to be \emph{exponential} in the dimension. In
the low-dimensional regime, efficient ANN algorithms are known for
{\em any metric space} \cite{Cl3, KR02, KL04, BKL06}.  In this paper, we focus on
the \emph{high-dimensional} regime, when $\omega(\log n) \leq d \leq
n^{o(1)}$, which is relevant for many applications.

The best-studied metrics are the Hamming ($\ell_1$) and the Euclidean
($\ell_2$) distances. There are good reasons for this: $\ell_1$ and
$\ell_2$ are very common in applications and admit very
efficient algorithms based on \emph{hashing}, in particular,
Locality-Sensitive Hashing (LSH)~\cite{IM98, AI06} and its
data-dependent versions~\cite{AINR14,AR15}.  Hashing-based algorithms
for ANN over $\ell_1$/$\ell_2$ have now been the subject of a two-decade-long line of
work, leading to a very good understanding of
algorithms and their limitations. All such algorithms for
$c$-approximate ANN obtain space $n^{1 + \rho_u + o(1)}$ and query
time $n^{\rho_q + o(1)}$ for some exponents $\rho_u$ and $\rho_q<1$
dependent on $c$; e.g., the most recent paper~\cite{ALRW16a}
gives \emph{tight} time--space trade-offs for every approximation
factor $c > 1$.\footnote{The exact dependence, for $\ell_2$, is that
one can achieve any $\rho_u, \rho_q \geq 0$ satisfying
$c^2 \sqrt{\rho_q} + (c^2 - 1) \sqrt{\rho_u} = \sqrt{2c^2 - 1}$.  }
We point the reader to \cite{HIM12}
and \cite{ALRW16a}, which summarize the state of affairs of the
high-dimensional ANN over $\ell_1/\ell_2$.  A practical perspective is
presented in the surveys~\cite{WSSJ14,WLKC15}.

Beyond $\ell_1$ and $\ell_2$, the landscape of ANN is much more
mysterious, despite having received significant attention.
In 1998, \cite{I01} showed
an efficient data structure for $\ell_\infty$ for $c=O(\log\log d)$
approximation. There are a few extensions of this result to other metrics, some of which proceed via \emph{embedding} a metric into $\ell_{\infty}$ (see
Section~\ref{sec_prior_work}). However, we are still very far from
having a general recipe for ANN data structures for \emph{general}
metrics with a non-trivial approximation; this is in stark contrast
with the success of the low-dimensional regime.
This state of affairs motivates the following broad question.

\begin{problem}
  \label{general_prob}
  For a given approximation $c > 1$, which metric spaces allow efficient ANN algorithms?
\end{problem}

An algorithm for general metrics is highly desirable both in theory and in practice.
From the theoretical perspective, we are interested in a
common theory of
ANN algorithms for a wide class of distances. Such a theory
would yield data structures (or impossibility results) for a~variety of important distance measures
for which we still do not know efficient ANN algorithms (e.g., matrix norms, the Earth Mover's Distance (EMD),
the edit distance, etc.). Perhaps even more tantalizing is understanding what
exactly makes some distances harder than others, and how to
quantify that hardness.  From the practical perspective, it is also desirable
to have a generic algorithm: one that either
uses the underlying distance measure as a black box, or
provides a ``knob'' to easily specialize
to any desired distance. In practice, one must oftentimes tune the distance to
the specifics of the application, and hence  algorithms that
allow such tuning without major re-implementations are preferred.

In this paper, we focus on the following important case of
Problem~\ref{general_prob}.

\begin{problem}
\label{general_prob_norms}
Solve Problem~\ref{general_prob} for high-dimensional \emph{normed} spaces.
\end{problem}

Norms are important for two reasons. First, most metric spaces arising
in applications are actually norms (e.g., the Earth-Mover
Distance \cite{NS07}). Second, norms are geometrically nicer than
general metrics, so there is hope for a coherent theory (e.g.,
for the problems of \emph{sketching} and {\em streaming} norms, see
the generic results of~\cite{AKR15, BBCKY15}).  Using
embeddings into $\ell_2$~\cite{John48,B97a}, one can solve ANN for \emph{any norm}
with approximation $O\Bigl(\sqrt{d / \eps}\Bigr)$, space $n^{1
+ \eps}$, and query time $n^{\eps}$, where $0 < \eps < 1/2$ is a
constant; however, no better results are known in general.

\subsection{Our main result} 

In this paper we nearly settle Problem~\ref{general_prob_norms}
for \emph{symmetric} norms, i.e., norms that are invariant under all
permutations and changes of signs of the coordinates of a vector.  We
show the following general result:

\begin{theorem}
  \label{main_thm}
  For every $n$,
  $d = n^{o\left(1\right)}$,
  and every $d$-dimensional symmetric norm $\|\cdot\|$,
  there exists a data structure for ANN over $\|\cdot\|$
  for $n$-point datasets
  with approximation $(\log \log n)^{O(1)}$
  space $n^{1 + o(1)}$, and query time $n^{o(1)}$.
\end{theorem}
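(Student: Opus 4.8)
The plan is to reduce ANN over an arbitrary $d$-dimensional symmetric norm to ANN over a structured, low-dimensional target space — an \emph{iterated product of top-$k$ norms} — and then to solve ANN over that target directly. Recall that the \emph{top-$k$ norm} of a vector $x\in\R^{D}$ is the sum of the $k$ largest of the numbers $|x_1|,\dots,|x_D|$; it interpolates between $\ell_\infty$ ($k=1$) and $\ell_1$ ($k=D$). By an iterated product of top-$k$ norms I mean a space obtained from top-$k$ norms (in various dimensions, possibly post-composed with a coordinate-wise clipping) by an absolute constant number of applications of the $\ell_\infty$-product $\bigoplus_\infty$ and the $\ell_1$-product $\bigoplus_1$. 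The three ingredients are: (i) a low-distortion, low-dimensional embedding of every symmetric norm into such a space; (ii) an efficient ANN data structure for a single top-$k$ norm; (iii) a composition rule showing that efficient ANN is preserved, up to a $(\log\log n)^{O(1)}$ loss over a constant number of levels, under $\bigoplus_\infty$ and $\bigoplus_1$.

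Ingredient (i) is the heart of the argument. A symmetric norm is monotone in the absolute values of the coordinates and is essentially pinned down, up to the behavior on a single coordinate, by the numbers $b_k=\|(\underbrace{1,\dots,1}_k,0,\dots,0)\|$, which satisfy: $b_k$ nondecreasing and $b_k/k$ nonincreasing. Bucketing the coordinates of a difference vector into $\Theta(\log\Delta)$ dyadic magnitude classes — where $\Delta$ is the aspect ratio of the instance, which the standard diameter-reduction for ANN~\cite{HIM12} makes $n^{O(1)}$ — one sees that $\|x\|$ is controlled, up to the behavior \emph{across} classes, by quantities of the form $\tfrac{b_{k}}{k}\,\|\operatorname{clip}_j(x)\|_{(k)}$ over dyadic magnitude scales $j$ and dyadic $k$, where $\operatorname{clip}_j$ is coordinate-wise clipping at scale $2^j$ (a $1$-Lipschitz, hence admissible, preprocessing). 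The across-classes combination is \emph{not} a plain maximum — a vector whose norm is carried by many small coordinates is governed by a large $k$, one carried by a few large coordinates by a small $k$, and merging these regimes needs a sum (an $\bigoplus_1$) rather than a max — but a plain sum over $\log\Delta$ scales would lose a factor polynomial in $\log\Delta$. The resolution is a \emph{self-similar} decomposition: $\|x\|$ is captured by an $\bigoplus_\infty$ of per-scale top-$k$ pieces together with a \emph{residual} term of the same shape but in which the number of relevant scales has been reduced roughly exponentially (it is itself a symmetric-norm-like object on a vector of length $O(\log\Delta)$); unwinding this recursion a constant number of times, with a direct argument finishing the low-dimensional residual, drives $\log\Delta = O(\log n)$ down through $O(\log\log n)$, etc., and produces an iterated product with $(\log\log n)^{O(1)}$ distortion, while arranging that each level multiplies the ambient dimension by only $d^{o(1)}$, so the target has dimension $n^{o(1)}$.

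For ingredient (ii): a single top-$k$ norm admits an efficient ANN data structure with $(\log\log n)^{O(1)}$ approximation. Using the identity $\|x\|_{(k)}=\min_{t\ge 0}\bigl(kt+\sum_i(|x_i|-t)_+\bigr)$, I would guess the optimal threshold $t$ among $O(\log n)$ dyadic values and, for each guess, reduce (after coordinate-wise truncation) to the ``$\max$'' of an $\ell_1$ instance, handled by standard LSH~\cite{IM98}, and an $\ell_\infty$ instance, handled by Indyk's data structure~\cite{I01} with its $O(\log\log)$ approximation; the best guess certifies a near neighbor. For ingredient (iii): Indyk's $\ell_\infty$-product technique~\cite{I01} turns $c$-approximate ANN for each of $m$ factors into $O(c\log\log m)$-approximate ANN for $\bigoplus_\infty^{m}$ with only $m^{o(1)}$ overhead, and there is an analogous, technically simpler composition for $\bigoplus_1$. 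Since the iterated product has a constant number of levels, these rules compose with total loss $(\log\log n)^{O(1)}$. Chaining (i), (ii), (iii), together with the standard reduction from $c$-ANN to the fixed-scale near-neighbor problem~\cite{HIM12} (so that a scale-dependent embedding suffices) and using $d=n^{o(1)}$, yields approximation $(\log\log n)^{O(1)}$, space $n^{1+o(1)}$, and query time $n^{o(1)}$.

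The main obstacle is ingredient (i): showing that the full variety of $d$-dimensional symmetric norms collapses, up to $(\log\log n)^{O(1)}$ distortion, onto a \emph{constant-level} iterated product of top-$k$ norms of dimension $n^{o(1)}$. Three pressures must be balanced simultaneously: the distortion must be $(\log\log n)^{O(1)}$, which rules out naive scale-by-scale arguments and forces the aspect-ratio-compressing recursion; the number of product levels must be an absolute constant, since each level costs a $\log\log$ factor in the final bound; and the dimension must stay $n^{o(1)}$, which limits how many top-$k$ norms and how many scales may be introduced per level. I expect the bulk of the technical work — the inequalities relating a general symmetric norm to top-$k$ norms of clipped vectors, and the bookkeeping of the recursion — to be concentrated here, while ingredients (ii) and (iii) are adaptations of known $\ell_\infty$/$\ell_1$ ANN machinery.
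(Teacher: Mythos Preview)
Your three-ingredient plan matches the paper's structure exactly, and ingredient (iii) is essentially identical. The substantive differences are in (i) and (ii), with (i) being the important one.

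For ingredient (i), the paper achieves something strictly stronger and cleaner than you propose: a \emph{constant-distortion} embedding (in fact $(1+\delta)$ for any fixed $\delta>0$) into $\bigoplus_{\ell_\infty}^{t}\bigoplus_{\ell_1}^{d} T^{(c)}$ with $t=d^{O_\delta(1)}$ --- only two product levels, no clipping, no recursion, and no dependence on the aspect ratio. The key observation you are missing is purely algebraic: if $y\in\R^d_+$ has non-increasing coordinates, then by Abel summation
\[
\langle |x^*|,\,y\rangle \;=\; \sum_{k=1}^{d}(y_k-y_{k+1})\,\|x\|_{T(k)},
\]
so the ``maximal seminorm'' $x\mapsto\langle |x^*|,y\rangle$ is \emph{exactly} a nonnegative $\ell_1$-combination of top-$k$ norms of $x$ itself. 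Combined with the dual characterization $\|x\|_X=\sup\{\langle |x^*|,y\rangle : y\in B_{X^*},\ y_1\ge\cdots\ge y_d\ge 0\}$, this already writes $\|x\|_X$ as a supremum of $\ell_1$-products of top-$k$ norms; the only remaining work is to replace the supremum over the dual ball by a maximum over a $d^{O(1)}$-size net, which the paper does by a direct counting argument on rounded level vectors. Your recursive scale-compression scheme, by contrast, targets only $(\log\log n)^{O(1)}$ distortion, needs non-linear clipping, and --- as you acknowledge --- concentrates all the difficulty in the unspecified ``residual'' step; the paper sidesteps this entirely via duality.

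For ingredient (ii), the paper also takes a different route: a randomized \emph{linear} map $f(x)_i=x_i/u_i$ with the $u_i$ drawn i.i.d.\ from a tailored distribution (a generalization of max-stable distributions), so that $\|f(x)\|_\infty$ tracks $\|x\|_{T(k)}$ with the right tail probabilities and feeds directly into the $\ell_\infty$ data structure. Your threshold-guessing reduction has a gap as stated: in the identity $\|x\|_{(k)}=\min_{t\ge 0}\bigl(kt+\sum_i(|x_i|-t)_+\bigr)$ the soft-thresholding is applied to the \emph{difference} $x=p-q$, not to $p$ and $q$ separately, so you cannot preprocess the dataset by coordinate-wise truncation and then invoke $\ell_1$ LSH on the truncated points.
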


We note that the techniques behind Theorem~\ref{main_thm} cannot be
extended to {\em general} norms; see details in
Section~\ref{sec_lower_bound}.

\subsection{Why symmetric norms?}
\label{ex_symmetric}
The class of symmetric norms is, in some sense, a sweet spot.  On the
one hand, symmetric norms are mathematically nice and, as we show,
allow for a clean characterization that leads to an efficient ANN data
structure (see the proof overview from
Section~\ref{sec:proofOverview}).  On the other hand, symmetric norms
vastly generalize $\ell_p$ distances and enable many new interesting
examples, some of which arise in applications.  We first consider
the following two examples of symmetric norms, which are crucial for
the subsequent discussion.

The first important example is the \emph{top-$k$ norm:} the sum of $k$
largest absolute values of the coordinates of a vector; $k=1$
corresponds to $\ell_{\infty}$, while $k=d$ corresponds to $\ell_1$.
Another rich set of examples is that of \emph{Orlicz norms:} for any
non-zero convex function $G: \Rbb_+ \to \Rbb_+$ such that $G(0) = 0$,
we define the \emph{unit ball} of a norm $\|\cdot\|_G$ to be:
$$
\Bigl\{x \in \Rbb^d \,\Bigm|\, \sum_{i=1}^d G\bigl(|x_i|\bigr) \leq 1\Bigr\}.
$$
Clearly, for $1 \leq p < \infty$ the $\ell_p$ norm is Orlicz via $G(t) = t^p$.

In statistics and machine learning, Orlicz norms are known as
\emph{$M$-estimators} (for the case of convex losses)~\cite{CW15}. A specific example
is the \emph{Huber loss}. Even though \emph{non-convex} losses do not correspond to norms,
our algorithm still can handle them (see Section~\ref{sec:orlicz}).

\vspace{1ex}\noindent {\bf Other examples of symmetric norms}
used in applications include:
\begin{itemize}[leftmargin=0in,noitemsep]
\item \emph{$k$-support norm}~\cite{AFS12} used for the sparse regression problem; its unit ball is
  the convex hull of $\{x \mid \mbox{$x$ is $k$-sparse}, \|x\|_{2} \leq 1\}$,
\item \emph{box-$\Theta$ norm}~\cite{MPS14} (again, used for sparse regression), defined for $0 < a < b \leq c$ and
  $\Theta = \{\theta \in [a, b]^d \mid \|\theta\|_{1} \leq c\}$ as
$
\|x\| = \min_{\theta \in \Theta} \left(\sum_{i=1}^d \frac{x_i^2}{\theta_i}\right)^{1/2},
$
and its dual;
\item \emph{$K$-functional}~\cite{DM93} used to show tight tail bounds, defined for $t > 0$ as
$
\|x\| = \min\Bigl\{\|x_1\|_{1} + t \cdot \|x_2\|_{2} \,\Bigm|\, x_1 + x_2 = x\Bigr\},
$
\item \emph{$\|\cdot\|_{1,2,s}$ norms}~\cite{KW16a} used for dimension reduction, defined as
$
\|x\| = \left(\sum_{i} \|x_{S_i}\|_{1}^2\right)^{1/2},
$
where $S_1$ is the set of  $s$ largest absolute values of coordinates of $x$, $S_2$ is the set of next $s$ largest coordinates, etc.
\end{itemize}

Finally, we show two simple ways to construct many interesting
examples of symmetric norms. Let $0 = a_0 \leq a_1 \leq
a_2 \leq \ldots \leq a_d$ be a non-decreasing
sub-additive\footnote{For every $n, m$, one has $a_{n+m} \leq a_n +
a_m$.} sequence.  We can define two norms associated with
it~\cite{BS88}: a \emph{minimal} norm is defined as
$$
\|x\| = \max_{1 \leq k \leq d} a_k \cdot \left(\mbox{average of the largest $k$ absolute values of the coordinates of $x$}\right),
$$
and a \emph{maximal} norm is equal to
$$
\|x\| = \sum_{k=1}^d \left(a_k - a_{k-1}\right)\cdot \left(\mbox{$k$-th largest absolute value of a coordinate of $x$}\right).
$$
The minimal norm is the \emph{smallest} norm such that for every $k$ one has:
$$
\Bigl\|(\underbrace{1, 1, \ldots, 1}_{k}, 0, 0, \ldots, 0)\Bigr\| = a_k.
$$
Similarly, the maximal norm is the \emph{largest} such norm.
Minimal norms will provide hard examples of symmetric norms
that preclude some simple(r) approaches to ANN (see Section~\ref{logd_lower}).
We also note that the dual (with respect to the standard dot
product) of any symmetric norm is symmetric as well.

\subsection{Prior work: ANN for norms beyond $\ell_1$ and $\ell_2$}
\label{sec_prior_work}

For norms beyond $\ell_1$ and $\ell_2$, the cornerstone result in ANN
 is a data structure for $\ell_{\infty}$ due to
Indyk~\cite{I01}. For every $\eps > 0$, the data structure
achieves space $n^{1 + \eps}$, query time $n^{o(1)}$, and
approximation $O_{\eps}(\log \log d)$. This is a
\emph{doubly-exponential} improvement over embeddings of
$\ell_{\infty}$ into $\ell_1 / \ell_2$ which require distortion
$\Omega\bigl(\sqrt{d}\bigr)$.

It is well-known~\cite{W91} that \emph{any} $d$-dimensional
normed space embeds into $\ell_{\infty}$ with distortion $(1 + \eps)$, 
which raises the question: can we combine this embedding with the result from
\cite{I01} to solve ANN for
any norm? It turns out that the answer is negative:
accommodating a norm of
interest may require embedding into a very high-dimensional
$\ell_{\infty}$. In the worst case, we need
$2^{O_{\eps}(d)}$ dimensions, and this bound is known
to be tight~\cite{B97a}, even for spaces as simple as $\ell_2$. Even
though this approach would give a non-trivial approximation of $O(\log \log 2^{O(d)}) =
O(\log d)$,
the resulting data structure has query time which is \emph{exponential} in~$d$; thus, this approach is interesting only for the low-dimensional
regime $d = o(\log n)$.

The result of~\cite{I01} has been extended as follows.
In~\cite{I02,I04,AIK09,A09} it was shown
how to build data structures for ANN over
arbitrary \emph{$\ell_p$-products} of metrics given that there exists
an ANN data structure for every factor.  Recall that the
$\ell_p$-product of metric spaces $M_1$, $M_2$, \ldots, $M_k$ is a
metric space with the ground set $M_1 \times M_2 \times \ldots \times
M_k$ and the following distance function:
$$
d\bigl((x_1, x_2, \ldots, x_k), (y_1, y_2, \ldots, y_k)\bigr) =
\Bigl\|\bigl(d_{M_1}(x_1, y_1), d_{M_2}(x_2, y_2), \ldots, d_{M_k}(x_k, y_k)\bigr)\Bigr\|_{p}.
$$
In a nutshell, if we can build efficient ANN data structures for every $M_i$ with approximation $c$,
there exist an efficient data structure for ANN over the product space with approximation $O(c \cdot \log \log n)$. Note that the above also implies ANN for the standard $\ell_p$, though for this
case
a better approximation $O(\log\log d)$ is possible via randomized embeddings into $\ell_{\infty}$~\cite{A09}.

For small values of $p$, one can also get $c=2^{O(p)}$~\cite{NR-ellp,
BG15} using different techniques.

\subsection{Overview of the proof of Theorem~\ref{main_thm}} 
\label{sec:proofOverview}
We prove Theorem~\ref{main_thm} in three steps.
\begin{itemize}[leftmargin=0in]
\item
  First, we build a data structure for $d$-dimensional top-$k$ norms.
We proceed by constructing a~\emph{randomized} embedding into $d$-dimensional $\ell_{\infty}$
with constant distortion, and then invoke the data structure for ANN over $\ell_{\infty}$
from~\cite{I01}.

Our embedding is a refinement of the technique of \emph{max-$p$-stable
distributions} used in~\cite{A09} to embed $\ell_p$ into
$\ell_{\infty}$. Surprisingly, the technique turns out to be very
general, and can handle top-$k$ norms as well an \emph{arbitrary}
Orlicz norm.

While this technique can handle even arbitrary \emph{symmetric norms} (see Appendix~\ref{sec:simple-sym}), there exist
symmetric norms, for which this approach leads to merely a
$\log^{\Omega(1)} d$-approximation, which is exponentially worse than
the bound we are aiming at (see Section~\ref{logd_lower}).

\item To bypass the above limitation and obtain the desired
  $(\log \log n)^{O(1)}$-approximation, we show the following
structural result: \emph{any} $d$-dimensional symmetric norm allows a constant-distortion (deterministic) embedding into
a low-dimensional \emph{iterated product} of top-$k$ norms. More
specifically, the host space $Y$ is an $\ell_{\infty}$-product of
$d^{O(1)}$ copies of the $\ell_1$-product of $X_1$,
$X_2$, \ldots, $X_d$, where $X_k$ is $\Rbb^d$ equipped with the
top-$k$ norm.

  The dimension of $Y$ is $d^{O(1)}$ which is significantly
  better than the bound $2^{\Omega(d)}$ necessary to embed symmetric norms (even $\ell_2$)
  into $\ell_{\infty}$. It is exactly this improvement over the na\"{\i}ve approach
  that allows us to handle any dimension $d = n^{o(1)}$
  as opposed to the trivial $o(\log n)$.

\item
  Finally, we use known results~\cite{I02, A09}, which allow us to construct
  a data structure for ANN over a product space if we have ANN data
  structures for the
  individual factors. Each such step incurs
  an additional $\log \log n$ factor in the resulting approximation.
  Since we have built a data structure for top-$k$ norms,
  and can embed a symmetric norm into an iterated product of top-$k$
  norms, we are done!
  
  Embeddings into iterated product spaces have been successfully used
  before for constructing data structures for ANN over Fr\'{e}chet
  distance~\cite{I02}, edit distance~\cite{I04}, and Ulam
  distance~\cite{AIK09}. Theorem~\ref{main_thm} gives yet another
  confirmation of the power of the technique.
\end{itemize}

\subsection{Optimality of Theorem~\ref{main_thm}}

There remains one aspects of Theorem~\ref{main_thm} that can potentially
be improved: the approximation factor $(\log \log n)^{O(1)}$.

One of the bottlenecks for our algorithm is the ANN data structure for $\ell_{\infty}$ from~\cite{I01},
which gives $O(\log \log d)$ approximation. This bound is known to be tight~\cite{ACP08, KP12}
for certain models of computation (in particular, for decision trees, which captures
the result of~\cite{I01}). Thus, going beyond
approximation $\Omega(\log \log d)$ in Theorem~\ref{main_thm} might be hard;
however, it remains entirely possible to
improve the approximation from $(\log \log n)^{O(1)}$ to $O(\log \log
d)$, which we leave as an open question.

\subsection{Lower bounds for general norms}
\label{sec_lower_bound}
The second step of the proof of Theorem~\ref{main_thm} (see Section~\ref{sec:proofOverview}) shows how to embed any $d$-dimensional
symmetric norm into a \emph{universal} normed space of dimension
$d^{O(1)}$ with a constant distortion. In contrast, we show
that for \emph{general} norms a similar universal construction is
impossible.  More formally, for a fixed $0 < \eps < 1/3$, suppose $U$
is a normed space such that for every $d$-dimensional normed space $X$
there exists a \emph{randomized} linear embedding of $X$ into $U$ with
distortion $O(d^{1/2 - \eps})$. Then,~$U$ must have dimension at least
$\exp\left(d^{\Omega_{\eps}(1)}\right)$. By John's theorem~\cite{John48},
$d$-dimensional $\ell_2$ is a~universal space for distortion
$\sqrt{d}$, so our lower bound is tight up to sub-polynomial factors. See
Section~\ref{sec:linearLB} for details.

To take this a step further, it would be highly desirable to prove
stronger hardness results for ANN over general norms. One approach would be to
show that such a norm $X$ has high \emph{robust expansion}, which is a
property used to deduce ANN lower
bounds~\cite{PTW10, ALRW16a}. There exist \emph{metrics} $M$ that have high
robust expansion, such as the shortest path metric of a~spectral expander (see
Appendix~\ref{apx:lbMetrics}). To obtain a hard norm, it suffices to embed such an
$N$-point metric $M$ into a $\log^{O(1)} N$-dimensional norm with a constant distortion.
The result of \cite{Mat96} shows that
there exist $N$-point metrics $M$ which {\em cannot} be embedded into any norm
of dimension $N^{o(1)}$. However, these metrics are not expanders,
and for expanders such a dimension reduction procedure might be possible. \footnote{In a recent work, Naor \cite{N17} showed that this approach is impossible. He shows that embedding an $N$-point spectral expander with constant distortion into any normed space requires $\poly(N)$ dimensions.}

\subsection{Other related work: dealing with general norms}

The recent result of~\cite{BBCKY15} completely characterizes
the \emph{streaming} complexity of any symmetric norm.  Even though
many symmetric norms (including $\ell_{\infty}$) are hard in
the streaming model, the state of affairs with ANN is arguably much
nicer. In particular, our results imply that all symmetric norms have
 \emph{highly efficient} ANN data structures.  We also point
out that streaming algorithms for the special case of \emph{Orlicz}
norms have been studied earlier~\cite{BO10}.

Another related work is~\cite{AKR15}, which shows that for
norms, the existence of good \emph{sketches} is equivalent to
\emph{uniform} embeddability into $\ell_2$. Sketches are known
to imply efficient ANN data structures, but since many symmetric norms
do not embed into $\ell_2$ uniformly, we conclude that ANN is provably easier
than sketching for a large class of norms.

Finally, we also mention the work of \cite{AV15}, who study ANN under
the class of high-dimensional distances which are {\em Bregman
divergences}. These results are somewhat disjoint since the Bregman
divergences are not norms.

\section{Preliminaries}


\subsection{Norms and products}

We denote non-negative real numbers by $\R_{+}$. For any subset $A \subseteq \R$, we let $\chi_{A} \colon \R \to \{0, 1\}$ be the indicator function of $A$.
Let $X$ be a normed space over $\R^d$. We denote $B_X$ the unit ball of $X$, and
$\|\cdot \|_{X}$ the norm of $X$. We denote $X^*$ the dual norm of
$X$ with respect to the standard dot product $\langle \cdot, \cdot \rangle$, i.e~$\|x\|_{X^*} =
\sup\{|\langle x, y\rangle|: y \in B_X\}$.
For a vector $x \in \Rbb^d$ we define $|x| = (|x_1|, |x_2|, \ldots, |x_d|)$
to be the vector of the absolute values of the coordinates of $x$. For a positive integer $d$ and $1 \leq p \leq \infty$, we denote $\ell_p^d$ the space $\Rbb^d$ equipped with the standard $\ell_p$ norm, which we denote by $\|\cdot\|_p$.

\begin{definition}
\label{def:ordered}
For any vector $x \in \R^d$, we let $x^* = P|x|$ be the vector obtained by applying the permutation matrix $P$ to $|x|$ so coordinates of $x^*$ are sorted in non-increasing absolute value. 
\end{definition}
                                           
\begin{definition}[Symmetric norm]
A norm $\|\cdot\|_{X} \colon \R^d \to \R$ is \emph{symmetric} if for every $x \in \R^d$, $\|x\|_X = \Bigl\||x|\Bigr\|_X = \|x^*\|_{X}$.
\end{definition}

See the introduction for examples of symmetric norms. We note once
again that the dual norm of a symmetric norm is also symmetric.

A natural way to combine norms is via {\em product spaces}, which we
will heavily exploit in this paper.

\begin{definition}[Product space]
  Let $1 \leq p \leq \infty$. Let $(X_1, d_{X_1})$, $(X_2, d_{X_2})$,
  \ldots, $(X_k, d_{X_k})$ be metric spaces.  We define the {\em
    $\ell_p$-product space}, denoted $\bigoplus_{\ell_p} X_i$, to be a
  metric space whose ground set is $X_1 \times X_2 \times \ldots
  \times X_k$, and the distance function is defined as follows: the
  distance between $(x_1, x_2, \ldots, x_k)$ and $(x_1', x_2', \ldots,
  x_k')$ is defined as the $\ell_p$~norm of the vector
  $\bigl(d_{X_1}(x_1, x_1'), d_{X_2}(x_1, x_2'), \ldots, d_{X_k}(x_k,
  x_k')\bigr)$.
\end{definition}

Next we define the top-$k$ norm:
\begin{definition}
For any $k \in [d]$, the \emph{top-$k$ norm}, $\| \cdot \|_{T(k)} \colon \R^d \to \R$, is the sum of the absolute values of the top $k$ coordinates. In other words, 
\[ \| x\|_{T(k)} = \sum_{i=1}^k |x_i^*|, \]
where $x^*$ is the vector obtained in Definition~\ref{def:ordered}.
\end{definition}

\begin{definition}
Given vectors $x, y \in \R^d$, we say $x$ \emph{weakly majorizes} $y$ if for all $k \in [d]$,
\[ \sum_{i=1}^k |x^*_i| \geq \sum_{i=1}^k |y^*_i|. \]
\end{definition}

\begin{lemma}[Theorem B.2 in \cite{MOA11}]
If $x, y \in \R^d$ where $x$ weakly majorizes $y$, then for any symmetric norm $\|\cdot\|_{X}$,
\[ \|x\|_{X} \geq \|y\|_{X}. \]
\end{lemma}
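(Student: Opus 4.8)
The plan is to reduce the statement to two classical facts: that a symmetric norm is \emph{monotone} on the nonnegative orthant, and the structure theory of (weak) majorization via doubly stochastic matrices. Since $\|x\|_X = \|x^*\|_X$ and $\|y\|_X = \|y^*\|_X$ by symmetry, and weak majorization is a statement about the sorted nonnegative vectors $x^*, y^*$, it suffices to show $\|y^*\|_X \le \|x^*\|_X$ whenever $x^*$ weakly majorizes $y^*$.

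First I would establish monotonicity: if $0 \le a_i \le b_i$ for all $i$ then $\|a\|_X \le \|b\|_X$. By an inductive argument over the coordinates it is enough to treat the case where $a$ and $b$ differ in a single coordinate $i$, say $b_i = t \ge 0$ and $a_i = s \in [0,t]$ (the case $t=0$ being trivial). Let $v^+$ and $v^-$ be the vectors agreeing with $a$ and $b$ on all coordinates except $i$, where $v^+_i = t$ and $v^-_i = -t$. By sign-symmetry $\|v^+\|_X = \|v^-\|_X$, and for $\lambda = \tfrac12(1 + s/t) \in [\tfrac12, 1]$ the convex combination $\lambda v^+ + (1-\lambda)v^-$ has $i$-th coordinate exactly $s$ and agrees with $v^\pm$ elsewhere, so the triangle inequality gives $\|a\|_X = \|\lambda v^+ + (1-\lambda)v^-\|_X \le \lambda \|v^+\|_X + (1-\lambda)\|v^-\|_X = \|b\|_X$.

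Next I would invoke the majorization structure theorem. Since $x^*$ weakly majorizes $y^*$ and both are nonnegative, a standard lemma (see, e.g.,~\cite{MOA11}) produces a vector $u \in \R^d$ with $0 \le y^*_i \le u_i$ for all $i$ and with $\sum_i u_i = \sum_i x^*_i$ and $\sum_{i\le k} u_i \le \sum_{i \le k} x^*_i$ for every $k$; that is, $x^*$ (strongly) majorizes $u$. By the Hardy--Littlewood--P\'olya theorem (equivalently, Birkhoff--von Neumann), $u$ then lies in $\conv\{P x^* : P \text{ a permutation matrix}\}$, so we may write $u = \sum_j \lambda_j P_j x^*$ with $\lambda_j \ge 0$ and $\sum_j \lambda_j = 1$.

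Finally I would combine the pieces. From $y^*_i \le u_i$ for all $i$ it follows that the sorted versions satisfy $y^*_k \le u^*_k$ for all $k$, so monotonicity gives $\|y\|_X = \|y^*\|_X \le \|u\|_X$. On the other hand, by the triangle inequality, homogeneity, and permutation-invariance of $\|\cdot\|_X$, $\|u\|_X \le \sum_j \lambda_j \|P_j x^*\|_X = \sum_j \lambda_j \|x^*\|_X = \|x\|_X$, which proves the claim. The only non-routine ingredient is the majorization lemma producing the intermediate vector $u$; I expect this to be the main obstacle, but it is entirely classical and may simply be cited, after which everything else is a short convexity argument.
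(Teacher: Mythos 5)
Your proposal is correct. The paper does not prove this lemma at all---it is quoted verbatim as Theorem~B.2 of \cite{MOA11}---and your argument is precisely the standard proof behind that citation: monotonicity of symmetric norms on the nonnegative orthant via the sign-flip convex-combination trick, the classical reduction of weak majorization to strong majorization through an intermediate vector $u$ with $y^* \le u \prec x^*$, and Hardy--Littlewood--P\'olya/Birkhoff to write $u$ as a convex combination of permutations of $x^*$. The one step you leave as a citation (the existence of $u$) is itself classical, so nothing is missing relative to the paper's own treatment.
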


\begin{definition}
For $i \in [d]$, let $\xi^{(i)} \in \R^d$ be the vector
\[ \xi^{(i)} = (\underbrace{1, \dots, 1}_{i}, \underbrace{0, \dots, 0}_{d-i}) \]
consisting of exactly $i$ 1's, and $d-i$ 0's.
\end{definition}

\subsection{ANN for $\ell_{\infty}$ and $\ell_{\infty}$-products}

We will crucially use the following two powerful results of Indyk. The
first result is for the standard $d$-dimensional $\ell_\infty$ space.

\begin{theorem}[{\cite[Theorem~1]{I01}}]
\label{thm:l-infty-ds}
For any $\eps \in (0, 1/2)$, there exists a data structure for ANN for
$n$-points datasets in the $\ell_{\infty}^d$ space with approximation
$O\left(\frac{\log \log d}{\eps}\right)$, space $O(d\cdot n^{1
+ \eps})$, and query time $O(d \cdot \log n)$.
\end{theorem}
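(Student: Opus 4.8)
The plan is to reproduce Indyk's decision-tree construction, in two stages. First I would reduce to bounded aspect ratio. After scaling we may assume $r=1$. Impose a randomly shifted axis-parallel grid with cells of side length $\Delta=\Theta(d^2)$; conditioned on the constant-probability event that a query $q$ and its promised near neighbor fall in the same cell, it suffices to solve ANN separately inside each non-empty cell, where every coordinate now ranges over an integer interval of length $m=\poly(d)$. Repeating this with $O(\log n)$ independent shifts and returning the first answer found within $cr$ boosts the success probability to $1-1/\poly(n)$, at the cost of an $O(1)$ factor in approximation and an $O(\log n)$ factor in space and time; so it remains to handle datasets in $\{0,1,\dots,m\}^d$ with $m=\poly(d)$, and it is here that $\log\log m=\Theta(\log\log d)$ will enter.

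Second, for bounded aspect ratio I would build a tree over subsets $S\subseteq P$. At a node holding $S$: if $|S|\le 1$, or if every coordinate $i$ has spread $\max_{p\in S}p_i-\min_{p\in S}p_i\le D$ for a threshold $D=O(\log\log d\,/\,\eps)$ to be fixed, declare a leaf storing an arbitrary $p\in S$ (correctness: if the promised near neighbor lies in $S$, then every point of $S$ — hence the stored one — is within $\ell_\infty$-distance $D+1$ of the query). Otherwise pick a coordinate $i$ of large spread together with a ``sparse strip'' $(v,v+w]$ in coordinate $i$ that contains few points of $S$, and split $S$ into $S_{\mathrm{left}}=\{p:p_i\le v+w\}$ and $S_{\mathrm{right}}=\{p:p_i>v\}$, recursing on each. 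The strip is duplicated across the two children, but for a query routed by the single test ``$q_i\le v+w/2$?'' the triangle inequality (using $w\ge 2$) ensures the promised near neighbor descends into the chosen child; thus the query follows one root-to-leaf path, with $O(d)$ work per node, for query time $O(d\cdot\mathrm{depth})$.

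The heart of the proof — and the main obstacle — is the structural lemma governing the choice of $(i,v,w)$, which must serve three competing goals at once: (a) the duplicated strip carries at most an $\eps$-fraction of $|S|$, so the space recursion $\mathrm{Sp}(N)\le\max\{\mathrm{Sp}(N_1)+\mathrm{Sp}(N_2):N_1+N_2\le(1+\eps)N,\ N_1,N_2\le N\}$ solves to $N^{1+O(\eps)}$ (times the $O(d)$ per node); (b) both children are a constant factor smaller than $S$, so the depth, and hence query time, is $O(\eps^{-1}d\log n)$; (c) the slack $w/2$ given up at this node does not accumulate beyond the target $D$ along a path. I would obtain the lemma via a multi-scale pigeonhole: allow strip widths $w$ only from a doubly-exponentially spaced set $\{2,4,16,256,\dots\}$ of $O(\log\log m)=O(\log\log d)$ values, and at each node split at the smallest admissible scale $w$ in some coordinate of large spread — a width-$w$ strip of density at most $1/\poly(1/\eps)$ with a constant fraction of the points strictly on each side must exist, since otherwise that coordinate would have spread smaller than $w$ and be ineligible at this scale. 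The crux is that each split at scale $w$ shrinks the relevant range of $S$ (the diameter bound attached to the node) roughly from $w^2$ to $w$, so on any root-to-leaf path only $O(\log\log d)$ distinct scales ever occur before the diameter drops below the finest scale and a leaf is declared; since the $\ell_\infty$-slack is the maximum over coordinates and each scale contributes an $O(1/\eps)$ factor, the per-path loss is $D=O(\log\log d\,/\,\eps)$. Nailing the exact quantitative tuning of the scale sequence against the overlap bound and the depth bound simultaneously is the delicate step; the aspect-ratio reduction and the routing/correctness claims are then routine.
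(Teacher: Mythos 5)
First, note that the paper does not prove this statement at all: it is quoted verbatim as Theorem~1 of \cite{I01} and used as a black box, so the only meaningful comparison is with Indyk's original argument, which your sketch is attempting to reconstruct.

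Your reconstruction has a genuine gap, and it sits exactly where you flag ``the crux.'' The structural lemma you posit --- in every coordinate of large spread there is a width-$w$ strip of density $1/\poly(1/\eps)$ with a \emph{constant fraction of points strictly on each side} --- is false, and the pigeonhole justification (``otherwise the spread would be smaller than $w$'') does not address the failure mode. Take a coordinate whose values decay geometrically from a head: $n/2$ points at $0$, $n/4$ at $1$, $n/8$ at $2$, and so on. The spread is huge, yet any cut leaving a constant fraction on the right must sit at position $O(1)$, and there every strip of width $w\ge 2$ captures a constant fraction of $S$. This is precisely the distribution that forces the $\log\log$ in Indyk's theorem, and his proof handles it by \emph{dropping both of your requirements}: cuts need not be balanced (only the accounting $|S_{\mathrm{left}}|^{1+\rho}+|S_{\mathrm{right}}|^{1+\rho}\le|S|^{1+\rho}$ is enforced, which directly yields space $n^{1+\eps}$ without any balance assumption), and when \emph{no} coordinate admits such a cut he concludes not bounded spread but doubly-exponential concentration: the fraction of points at distance $>\ell$ from a per-coordinate median decays like $2^{-(1+\rho)^{\ell}}$, so taking $\ell=O(\rho^{-1}\log\log d)$ and union-bounding over the $d$ coordinates puts at least half of $S$ inside an $\ell_\infty$-ball of radius $O(\eps^{-1}\log\log d)$; that cluster becomes a leaf (this is the sole source of the approximation factor and of the dependence on $d$) and one recurses on the remaining $\le n/2$ outliers. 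Your leaf criterion ``every coordinate has spread $\le D$'' is never triggered in the presence of a single far outlier, and your balance requirement is what makes the space recursion $N_1+N_2\le(1+\eps)N$ close --- so the two unprovable claims prop each other up.

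Two secondary issues. The aspect-ratio reduction is not part of Indyk's proof and is not needed; worse, repeating the grid shift $O(\log n)$ times multiplies the query time by $O(\log n)$, so together with a depth-$O(\eps^{-1}\log n)$ tree you would get query time $O(\eps^{-1}d\log^2 n)$, exceeding the stated $O(d\log n)$. And the ``slack $w/2$ accumulating along the path'' is a red herring: with overlap $\ge 2$ and midpoint routing the exact near neighbor is preserved in the chosen child at every internal node, so all approximation loss occurs at the leaf; the $\log\log d$ does not come from summing per-node losses over $O(\log\log d)$ scales, but from the radius of the cluster in the no-good-cut case.
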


The second is a generalization of the above theorem, which applies to
an $\ell_{\infty}$-product of $k$ metrics $X_1,\ldots X_k$, and achieves
approximation $O(\log \log n)$. It only needs black-box ANN schemes
for each metric $X_i$.

\begin{theorem}[{\cite[Theorem~1]{I02}}]
\label{thm:maxProduct}
Let $X_1, X_2, \ldots, X_k$ be metric space, and let $c > 1$ be a real
number.  Suppose that for every $1 \leq i \leq k$ and every $n$ there
exists a data structure for ANN for $n$-point datasets from $X_i$ with
approximation $c$, space $S(n)\ge n$, query time $Q(n)$, and probability of
success $0.99$. Then, for
every $\eps>0$, there exists ANN under
$\bigoplus_{\ell_\infty}^k{\cal M}$ with:
\begin{itemize}
\item
$O(\eps^{-1}\log\log n)$ approximation, 
\item
$O(Q(n)\log n+dk\log n)$ query time, where $d$ is the time to compute
  distances in each $X_i$, and
\item $S(n)\cdot O(kn^{\eps})$ space/preprocessing.
\end{itemize}
\end{theorem}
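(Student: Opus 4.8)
\noindent\emph{Proof plan.} The plan is to reconstruct, and extend to product spaces, Indyk's recursive divide-and-conquer data structure behind Theorem~\ref{thm:l-infty-ds}, replacing its single geometric operation --- reasoning about the proximity of points along one coordinate axis --- by black-box calls to the given per-factor ANN structures for $X_1,\dots,X_k$. I would work throughout with the decision version at a fixed radius $r$: given a query $q=(q_1,\dots,q_k)$ promised to have a data point $p$ with $d_{X_i}(q_i,p_i)\le r$ for all $i$, return some data point within $O(\eps^{-1}\log\log n)\cdot r$ of $q$. The structure is a tree; an internal node owns a subset $S\subseteq P$ and a ``working radius'' $\rho\ge r$, picks a coordinate $i$, and invokes the ANN structure for $X_i$ on $\{p_i:p\in S\}$ (built at radius $\Theta(\rho)$, hence with slack $\Theta(c\rho)$) to group $S$ into ``clusters'' that are well separated in the $i$-th coordinate; a single factor query on $q_i$ then pins down the unique cluster that could contain a near neighbor of $q$, or certifies that $S$ contains none (and the subtree is abandoned). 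Leaves hold $O(1)$ points and are handled by computing all coordinate distances exactly.

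The engine is a dichotomy at each node with $m=|S|$, providing progress of one of two kinds. First kind: some coordinate's clusters merge into two well-separated groups of sizes $\Theta(m)$; then I route $q$ to the group containing its near cluster and recurse there with $\rho$ unchanged --- a \emph{lossless} step that shrinks $m$ by a constant factor. Second kind: no coordinate admits such a split, so in every coordinate a single cluster carries all but a small ``deficit''; a union bound over the $k$ coordinates then produces a pivot $p^\star$ that, simultaneously in every coordinate, is close to all but a polynomially smaller remainder of $S$, and I recurse on that remainder with a bounded upward adjustment of $\rho$, recording $p^\star$ as a fallback output. The crucial point is that second-kind steps make super-exponential progress in $m$, so a careful calibration of the quantitative thresholds (trading the space exponent $\eps$ against the number of such steps) bounds them by $O(\eps^{-1}\log\log n)$ on any root-to-leaf path; since these are the only steps that enlarge $\rho$, both the final working radius and hence the overall approximation come out to $O(\eps^{-1}\log\log n)$ (for constant $c$). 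First-kind steps only bound the total path length by $O(\log n)$. A query follows the single path consistent with it, spending $O(1)$ factor queries and $O(k)$ exact distance evaluations per node, giving query time $O(Q(n)\log n+dk\log n)$; space is $S(n)\cdot n^{O(\eps)}$ because only first-kind splits keep two children and a point enters exactly one of them, so $\sum_v|S_v|\le n^{1+O(\eps)}$ and convexity of $S(\cdot)$ closes the bound; and each per-factor structure is amplified to failure probability $n^{-\Omega(1)}$ so the whole path succeeds with probability $0.99$.

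The main obstacle is precisely this dichotomy together with its accounting: establishing that whenever no coordinate is ``spread'' the set is concentrated tightly enough --- \emph{simultaneously over all $k$ coordinates} --- that the remainder passed to the recursion is polynomially small, and that consequently the working radius (and the approximation) grows only $\log\log n$ times along a path rather than $\log n$ times. This is the delicate heart of Indyk's $\ell_\infty$ argument, here complicated a little further by the approximate, Monte-Carlo nature of the per-factor primitives and by the need to choose the cluster separations ($\Theta(c\rho)$) large enough that every routing decision is unambiguous and therefore lossless. The remaining ingredients --- building the clusterings via the factor ANNs, the routing rule, the amplification, and the resource bookkeeping --- are routine once that lemma is available.
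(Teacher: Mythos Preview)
Your high-level plan --- reconstruct Indyk's recursive decision tree for $\ell_\infty$, replacing per-coordinate comparisons by black-box per-factor ANN calls --- is indeed how this theorem is proved, and the paper itself does not reprove it: it cites~\cite{I02} and contributes only the sharper space analysis in Appendix~\ref{apx:space}. But the specific two-case dichotomy you propose does not close.

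The gap is quantitative. Failure of your ``first kind'' only says that in every coordinate one cluster holds at least a $(1-\delta)$-fraction of $S$, for the \emph{constant} $\delta$ that witnesses ``not $\Theta(m)$ on both sides''. A union bound over $k$ coordinates then leaves a remainder of size up to $k\delta m$, which is still a constant fraction of $m$, not polynomially smaller; so your ``second kind'' does not give super-exponential progress and the $\log\log n$ bound on radius-increasing steps fails. If instead you tighten the first-kind threshold to $\delta\approx m^{-\eta}/k$ so that the union bound does yield a polynomially small remainder, then a first-kind step peels off only an $m^{1-\eta}/k$-sized piece and no longer shrinks $m$ by a constant factor, so the $O(\log n)$ depth bound fails. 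No single threshold makes both halves of your dichotomy work simultaneously.

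Indyk's actual split rule --- the one whose space is analyzed in Appendix~\ref{apx:space} --- is not balanced-versus-concentrated. In each coordinate he looks for a radius $R$ at which the ball about a pivot exhibits a \emph{density drop}, the condition $\bigl(|B_i(s,R{+}c{+}1)|/m\bigr)^{1+\eps}<|B_i(s,R)|/m$ of Case~2 (Case~3 is the analogous complement version). When such an $R$ is found he recurses on the two \emph{overlapping} children $B_i(s,R{+}c{+}1)$ and $S\setminus B_i(s,R)$; the $(1{+}\eps)$-power inequality is precisely what drives the space induction $L(m)\le\sum_j(m_j')^{1+\eps}<m^\eps\sum_j m_j=m^{1+\eps}$ of Appendix~\ref{apx:space}, and it is not a disjoint-split argument. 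When no coordinate admits such a drop within $\ell=\Theta(\eps^{-1}\log\log n)$ shells, telescoping the failed inequalities forces every coordinate-ball of radius $O(c\ell)$ to contain all of $S$, so the node becomes a leaf of $\ell_\infty$-diameter $O(c\ell)$; that leaf diameter --- not a count of radius increases along the query path --- is the source of the $O(\eps^{-1}\log\log n)$ approximation.
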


Strictly speaking, we need to impose a technical condition on the ANN
for each $X_i$ --- that it reports the point with the smallest {\em
  priority} --- which is satisfied in all our scenarios;
see~\cite[Section 2]{I02} for details. Also, the original statement
of \cite{I02} gave a somewhat worse space bound. The better space
results simply from a better analysis of the algorithm, as was
observed in \cite{AIK09}; we include a proof in Appendix
\ref{apx:space}.

\section{ANN for Orlicz and top-$k$ norms}
\label{sec:orlicz}
Before showing a data structure for general symmetric norms, we give an algorithm for general Orlicz norms. We then show how to apply these ideas to top-$k$ norms. This restricted setting has a simple analysis and illustrates one of the main techniques used in the rest of the paper.
A similar approach was used in prior work to construct randomized
embeddings of $\ell_p$ norms into $\ell_\infty$, and solve the ANN
search problem; here we show that these techniques are in fact
applicable in much greater generality.  

\begin{lemma}
  \label{lem:orliczmap}
  Let $\|\cdot\|_G$ be an Orlicz norm. For every $D, \alpha > 1$ and every $\mu \in (0, 1/2)$
  there exists a randomized linear map $f \colon \Rbb^d \to \Rbb^d$ such that for every $x \in \Rbb^d$:
  \begin{itemize}
    \item if $\|x\|_G \leq 1$, then $\mathrm{Pr}_f\Bigl[\bigl\|f(x)\bigr\|_{\infty} \leq 1\Bigr] \geq \mu$;
    \item if $\|x\|_G > \alpha D$,
      then $\mathrm{Pr}_f\Bigl[\bigl\|f(x)\bigr\|_{\infty} > D\Bigr] \geq 1 - \mu^{\alpha}$.
  \end{itemize}
\end{lemma}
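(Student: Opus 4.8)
The plan is to realize $f$ as a random \emph{diagonal} linear map. I would sample i.i.d.\ non-negative random variables $u_1,\dots,u_d$ and set $f(x)=(x_1/u_1,\dots,x_d/u_d)$, so that $\|f(x)\|_\infty=\max_i |x_i|/u_i$ and hence, for any threshold $c>0$, by independence
\[
\Pr_f\!\left[\|f(x)\|_\infty\le c\right]=\Pr\!\left[\forall i:\ u_i\ge |x_i|/c\right]=\prod_{i=1}^d \Pr\!\left[u_i\ge |x_i|/c\right].
\]
The idea is to choose the law of $u_i$ so that its survival function reproduces the single-coordinate contribution to the Orlicz gauge: for a parameter $\lambda>0$ to be fixed, let $\Pr[u_i\ge t]=\exp(-\lambda\,G(t))$ for $t\ge 0$. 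First I would check this is a legitimate distribution: since $G$ is finite, convex, non-zero and $G(0)=0$, it is non-decreasing and continuous, and $G(t)\to\infty$ as $t\to\infty$ (convexity gives $G(t)\ge (t/t_0)\,G(t_0)$ for $t\ge t_0$ whenever $G(t_0)>0$); thus $t\mapsto \exp(-\lambda G(t))$ decreases from $1$ to $0$, and $u_i>0$ almost surely, so $f$ is well defined.

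With this choice, the displayed product equals $\exp\bigl(-\lambda\,\Phi_c(x)\bigr)$ where $\Phi_c(x):=\sum_{i=1}^d G(|x_i|/c)$. By definition of the Luxemburg norm, $\|x\|_G=\inf\{s>0:\Phi_s(x)\le 1\}$, and since $\Phi_s(x)$ is non-increasing and continuous in $s$, one has $\Phi_s(x)\le 1$ for $s\ge\|x\|_G$ and $\Phi_s(x)\ge 1$ for $s\le\|x\|_G$. I would also record the one-sided scaling estimate: for $\beta\ge 1$ and $t\ge 0$, convexity with $G(0)=0$ gives $G(t)=G\bigl(\tfrac{1}{\beta}\cdot\beta t+(1-\tfrac{1}{\beta})\cdot 0\bigr)\le\tfrac{1}{\beta}\,G(\beta t)$, hence $G(\beta t)\ge\beta\,G(t)$; summing over coordinates, $\Phi_{c/\beta}(x)\ge\beta\,\Phi_c(x)$ for all $\beta\ge1$. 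Now set $\lambda:=\ln(1/\mu)>0$. If $\|x\|_G\le 1$, then $\Phi_1(x)\le 1$, so $\Pr_f[\|f(x)\|_\infty\le 1]=\exp(-\lambda\Phi_1(x))\ge e^{-\lambda}=\mu$. If $\|x\|_G>\alpha D$, then $\Phi_{\alpha D}(x)\ge 1$, and the scaling estimate with $c=\alpha D$, $\beta=\alpha$ gives $\Phi_D(x)\ge\alpha\,\Phi_{\alpha D}(x)\ge\alpha$, so $\Pr_f[\|f(x)\|_\infty\le D]=\exp(-\lambda\Phi_D(x))\le e^{-\alpha\lambda}=\mu^\alpha$, i.e.\ $\Pr_f[\|f(x)\|_\infty>D]\ge 1-\mu^\alpha$.

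Most of this is routine; the one genuine idea is that — in contrast to the $\ell_p$ case, where $\Phi_c(x)=(\|x\|_G/c)^p$ lets one read off the probability exactly — here $\Phi_c(x)$ is not a power of $\|x\|_G/c$, so I cannot invert the relation. The resolution is that the convexity estimate $G(\beta t)\ge\beta G(t)$ for $\beta\ge1$ pushes $\Phi$ in precisely the direction needed in each of the two cases, and crucially no matching upper bound on $\Phi$ is ever required; this asymmetry is exactly why the map works for an arbitrary (not just homogeneous) Orlicz function. The only other point needing care — and the place where all the structural hypotheses on $G$ are used — is the verification that $t\mapsto\exp(-\lambda G(t))$ is a valid survival function and that $\Phi_{\|x\|_G}(x)\le1$, both of which are standard facts about Luxemburg norms.
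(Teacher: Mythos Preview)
Your proposal is correct and is essentially the paper's proof: the same diagonal random map $f(x)_i=x_i/u_i$ with survival function $\Pr[u_i\ge t]=\mu^{G(t)}=\exp(-\lambda G(t))$ (your $\lambda=\ln(1/\mu)$), the same product formula $\Pr[\|f(x)\|_\infty\le c]=\mu^{\Phi_c(x)}$, and the same convexity step $G(\beta t)\ge\beta G(t)$ to pass from $\Phi_{\alpha D}(x)\ge 1$ to $\Phi_D(x)\ge\alpha$. You are simply more careful than the paper in checking that the survival function is legitimate and that $\Phi_{\|x\|_G}(x)\le 1$; the paper takes these as evident.
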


\begin{proof}
Let the distribution $\calD$ over $\R_+$ have the following CDF $F: \R_+ \to [0, 1]$:
$$
F(t) = \Pr_{u \sim \calD}[ u \leq t ] = 1 - \mu^{G(t)}.
  $$
Consider the following randomized linear map $f: \R^d \to \R^d$:
\[ \left( x_1, x_2, \dots, x_d \right) \mathop{\mapsto}^f \left( \frac{x_1}{u_1}, \frac{x_2}{u_2}, \dots, \frac{x_d}{u_d} \right)\]
where $u_1, \dots, u_d \sim \calD$ are i.i.d.\ samples from $\calD$.
Suppose that $\|x\|_G \leq 1$. Then, $\sum_{i=1}^d G(|x_i|) \leq 1$. This, in turn,
implies:
\[ \Pr_f\Bigl[ \|f(x)\|_{\infty} \leq 1\Bigr] = \prod_{i=1}^d \Pr_{u_i \sim \calD}\left[ \left|\frac{x_i}{u_i}\right| \leq 1 \right] = \prod_{i=1}^d \mu^{G(|x_i|)} = \mu^{\sum_{i=1}^d G(|x_i|)} \geq \mu. \]
Now suppose that $\|x\|_G > \alpha D$. This, together with the convexity of $G(\cdot)$, implies:
$$
\sum_{i=1}^d G\left(\frac{|x_i|}{D}\right) \geq (1-\alpha)G(0) + \alpha \cdot \sum_{i=1}^d G\left(\frac{|x_i|}{\alpha D}\right) \geq \alpha.
$$
Thus, we have:
\[ \Pr_f\Bigl[ \|f(x)\|_{\infty} \leq D\Bigr] = \prod_{i=1}^d \Pr_{u_i \sim \calD}\left[ \left|\frac{x_i}{u_i}\right| \leq D\right] = \prod_{i=1}^d \mu^{G\left(|x_i|/D\right)} =\mu^{\sum_{i=1}^d G(|x_i|/D)} \leq \mu^{\alpha}. \]
\end{proof}

\begin{theorem}
\label{thm:dsorlicz}
For every $d$-dimensional Orlicz norm $\|\cdot\|_G$ and every $\eps \in (0, 1/2)$, there exists a data structure for ANN over $\|\cdot\|_G$, which achieves approximation $O\left(\frac{\log \log d}{\eps^2}\right)$ using space $O\left(d n^{1 + \eps}\right)$ and query time $O\left(d n^{\eps}\right)$.
\end{theorem}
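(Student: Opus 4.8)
The plan is to reduce ANN over an arbitrary $d$-dimensional Orlicz norm $\|\cdot\|_G$ to ANN over $\ell_\infty^d$, using the randomized linear map $f$ from Lemma~\ref{lem:orliczmap} as a locality-sensitive-style embedding, and then invoke Indyk's $\ell_\infty$ data structure (Theorem~\ref{thm:l-infty-ds}). First I would fix the scale: by homogeneity of norms we may assume the ANN threshold is $r = 1$, so ``near'' points are within $\|\cdot\|_G$-distance $1$ and we must avoid returning points at distance more than $c$, for $c = O((\log\log d)/\eps^2)$ the claimed approximation. Set $D = \Theta(\log\log d / \eps)$ and $\alpha = \Theta(1/\eps)$ (to be tuned) so that $\alpha D$ is the approximation factor we will actually certify at the level of a single map, and choose $\mu \in (0,1/2)$ a small constant. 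With these parameters, Lemma~\ref{lem:orliczmap} gives a random linear $f$ under which a point at distance $\le 1$ maps to $\ell_\infty$-ball of radius $1$ with probability $\ge \mu$, while a point at distance $> \alpha D$ maps outside the $\ell_\infty$-ball of radius $D$ with probability $\ge 1 - \mu^{\alpha}$; choosing $\alpha = \Theta(1/\eps)$ makes $\mu^\alpha \le n^{-\Omega(1)}$, i.e.\ vanishingly small, while $\mu$ stays constant.

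Next I would do the standard amplification: take $L = \Theta(\log n)$ independent copies $f_1,\dots,f_L$ of the map, and for each one build the $\ell_\infty^d$ data structure of Theorem~\ref{thm:l-infty-ds} with its own parameter $\eps$ on the images $f_j(p)$, $p \in P$. The $\ell_\infty$ structure has approximation $O((\log\log d)/\eps)$, so a query that is within $\ell_\infty$-distance $1$ of some database image is answered with a database image at $\ell_\infty$-distance $O(D)$ — here is where the choice $D = \Theta(\log\log d/\eps)$ is forced, to match the $\ell_\infty$ approximation guarantee. On a query $q$ with a true near neighbor $p^*$ (i.e.\ $\|q - p^*\|_G \le 1$): in each copy $j$, independently with probability $\ge \mu$ we have $\|f_j(q) - f_j(p^*)\|_\infty \le 1$, so with probability $1 - (1-\mu)^L \ge 1 - 1/\poly(n)$ at least one copy $j$ witnesses this and its $\ell_\infty$ data structure returns some image $f_j(p)$ with $\|f_j(q) - f_j(p)\|_\infty = O(D)$. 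For each of the (at most $L$) candidates so returned, compute $\|q - p\|_G$ exactly and output the best; this is a linear scan over $O(\log n)$ candidates, cheap. The only way this fails is if some returned candidate $p$ is actually far, $\|q - p\|_G > \alpha D$, yet $\|f_j(q) - f_j(p)\|_\infty = O(D)$: rescaling, $\|q-p\|_G > \Omega(\alpha)\cdot (\alpha D)$ for the appropriate constant, and the second bullet of Lemma~\ref{lem:orliczmap} (with a union bound over the $\le n$ database points and the $L$ maps) says this happens with probability $\le n \cdot L \cdot \mu^{\alpha} \le 1/\poly(n)$ once $\alpha = \Theta(1/\eps)$ is large enough. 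Hence every candidate we actually consider is within $\|\cdot\|_G$-distance $O(\alpha D) = O((\log\log d)/\eps^2)$ of $q$, which is the claimed approximation. The space is $L$ copies of the $\ell_\infty$ structure, $O(\log n)\cdot O(d\, n^{1+\eps}) = O(d\, n^{1+\eps})$ after absorbing the $\log n$ and slightly adjusting $\eps$ (the statement's $n^{1+\eps}$ hides the polylog); similarly the query time is $O(\log n)\cdot O(d\log n) + O(L d) = O(d\, n^{\eps})$.

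I would carry out the steps in that order: (1) normalize the scale and fix $D,\alpha,\mu,L$; (2) build $L$ independent maps and $\ell_\infty$ data structures; (3) prove the completeness direction (a near neighbor is found with high probability) from the first bullet of the lemma and the $\ell_\infty$ correctness; (4) prove the soundness direction (no far point is ever output) from the second bullet of the lemma plus a union bound; (5) tally space and query time. The main obstacle, and the only place real care is needed, is the interplay of the three approximation losses — the factor $\alpha$ from the "if near, survives with constant probability" gap, the factor $D$ chosen to absorb Indyk's $O(\log\log d)$ $\ell_\infty$-approximation, and the extra $1/\eps$ that $\alpha$ must contribute so that $\mu^\alpha$ beats a union bound over $\poly(n)$ events. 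Getting these to multiply out to exactly $O((\log\log d)/\eps^2)$, rather than something worse, is the crux; everything else is routine amplification. One should also double-check that $\|x\|_G$ can be evaluated in $O(d)$ time (true: sort or just sum $G(|x_i|)$ and compare, or binary-search the scaling), which is needed for the candidate-rescan step and for the claimed query time.
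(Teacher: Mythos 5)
Your overall architecture is exactly the paper's: apply several independent copies of the randomized map from Lemma~\ref{lem:orliczmap}, build Indyk's $\ell_\infty^d$ structure (Theorem~\ref{thm:l-infty-ds}) on each image, and filter returned candidates by their true $\|\cdot\|_G$ distance. However, your parameter setting is internally inconsistent and breaks the correctness argument. You take $\mu$ to be a small \emph{constant} (so that $L=\Theta(\log n)$ repetitions suffice for the near neighbor to survive in some copy) and simultaneously claim that $\alpha=\Theta(1/\eps)$ makes $\mu^{\alpha}\le n^{-\Omega(1)}$. Since $\eps$ is a constant, $\alpha$ is a constant, and $\mu^{\alpha}$ is then a fixed constant, not $n^{-\Omega(1)}$: the union bound over the $\le n$ far points (and the $L$ copies) fails completely. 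Concretely, in the ``good'' copy $j$ where $\|f_j(q)-f_j(p^*)\|_\infty\le 1$, a constant fraction of the far points will also have collapsed images within $\ell_\infty$-distance $D$ of $f_j(q)$, and the $\ell_\infty$ data structure is only guaranteed to return \emph{some} point within $\ell_\infty$-distance $O(D)$ --- which may well be one of those collapsed far points. Your filtering step then rejects it, and the query fails; this can happen in every copy. (A sanity check: if your scheme worked, the query time would be $O(d\,\mathrm{polylog}\,n)$, strictly better than the theorem's $O(dn^{\eps})$, which should have been a warning sign.)

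The tension you flag at the end --- balancing the survival probability $\mu$, the $\ell_\infty$ approximation $D$, and the far-point failure probability $\mu^{\alpha}$ --- is real, and the paper resolves it the only way possible while keeping $\alpha$ constant: set $\mu=n^{-\eps}$ and $\alpha=2/\eps$, so that $\mu^{\alpha}=n^{-2}$ supports the union bound, at the price of needing $n^{\eps}$ (not $O(\log n)$) independent copies to get a constant probability that the near neighbor survives somewhere. This is precisely where the $n^{\eps}$ factors in the space and query time come from, and it keeps the approximation at $\alpha D=O(\log\log d/\eps^2)$. The alternative fix within your framework --- keeping $\mu$ constant but taking $\alpha=\Theta(\log n)$ so that $\mu^{\alpha}\le n^{-2}$ --- would inflate the approximation to $O(\log n\cdot\log\log d)$, which does not match the statement. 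The rest of your write-up (completeness, candidate filtering, resource accounting) is fine once the parameters are corrected.
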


\begin{proof}
Let $P \subset \R^d$ be a dataset of $n$ points. Consider the data structure which does the following: 
\begin{enumerate}
\item For all $1 \leq i \leq n^{\eps}$, we independently apply the randomized linear map $f$ from Lemma~\ref{lem:orliczmap} with parameters $\mu = n^{-\eps}$, $D = O\left( \frac{\log \log d}{\eps}\right)$, and $\alpha = \frac{2}{\eps}$. We define 
\[ P_i = \{ f_i(x) \mid x \in P \} \]
to be the image of the dataset under $f_i$, where $f_i$ is the $i$-th independent copy of $f$.
\item For each $1 \leq i \leq n^{\eps}$, we use Theorem~\ref{thm:l-infty-ds} to build a data structure for ANN over $\ell_{\infty}$ with approximation $D$ for dataset $P_i$. We refer to the $i$-th data structure as $T_i$.  
\end{enumerate}
Each $T_i$ occupies space $O(dn^{1 + \eps})$ and achieves approximation $D$ with query time $O(d\log n)$. To answer a query $q \in \R^d$, we query $T_i$ with $f_i(q)$ for each $i \in [n^{\eps}]$. Let $x_i$ be the point returned by $T_i$, and let $p_i \in P$ be the pre-image of $x_i$ under $f_i$, so that $f_i(p_i) = x_i$. If for some $T_i$, the point returned satisfies $\|p_i - q\|_{G} \leq \alpha D$, then we return $p_i$.
\begin{itemize}
\item If there exists some $p \in P$ with $\|p-q\|_{G} \leq 1$, then by Lemma~\ref{lem:orliczmap}, with probability $1 - \left(1 - n^{-\eps} \right)^{n^{\eps}} \geq \frac{3}{5}$, some $f_i$ has $\|f_i(p-q)\|_{\infty} \leq 1$. Since $f_i$ is linear, $\|f_i(p) - f_i(q)\|_{\infty} \leq 1$ as well. 
\item Let $i \in [n^{\eps}]$ be an index where some $p \in P$ with $\|p - q\|_{G} \leq 1$ has $\|f_i(p) - f_i(q)\|_{\infty} \leq 1$. Every other $p' \in P$ with $\|p' - q\|_{G} \geq \alpha D$ satisfies 
\[ \Pr\Bigl[ \|f_i(p') - f_i(q)\|_{\infty} \leq D\Bigr] \leq \frac{1}{n^2}. \]
A union bound over at most $n$ points with distance greater than
$\alpha D$ to $q$ shows that except with probability at most $\frac{1}{n}$, $T_i$ returns some $p_i \in P$ with $\|p_i - q\|_{G} \leq \alpha D$. Thus, the total probability of success of the data structure is at least $\frac{3}{5} - \frac{1}{n}$.
\end{itemize}
The total query time is $O\left( dn^{\eps} \cdot \log n\right)$ and the total space used is $O\left( dn^{1 + 2\eps}\right)$. This data structure achieves approximation $\alpha D = O\left( \frac{\log \log d}{\eps^2}\right)$. Decreasing $\eps$ by a constant factor, we get the desired
guarantees.
\end{proof}

\paragraph{Remark.} 
The construction of the randomized embedding in
Lemma~\ref{lem:orliczmap} and the data structure from
Theorem \ref{thm:dsorlicz} work in a somewhat more general setting,
rather than just for Orlicz norms. For a fixed norm $\|\cdot \|$, we
can build a randomized map $f \colon \R^d \to \R^d$ with the
guarantees of Lemma~\ref{lem:orliczmap} if there exists a
non-decreasing $G \colon \R_+ \to \R_+$ where $G(0) = 0$,
$G(t) \to \infty$ as $t \to \infty$, and for every $x \in \R^d$:
\begin{itemize}
\item if $\|x\| \leq 1$, then $\sum_{i=1}^d G(|x_i|) \leq 1$, and
\item if $\|x\| \geq \alpha D$, then $\sum_{i=1}^d G\left(\frac{|x_i|}{D}\right) \geq \alpha$. 
\end{itemize}
The data structure itself just requires the existence of a randomized
linear map satisfying the conditions of Lemma~\ref{lem:orliczmap}. 

We now describe how to obtain a data structure for ANN for any top-$k$ norm.

\begin{lemma}
\label{lem:kyfanmap}
Fix any $k \in [d]$. For every $D, \alpha > 1$ and every $\mu \in (0, 1/2)$, there exists a randomized linear map $f \colon \R^d \to \R^d$ such that for every $x \in \R^d$:
\begin{itemize}
\item if $\|x\|_{T(k)} \leq 1$, then $\Pr_f\Bigl[ \|f(x)\|_{\infty} \leq 1 \Bigr] \geq \mu$;
\item if $\|x\|_{T(k)} > \alpha D$, then $\Pr_{f} \Bigl[ \|f(x)\|_{\infty} > D\Bigr] \geq 1 - \mu^{\alpha - 1}$.
\end{itemize}
\end{lemma}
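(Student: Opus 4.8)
The plan is to mimic the construction of Lemma~\ref{lem:orliczmap}, but since the top-$k$ norm is not an Orlicz norm, I cannot directly use a CDF of the form $1 - \mu^{G(t)}$. Instead I will use the ``more general setting'' described in the remark immediately preceding the lemma: I need a non-decreasing $G \colon \R_+ \to \R_+$ with $G(0) = 0$ and $G(t) \to \infty$, such that $\|x\|_{T(k)} \le 1$ implies $\sum_i G(|x_i|) \le 1$, and $\|x\|_{T(k)} > \alpha D$ implies $\sum_i G(|x_i|/D) \ge \alpha$. The natural candidate is the piecewise-linear function $G(t) = \max\{0,\, k t - (k-1)\} = k\max\{t - (1-1/k),\,0\}$, i.e.\ $G$ is zero on $[0, 1-1/k]$ and has slope $k$ afterward; note $G(1) = 1$. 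The randomized linear map is then $f(x) = (x_1/u_1, \dots, x_d/u_d)$ with $u_i$ i.i.d.\ from the distribution on $\R_+$ with CDF $F(t) = 1 - \mu^{G(t)}$ (well-defined since $G$ is non-decreasing, $G(0)=0$, $G\to\infty$), exactly as before. The completeness and soundness bounds then follow from the two structural inequalities on $\sum_i G(\cdot)$ via the same product-of-probabilities computation as in Lemma~\ref{lem:orliczmap}.

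So the crux is the two inequalities relating $\sum_i G(|x_i|)$ to $\|x\|_{T(k)}$. For the first (completeness) direction, assume $\|x\|_{T(k)} \le 1$, so $\sum_{i=1}^{k} |x_i^*| \le 1$. Since $G(t) \le kt$ for all $t \ge 0$ and, crucially, $G(t) = 0$ whenever $t \le 1 - 1/k$, I claim $\sum_{i=1}^d G(|x_i|) \le 1$. The point is that at most $k$ coordinates can exceed $1 - 1/k$ in absolute value (if $k+1$ of them did, the top-$(k)$ sum would already exceed $k(1-1/k) = k-1$... this needs a little care, so let me instead argue directly): among the coordinates with $|x_i| > 1-1/k$, call their sorted values $|x_1^*| \ge \dots \ge |x_m^*|$; one shows $m \le k$ because otherwise already $\sum_{i=1}^{m} G(|x_i^*|)$ would force a contradiction — more simply, I will bound $\sum_i G(|x_i|) \le \sum_{i=1}^{k} G(|x_i^*|)$ (since $G$ is monotone and vanishes on small inputs, only the top $k$ can contribute once we check no more than $k$ coordinates exceed the threshold) and then use convexity of $G$ together with $G(1-1/k)=0$ to get $\sum_{i=1}^k G(|x_i^*|) \le G\big(\sum_{i=1}^k |x_i^*| - (k-1)(1-1/k)\big)$ or a similar Karamata/majorization estimate; the cleanest route is: $G$ is convex with $G(1-1/k)=0$, so $\sum_{i=1}^k G(|x_i^*|) = k\sum_{i=1}^k \max\{|x_i^*| - (1-1/k),0\} \le k\big(\sum_{i=1}^k |x_i^*| - k(1-1/k)\big)_+ = k(\|x\|_{T(k)} - (k-1))_+ \le k(1-(k-1))_+$, which is $\le 1$ precisely when $k \ge 1$ — wait, that gives $\le 0$ for $k\ge 2$, so I should recheck the threshold; the right choice is to take $G$ with $G(1)=1$ and slope such that $\sum_{i=1}^{k}G$ of the extremal point $(\tfrac1k,\dots,\tfrac1k)$ equals $1$, i.e.\ $G(t) = \max\{0, \tfrac{k}{k-1}(t - \tfrac1k) \cdot \tfrac{1}{?}\}$ — the calibration must be done so that the worst case for completeness is $x = \xi^{(k)}/\|\xi^{(k)}\|_{T(k)}$ and the worst case ratio is exactly met. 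I will pin down the exact slope and breakpoint of $G$ by requiring $G(1/k) = 1/k$ behavior — concretely, set $G$ linear with $G(0)=0$ forced only asymptotically — and verify both inequalities.

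For the soundness direction, assume $\|x\|_{T(k)} > \alpha D$, i.e.\ $\sum_{i=1}^k |x_i^*/D| > \alpha$. Using that $G$ is convex with $G(0) = 0$, the standard trick from Lemma~\ref{lem:orliczmap} gives $\sum_{i=1}^d G(|x_i|/D) \ge \sum_{i=1}^k G(|x_i^*|/D) \ge$ (by convexity and $G(0)=0$, so $G(t) \ge t \cdot G(s)/s$ for $t \le s$ fails the wrong way; instead use $G(\lambda t) \le \lambda G(t)$ for $\lambda \le 1$, equivalently $G(t)/t$ non-decreasing once past the breakpoint) a quantity that is at least $\alpha - 1$ rather than $\alpha$ — and this ``$-1$'' slack is exactly why the soundness exponent in the statement is $\mu^{\alpha - 1}$ instead of $\mu^{\alpha}$ as in the Orlicz case. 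Tracking this off-by-one is the main subtlety: the loss comes from the flat part of $G$ on $[0, 1-1/k]$ (or $[0,1/k]$ after calibration), which can ``absorb'' up to a total of $G$-mass $1$ worth of coordinates without contributing, so the guaranteed lower bound on $\sum_i G(|x_i|/D)$ degrades from $\alpha$ to $\alpha - 1$. Once both inequalities are established with the correctly calibrated $G$, the probability computations are identical to Lemma~\ref{lem:orliczmap}: $\Pr[\|f(x)\|_\infty \le 1] = \mu^{\sum_i G(|x_i|)} \ge \mu$ in the completeness case, and $\Pr[\|f(x)\|_\infty \le D] = \mu^{\sum_i G(|x_i|/D)} \le \mu^{\alpha - 1}$ in the soundness case. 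I expect the calibration of the breakpoint and slope of $G$ — making sure the single flat region contributes exactly the ``$1$'' of slack and no more — to be the only real obstacle; everything else is a routine transcription of the Orlicz argument.
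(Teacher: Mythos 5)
Your high-level strategy is the right one — and is the paper's: pick a non-decreasing $G$ with $G(0)=0$ fitting the ``more general setting'' remark, set $F(t)=1-\mu^{G(t)}$, and reduce everything to the two inequalities $\|x\|_{T(k)}\le 1\Rightarrow\sum_i G(|x_i|)\le 1$ and $\|x\|_{T(k)}>\alpha D\Rightarrow\sum_i G(|x_i|/D)\ge\alpha-1$. Your intuition that the flat part of $G$ absorbs one unit of $G$-mass and accounts for the exponent $\alpha-1$ is also exactly right. But the proof has a genuine gap: the one object that carries all the content, namely $G$ itself, is never correctly constructed. Your candidate $G(t)=\max\{0,\,kt-(k-1)\}$ (breakpoint at $1-1/k$, slope $k$) provably fails the soundness inequality: take the vector whose top $k$ coordinates all equal $\alpha D/k$, so that $\|x\|_{T(k)}=\alpha D$; then each $|x_i|/D=\alpha/k\le 1-1/k$ whenever $\alpha\le k-1$, so $\sum_i G(|x_i|/D)=0$, not $\ge\alpha-1$. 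Your attempted verification of completeness also contains a reversed inequality ($\sum_i\max\{a_i-c,0\}\ge(\sum_i a_i-kc)_+$, not $\le$), and you explicitly abandon the calibration mid-argument without resolving it.

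The correct choice, used in the paper, is $G(t)=t\cdot\chi_{[1/k,\infty)}(t)$: breakpoint at $1/k$ (not $1-1/k$) and the \emph{identity} above the breakpoint (not slope $k$). Both inequalities then fall out in one line each. Completeness: if $\|x\|_{T(k)}\le 1$ then at most $k$ coordinates satisfy $|x_i|\ge 1/k$, and those coordinates are among the top $k$, so $\sum_i G(|x_i|)\le\sum_{i=1}^k|x_i^*|\le 1$. Soundness: $G(t)\ge t-1/k$ pointwise, so $\sum_i G(|x_i|/D)\ge\sum_{i=1}^k\bigl(|x_i^*|/D-1/k\bigr)>\alpha-1$. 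Note that the slack you need is $1/k$ \emph{per coordinate} over only the top $k$ coordinates, totalling $1$; your candidate instead pays slack $k-1$ per coordinate, which is why it collapses. With the paper's $G$ in hand, the rest of your write-up (the CDF construction and the product-of-probabilities computation) goes through verbatim.
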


\begin{proof}
We define $G \colon \R_{+} \to \R_{+}$ where for every $x \in \R^d$, 
\[ G(t) = t \cdot \chi_{[\frac{1}{k}, \infty)}(t) \]
If $\|x\|_{T(k)} \leq 1$, there are at most $k$ coordinates where
$|x_i| \geq \frac{1}{k}$. Therefore, $\sum_{i=1}^d G(|x_i|) \leq
\|x\|_{T(k)} \leq 1$. If $\|x\|_{T(k)} \geq \alpha D$, then
$\sum_{i=1}^k |x_i^*| \geq \alpha D$. Therefore, $\sum_{i=1}^d
G\left(\frac{|x_i^*|}{D}\right) \ge \sum_{i=1}^k G\left(\frac{|x_i^*|}{D}\right) \geq \alpha - 1$. The proof now follows in the same way as Lemma~\ref{lem:orliczmap}.
\end{proof}

Lemma~\ref{lem:kyfanmap} gives us a data structure for any top-$k$ norm with approximation $O(\log\log d)$ applying Theorem~\ref{thm:dsorlicz}.

One could imagine using a similar argument to design an algorithm for \emph{general} symmetric norms. This idea indeed works and yields an algorithm with approximation $\widetilde{O}(\log d)$ for a general symmetric norm (see Appendix~\ref{sec:simple-sym} for a detailed analysis of this approach). However, we show this strategy cannot achieve an approximation better than $\Omega(\sqrt{\log d})$ (see the end of the same Appendix~\ref{sec:simple-sym}).

\section{Embedding symmetric norms into product spaces}
\label{sec:prod-space}

In this section, we construct an embedding of general symmetric norms
into product spaces of top-$k$ norms. To state the main result of this section, we need
the following definition.

\begin{definition}
For any $c_1, \dots, c_d \geq 0$, let $\bigoplus_{\ell_1}^{d} T^{(c)} \subset \R^{d^2}$ be the space given by the seminorm $\|\cdot\|_{T, 1}^{(c)} \cdot \R^{d^2} \to \R$ where for $x = (x_1, \dots, x_d) \in \R^{d^2}$ and $x_1, \dots, x_d \in \R^d$\emph{:}
\[ \|x\|_{T, 1}^{(c)} = \sum_{k=1}^d c_k\|x_k\|_{T(k)}. \]
\end{definition}

\begin{theorem}[Embedding into a product space]
\label{thm:symnorm}
For any constant $\delta \in (0, 1/2)$, any symmetric norm
$\|\cdot \|_{X} \colon \R^d \to \R$ can be embedded linearly with distortion
$1+\delta$ into $\bigoplus_{\ell_{\infty}}^t \bigoplus_{\ell_1}^d
T^{(c)}$ where $t
= d^{O(\log(1/\delta)\delta^{-1})}$.
In particular, there exists $c \in \R_+^{t\times d}$ such that for
every $x \in \R^d$,  
\begin{equation}
\label{eq:embedding} (1 - \delta)\|x\|_{X} \leq \max_{i \in [t]} \left(\sum_{k=1}^d c_{i, k} \|x\|_{T(k)} \right) \leq (1 + \delta)\|x\|_{X}.
\end{equation}
\end{theorem}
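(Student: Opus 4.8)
The plan is to dualize and then realize the target $\ell_\infty$-product as a net over the dual ball. Since $\|\cdot\|_X$ is symmetric, $\|x\|_X=\sup\{\langle x^*,y^*\rangle:\|y\|_{X^*}\le 1\}$, and by the rearrangement inequality together with the symmetry of $X^*$ the supremum is attained at a nonnegative, non‑increasing $y^*$. Writing $y^*$ through its increments $c_k:=y^*_k-y^*_{k+1}\ge 0$ (with $y^*_{d+1}:=0$) and applying Abel summation gives $\langle x^*,y^*\rangle=\sum_{k=1}^d c_k\|x\|_{T(k)}=\|(x,\dots,x)\|^{(c)}_{T,1}$, the value of the seminorm of $\bigoplus_{\ell_1}^d T^{(c)}$ at the diagonal point $(x,\dots,x)\in\R^{d^2}$. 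Conversely any nonnegative $c$ arises this way from $y^*_k=\sum_{j\ge k}c_j$, and $\|y^*\|_{X^*}\le 1$ certifies $\|(x,\dots,x)\|^{(c)}_{T,1}\le\|x\|_X$ for all $x$, by monotonicity of symmetric norms. Hence, writing $D^+$ for the nonnegative sorted part of $B_{X^*}$, we have $\|x\|_X=\sup_{y\in D^+}\langle x^*,y\rangle$, and the map sending $x$ to $(x,\dots,x)$ into $\bigoplus_{\ell_1}^d T^{(c^{(i)})}$ and taking the $\ell_\infty$‑product over increment vectors $c^{(1)},\dots,c^{(t)}$ of points of $D^+$ is exactly $x\mapsto\max_{i}\sum_k c^{(i)}_k\|x\|_{T(k)}$. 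This automatically satisfies the upper bound in \eqref{eq:embedding}, since each atom equals $\langle x^*,y^{(i)}\rangle\le\|x^*\|_X\|y^{(i)}\|_{X^*}\le\|x\|_X$.

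It remains to produce a family $N\subseteq D^+$ with $|N|=d^{O(\log(1/\delta)\delta^{-1})}$ that is a multiplicative $(1-\delta)$‑net for the support functional of $D^+$ on the nonnegative sorted cone: for every sorted nonnegative $x$ there is $y\in N$ with $\langle x,y\rangle\ge(1-\delta)\|x\|_X$. A clean way to get such an $N$ indirectly is to build instead a $\delta/2$‑net $\{p^{(1)},p^{(2)},\dots\}$, in the norm $\|\cdot\|_X$, of the nonnegative sorted part of the unit sphere of $X$ itself, and to let $N$ consist of a norming dual vector $y^{(i)}$ for each $p^{(i)}$ (so $\langle p^{(i)},y^{(i)}\rangle=\|p^{(i)}\|_X$ and $\|y^{(i)}\|_{X^*}=1$). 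Then for any sorted nonnegative $x$, choosing $p^{(i)}$ with $\|x/\|x\|_X-p^{(i)}\|_X\le\delta/2$, one gets $\langle x,y^{(i)}\rangle\ge\|x\|_X-\|x-\|x\|_X p^{(i)}\|_X\ge(1-\delta/2)\|x\|_X$. So the whole theorem reduces to the following uniform covering statement: the nonnegative sorted part of the unit sphere of \emph{any} $d$‑dimensional symmetric norm admits a $\delta/2$‑net of size $d^{O(\log(1/\delta)\delta^{-1})}$; equivalently, an arbitrary sorted nonnegative vector can be discretized to one of few canonical representatives while losing only a $(1\pm\delta)$ factor in \emph{every} symmetric norm at once.

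For the discretization I would work on a doubly‑logarithmic grid. Normalize $p_1=1$. Rounding each coordinate of $p$ down to the nearest power of $1+\delta$ perturbs $p$ by at most a $\delta$‑fraction coordinatewise, so $p-\tilde p$ is weakly majorized by $\delta p$ and hence $\|p-\tilde p\|_X\le\delta\|p\|_X$ for every symmetric norm (by the majorization lemma). Likewise, zeroing out coordinates whose rounded value is below a small threshold is safe, since those are weakly majorized by a small multiple of $\xi^{(d)}$. The key point — and the step I expect to be the main obstacle — is to reduce the number of \emph{distinct} scales that a canonical $\tilde p$ must record from the naive $O(\log d\cdot\delta^{-1})$ down to $O(\log(1/\delta)\cdot\delta^{-1})$: small coordinates can, in general, carry an $\omega(\delta)$ fraction of $\|p\|_X$, so one cannot simply truncate $p$ to bounded dynamic range, and a term‑by‑term loss of $\delta$ over all $\Theta(\log d)$ dyadic scales would accumulate to a constant. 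I would instead cut the index set into dyadic blocks $I_\ell=(2^{\ell-1},2^\ell]$, sandwich $p$ between the block‑constant vectors built from its maxima and minima on each $I_\ell$, and argue — using subadditivity of $a_k:=\|\xi^{(k)}\|_X$ and monotonicity of $\|x\|_{T(k)}$ in $k$ — that these two vectors are $(1\pm\delta)$‑equivalent after additionally merging adjacent blocks whose block‑value ratio is bounded, so that only $O(\log(1/\delta)\cdot\delta^{-1})$ ``essential bands'' of scales survive and the errors across bands do not accumulate.

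Putting this together: a canonical representative $\tilde p$ is specified by $O(\log(1/\delta)\cdot\delta^{-1})$ parameters, each ranging over a $\poly(d)$‑size set (a choice of block endpoints together with a geometric‑grid value per band, constrained non‑increasing), so the number of representatives is $\binom{d}{O(\log(1/\delta)\delta^{-1})}\cdot(\poly(1/\delta))^{O(\log(1/\delta)\delta^{-1})}=d^{O(\log(1/\delta)\delta^{-1})}$. Taking $N$ to be the norming duals of these representatives gives $t=d^{O(\log(1/\delta)\delta^{-1})}$ increment vectors $c^{(i)}$, and the two displayed inequalities in \eqref{eq:embedding} follow from the discussion of the first paragraph (upper bound, automatic) and the net property (lower bound). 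The main obstacle throughout is the scale‑reduction argument in the previous paragraph; the rest is duality, the diagonal embedding, and bookkeeping of geometric grids.
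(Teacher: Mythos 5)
Your overall architecture matches the paper's: dualize the symmetric norm as a supremum of $\langle x^*, y\rangle$ over sorted nonnegative $y$ in the dual ball, convert each such $y$ into top-$k$ coefficients $c_k=y_k-y_{k+1}$ by Abel summation (this is exactly the paper's Lemma~\ref{lem:hehe}), observe that the upper bound in \eqref{eq:embedding} is then automatic, and reduce everything to covering the sorted nonnegative part of the unit ball of a symmetric norm by $d^{O(\log(1/\delta)\delta^{-1})}$ canonical representatives. (You net the primal sphere and pass to norming functionals, while the paper nets the dual ball directly; this is an immaterial difference since $X^*$ is also symmetric.) The geometric value-rounding and the truncation of tiny coordinates are likewise as in the paper and are fine.

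The gap is in the covering lemma, and the structural claim you propose to close it with is false. A canonical representative consisting of $O(\log(1/\delta)\delta^{-1})$ constant-value index bands cannot $(1+\delta)$-approximate every sorted nonnegative unit vector. Take the minimal norm for $a_k=\sqrt{k}$, i.e.\ $\|x\|=\max_k \|x\|_{T(k)}/\sqrt{k}$ (the paper's own hard example in Section~\ref{logd_lower}), and $p_i=\sqrt{i}-\sqrt{i-1}=\Theta(i^{-1/2})$, so $\|p\|=1$. If a band $[l,r]$ with $r\geq 2l$ carries a single value $v$, then the values $p_i$ on the band are spread over an interval of length $\Theta(l^{-1/2})$ with gaps $\Theta(l^{-3/2})$, so for any $v$ a constant fraction of the band's coordinates satisfy $|p_i-v|=\Omega(l^{-1/2})$; hence $\|p-\tilde p\|_{T(r)}=\Omega(\sqrt{l})$ and $\|p-\tilde p\|=\Omega(1)$. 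So every band must have $r<2l$ and $\Omega(\log d)$ bands are unavoidable; merging adjacent blocks of bounded value ratio does not help, since for this $p$ all consecutive ratios are bounded and merging them all collapses to a constant vector with error $\Theta(\sqrt{d})$. Without the band reduction, your parameterization (one grid value per $(1+\delta)$-adic index block) has cardinality $(\log d/\delta)^{\Theta(\log d /\delta)}=d^{\Theta(\log\log d)}$, which is exactly the naive bound the paper notes is insufficient for $d$ up to $n^{o(1)}$. The paper's Lemma~\ref{lem:net} takes a different route: it keeps up to $\Theta(\log d/\delta)$ levels but zeroes out any level whose count is dominated by that of a nearby coarser level (justified by the aggregation/charging argument of Lemmas~\ref{lem:sum} and~\ref{lem:aggregate-net}); the surviving counts are then essentially increasing, so their successive ratios telescope to $d^{O(1)}$ and the whole profile can be encoded in $O(\log d\cdot\log(1/\delta)/\delta)$ bits. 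It is this entropy bound on the \emph{family} of profiles, rather than a per-vector bound on the number of bands, that is the missing idea.
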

The vectors in $\bigoplus_{\ell_{\infty}}^t \bigoplus_{\ell_1}^d
T^{(c)} \subset \R^{td^2}$ can be broken up into $td$ blocks of $d$
coordinates each. The embedding referenced above will simply map $x
\in \R^d$ into $\R^{td^2}$ by making each of the $td$ many blocks
equal to a copy of $x$. The non-trivial part of the above theorem is
setting the constants $c_{i, k}$ for $i \in [t]$ and $k \in [d]$ so
(\ref{eq:embedding}) holds. 
Before going on to give the proof of Theorem~\ref{thm:symnorm}, we establish
some definitions and propositions which will be used in the proof. For the subsequent sections, let $\beta \in (1, 2)$ be considered a constant close to $1$. 

\begin{definition}[Levels and Level Vectors]
\label{def:levels}
For any fixed vector $x \in \R^d$ and any $k \in \Z$, we define \emph{level $k$ with respect to $x$} as $B_k(x) = \{ i \in [d] \mid \beta^{-k-1} < |x_i| \leq \beta^{-k} \}$. Additionally, we let $b_k(x) = |B_k(x)|$ be the \emph{size of level $k$ with respect to $x$}. The \emph{level vector} of $x$, $V(x) \in \R^d$ is given by
\[ V(x) = (\underbrace{\beta^{k}, \dots, \beta^{k}}_{\text{$b_{-k}(x)$ times}}, \underbrace{\beta^{k-1}, \dots, \beta^{k-1}}_{\text{$b_{-k+1}(x)$ times}}, \dots, \underbrace{\beta^{-k}, \dots, \beta^{-k}}_{\text{$b_k(x)$ times}}, 0, \dots 0) \]
where $k$ is some integer such that all non-zero coordinates lie in some level between $-k$ and $k$. 
We say the $i$-th \emph{level vector} $V_i(x) \in \R^d$ is given by
\[ V_i(x) = (\underbrace{\beta^{-i}, \dots, \beta^{-i}}_{\text{$b_i(x)$ times}}, 0, \dots, 0). \]
\end{definition}

The notation used for level vectors appears in \cite{BBCKY15}; however, we refer to level $k$ as the coordinates of $x$ lying in $(\beta^{-k-1}, \beta^{-k}]$; whereas \cite{BBCKY15} refers to level $k$ as the coordinates of $x$ lying in $[\beta^{k-1}, \beta^k)$. 

\begin{definition}
\label{def:cut-small}
Fix some $\tau > 0$. For any vector $x \in \R^d$, let $C(x) \in \R^d$ be the vector where each $i \in [d]$ sets
\[ C(x)_i = \left\{ \begin{array}{cc} x_i & |x_i| \geq \tau \\
						    0 & |x_i| < \tau \end{array} \right. . \]
\end{definition}

\begin{proposition}[Proposition 3.4 in \cite{BBCKY15}]
\label{prop:levels}
Let $\|\cdot\|_{X}$ be any symmetric norm and $x \in \R^d$ be any vector. Then
\[ \frac{1}{\beta} \|V(x)\|_{X} \leq \|x\|_X \leq \|V(x)\|_{X}. \]
\end{proposition}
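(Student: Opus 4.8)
The plan is to deduce both inequalities from the weak-majorization lemma quoted above (Theorem B.2 of~\cite{MOA11}). The starting observation is that $V(x)$ is, up to sorting, obtained from $|x|$ by the coordinate-wise ``round up to the nearest power of $\beta$'' map $\rho \colon \R_+ \to \R_+$ defined by $\rho(0) = 0$ and $\rho(t) = \beta^{-k}$ for the unique integer $k$ with $\beta^{-k-1} < t \leq \beta^{-k}$. Since $\rho$ is non-decreasing, applying it coordinate-wise commutes with the sorting operation $y \mapsto y^*$; hence the $i$-th largest coordinate of $V(x)$ equals $\rho(x^*_i)$, i.e. $V(x)^*_i = \rho(x^*_i)$ for every $i \in [d]$. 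Moreover $\rho$ satisfies the two pointwise bounds $t \leq \rho(t) \leq \beta t$ for all $t \geq 0$, which are exactly what produces the two-sided estimate.

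First I would prove the upper bound $\|x\|_X \leq \|V(x)\|_X$. Since $\rho(t) \geq t$, we get $V(x)^*_i \geq x^*_i$ for all $i$, so $\sum_{i=1}^k V(x)^*_i \geq \sum_{i=1}^k x^*_i$ for all $k \in [d]$; that is, $V(x)$ weakly majorizes $x$. The weak-majorization lemma, together with symmetry ($\|x\|_X = \||x|\|_X$), then gives $\|V(x)\|_X \geq \|x\|_X$.

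Next the lower bound $\tfrac1\beta\|V(x)\|_X \leq \|x\|_X$. Since $\rho(t) \leq \beta t$, we get $V(x)^*_i = \rho(x^*_i) \leq \beta\, x^*_i = (\beta|x|)^*_i$ for all $i$, so $\beta|x|$ weakly majorizes $V(x)$. By the weak-majorization lemma and positive homogeneity of the norm, $\beta\|x\|_X = \|\beta|x|\|_X \geq \|V(x)\|_X$, which rearranges to $\|x\|_X \geq \tfrac1\beta\|V(x)\|_X$.

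I do not expect a genuine obstacle here: this is a bookkeeping lemma, and the only steps needing care are (i) that a non-decreasing scalar map applied coordinate-wise commutes with the sorting operation (so that $V(x)^*_i = \rho(x^*_i)$ entrywise), and (ii) invoking the majorization lemma in the correct direction in each case, freely using symmetry to pass between $x$, $|x|$, and $x^*$. One should also note the degenerate cases (zero coordinates, for which $\rho(0)=0$) are handled automatically by the same pointwise bounds.
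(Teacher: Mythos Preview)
Your proof is correct. The paper does not actually prove this proposition---it is quoted verbatim as Proposition~3.4 of~\cite{BBCKY15} and used as a black box---so there is no in-paper argument to compare against; your weak-majorization approach via the pointwise bounds $t \le \rho(t) \le \beta t$ is exactly the natural proof and matches how such statements are typically established.
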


\begin{proposition}
\label{prop:cut-small-level}
Let $\|\cdot\|_{X}$ be any symmetric norm. For any vector $x \in \R^d$, 
\[ \|x\|_{X} - \tau d \leq \|C(x)\|_{X} \leq \|x\|_{X}. \]
\end{proposition}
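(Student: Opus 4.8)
The plan is to establish the two inequalities separately, reducing both to the weak-majorization lemma (Theorem~B.2 of \cite{MOA11}) quoted above, together with the standard normalization $\|\xi^{(1)}\|_X = 1$ of a symmetric norm.

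For the upper bound, observe that $C(x)$ is obtained from $x$ by replacing some coordinates with $0$, so $|C(x)_i| \leq |x_i|$ for every $i \in [d]$. First I would note that this entrywise domination implies that $x$ weakly majorizes $C(x)$: for every threshold $t \geq 0$ we have $\{i \in [d] : |C(x)_i| \geq t\} \subseteq \{i \in [d] : |x_i| \geq t\}$, so $C(x)$ has at most as many coordinates of absolute value at least $t$ as $x$ does; consequently $C(x)^*_k \leq x^*_k$ for every $k$, and summing gives $\sum_{i=1}^k C(x)^*_i \leq \sum_{i=1}^k x^*_i$. Applying the majorization lemma to the symmetric norm $\|\cdot\|_X$ then yields $\|C(x)\|_X \leq \|x\|_X$.

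For the lower bound, write $x = C(x) + R(x)$ where $R(x) := x - C(x)$, so that $R(x)_i = x_i$ if $|x_i| < \tau$ and $R(x)_i = 0$ otherwise; in particular every coordinate of $R(x)$ has absolute value strictly below $\tau$. The triangle inequality gives $\|x\|_X \leq \|C(x)\|_X + \|R(x)\|_X$, so it suffices to show $\|R(x)\|_X \leq \tau d$. Since $R(x)^*_i < \tau$ for every $i$, the constant vector $\tau\,\xi^{(d)} = (\tau,\dots,\tau)$ weakly majorizes $R(x)$; hence, by the majorization lemma, the triangle inequality, and symmetry,
\[ \|R(x)\|_X \leq \tau\,\|\xi^{(d)}\|_X \leq \tau d\,\|\xi^{(1)}\|_X = \tau d . \]
Combining the two bounds yields $\|x\|_X - \tau d \leq \|C(x)\|_X \leq \|x\|_X$, as claimed.

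The argument is routine, so there is no real obstacle; the one point to be careful about is the normalization $\|\xi^{(1)}\|_X = 1$, which is what makes the additive error $\tau d$ literally correct (without it one obtains $\tau d\,\|\xi^{(1)}\|_X$). The monotonicity property invoked for the upper bound --- that entrywise domination of absolute values implies domination of norms for any symmetric norm --- is standard, and the only thing that needs checking is its reduction to weak majorization, which is the short threshold-counting argument given above.
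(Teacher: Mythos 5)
Your proof is correct and follows essentially the same route as the paper's: the upper bound via the weak-majorization lemma applied to $x$ and $C(x)$, and the lower bound via the triangle inequality applied to the decomposition $x = C(x) + (x - C(x))$, bounding the small-coordinate remainder by $\tau d$ through weak majorization (the paper majorizes the remainder by $\tau d\,\xi^{(1)}$ rather than $\tau\,\xi^{(d)}$, an immaterial difference). Your remark about the implicit normalization $\|\xi^{(1)}\|_X = 1$ is a fair observation, as the paper's proof also relies on it tacitly.
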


\begin{proof}
Note that $x$ weakly majorizes $C(x)$, so $\|C(x)\|_{X} \leq \|x\|_{X}$. For the other direction, let $v = x - C(x)$. Then $v$ has all coordinates with absolute value at most $\tau$, so $\tau d \xi^{(1)}$ weakly majorizes $v$. Therefore,
\[ \|x\|_{X} \leq \|C(x)\|_{X} + \|v\|_{X} \leq \|C(x)\|_{X} + \tau d. \]
\end{proof}

Intuitively, the above two propositions say that up to multiplicative loss $\beta$ and additive loss $\tau d$ in the norm of the vector, we may assume that all coordinates are exactly $\beta^j$ for $j \geq \log_{\beta} (\tau)$. Thus, if $x \in \R^d$, then
\[ \|x\|_{X} - \tau d \leq \|V(C(x))\|_{X} \leq \beta\|x\|_{X}. \]
If additionally, we let $\tau = \frac{\beta}{d^2}$, so when $\|x\|_{X} \leq 1$ there are at most $2\log_{\beta} d$ non-empty levels in $V(C(x))$.

\begin{definition}[Rounded counts vector]
Fix any level vector $x \in \R^d$. The \emph{rounded counts vector} of $x$, $W(x) \in \R^d$ is given by $y$ where the $y \in \R^d$ is constructed using the following procedure:
\begin{algorithmic}[1]
\State Initialize $y = (0, \dots, 0) \in \R^d$ and $c = d$.
\For{$k=-\infty, \ldots, 2\log_{\beta}(d)-1$}
\If{$b_k(x) \geq 0$} 
\State Let $j \in \Z_+$ be the integer where $\beta^{j-1} < b_k(x) \leq \beta^{j}$. 
\If{$c \geq \lfloor \beta^j \rfloor$}
\State Set the first $\lfloor \beta^j \rfloor$ zero-coordinates of $y$ with $\beta^{-k}$. Update $c \leftarrow c - \lfloor \beta^{-k} \rfloor$.
\EndIf
\EndIf
\EndFor
\State Return $y$
\end{algorithmic}
\end{definition}

Intuitively, $W(x)$ represents the level vector of $x$ where we ignore coordinates smaller than $\frac{\beta}{d^2}$, and additionally, we round the counts of coordinates to powers of $\beta$. 
\begin{lemma}\label{lem:roundedbound}
For every vector $x \in \R^d$ and any symmetric norm $\|\cdot\|_{X}$,
\[ \|x\|_{X} - \tau d \leq \|W(V(C(x)))\|_{X} \leq \beta^2 \|x\|_{X}. \]
\end{lemma}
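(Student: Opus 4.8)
The plan is to chain together the two approximation facts already in hand—Proposition~\ref{prop:cut-small-level} for the cutoff operator $C$, and Proposition~\ref{prop:levels} for the level-vector operator $V$—and then to account separately for the extra distortion introduced by the rounding operator $W$. Concretely, I would first apply Proposition~\ref{prop:cut-small-level} to get $\|x\|_X - \tau d \leq \|C(x)\|_X \leq \|x\|_X$, then apply Proposition~\ref{prop:levels} to the vector $C(x)$ to get $\tfrac1\beta\|V(C(x))\|_X \leq \|C(x)\|_X \leq \|V(C(x))\|_X$. Combining these two gives $\|x\|_X - \tau d \leq \|V(C(x))\|_X \leq \beta\|x\|_X$, which is exactly the displayed intermediate bound already stated in the text just before the lemma. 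So the only genuinely new work is to pass from $V(C(x))$ to $W(V(C(x)))$, losing at most an extra factor of $\beta$ on the upper side and nothing on the lower side.

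For that final step I would argue directly from the majorization lemma (Theorem~B.2 of \cite{MOA11}) together with the definition of $W$. Write $v = V(C(x))$, a level vector whose nonzero coordinates are powers $\beta^{-k}$ with multiplicities $b_k := b_k(v)$, for $k$ ranging over the at most $2\log_\beta d$ relevant levels. The rounded-counts vector $W(v)$ keeps the same set of values $\beta^{-k}$ but replaces each multiplicity $b_k$ by $\lfloor \beta^{j_k}\rfloor$, where $j_k$ is chosen so that $\beta^{j_k - 1} < b_k \leq \beta^{j_k}$; thus $b_k \leq \lfloor\beta^{j_k}\rfloor < \beta\cdot b_k$ (using $b_k\ge 1$ so that $\lfloor \beta^{j_k}\rfloor \ge 1$ and the floor does not destroy too much). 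For the lower bound: since $\lfloor\beta^{j_k}\rfloor \ge b_k$ for every level, $W(v)$ weakly majorizes $v$ (at each prefix length it has at least as many copies of each large value, so the partial sums of sorted absolute values only grow), hence $\|W(v)\|_X \ge \|v\|_X \ge \|x\|_X - \tau d$. For the upper bound: consider the vector $v'$ obtained from $v$ by scaling each multiplicity up by exactly a factor $\beta$ (i.e. replacing $b_k$ copies of $\beta^{-k}$ by $\beta b_k$ copies, conceptually allowing fractional multiplicities, or by $\lceil \beta b_k\rceil$ copies to stay integral); then $v'$ weakly majorizes $W(v)$ because $\beta b_k > \lfloor\beta^{j_k}\rfloor$ at every level, and $\|v'\|_X \le \beta\|v\|_X$ by homogeneity-type reasoning—more carefully, $v'$ is dominated coordinatewise (after sorting) by $\beta \cdot v$ stretched out, so $\|v'\|_X \le \beta\|v\|_X$. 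Hence $\|W(v)\|_X \le \|v'\|_X \le \beta\|v\|_X \le \beta^2\|x\|_X$.

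The one subtlety I need to be careful about, and which I expect to be the main obstacle, is making the upper-bound majorization argument fully rigorous: replacing ``$b_k$ copies of value $\beta^{-k}$'' by ``$\beta b_k$ copies'' is not literally a valid operation on integer-length vectors, and one must either (i) pass to a common refinement/padding so that all the vectors live in $\R^d$ with the right total count $\le d$, using that $\sum_k \lfloor\beta^{j_k}\rfloor \le d$ is enforced by the $c \ge \lfloor\beta^j\rfloor$ check in the algorithm, or (ii) argue via the dual characterization of weak majorization (domination of all top-$k$ sums). I would take route (ii): show that for every prefix length $m$, the sum of the $m$ largest coordinates of $W(v)$ is at most $\beta$ times the sum of the $\lceil m/\beta\rceil$ (or so) largest coordinates of $v$, which is at most $\beta\|v\|_{T(m)} \le \beta\|v\|_X$-type bound—then invoke the symmetric-norm majorization lemma. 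A clean way to phrase this: if a level vector $w$ has, at every level $k$, multiplicity at most $\beta$ times that of level vector $v$, then $w$ is weakly majorized by the level vector $\beta v$ (same values, $\beta$ times the counts), and $\|\beta v\|_X = \beta\|v\|_X$ is literally false for integer vectors but true for the "stretched" vector; the correct statement is $\|w\|_X \le \beta\|v\|_X$, proved by checking top-$k$ sums. I would isolate this as a small sub-claim and dispatch it by the prefix-sum computation, which is elementary but needs the inequality $\lfloor\beta^{j_k}\rfloor < \beta b_k$ at hand.

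Finally, I would note that the additive term $\tau d$ is harmless in the regime of interest because the surrounding discussion has already set $\tau = \beta/d^2$, so $\tau d = \beta/d \to 0$; but since the lemma is stated with the generic $\tau$, I keep it symbolic and simply carry it through the lower-bound chain unchanged (the $W$ step contributes nothing on the lower side).
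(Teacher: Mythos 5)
Your proposal is correct and follows the same skeleton as the paper's proof: chain Proposition~\ref{prop:cut-small-level} and Proposition~\ref{prop:levels} to control $V(C(x))$, then handle the rounding operator $W$ separately, using monotonicity (weak majorization) for the lower bound. The one place the two diverge is the upper bound for the $W$ step: the paper simply cites Lemma~3.5 of \cite{BBCKY15}, whereas you reprove that ingredient from scratch. Your reproof is sound: the key sub-claim --- if $w$ and $v$ are level vectors with the same values $\beta^{-k}$ and multiplicities $b_k'(w) \leq \beta\, b_k(v)$ at every level, then every top-$m$ sum of $w$ is at most $\beta$ times the corresponding top-$m$ sum of $v$, whence $\|w\|_X \leq \|\beta v\|_X = \beta\|v\|_X$ by the majorization lemma --- does check out by the prefix-sum case analysis you outline (split on whether $m$ exceeds the cumulative count of $v$ through the partially-filled level), and it is exactly the right way to make the ``multiply each multiplicity by $\beta$'' heuristic rigorous. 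One small clean-up: your parenthetical that ``$\|\beta v\|_X = \beta\|v\|_X$ is literally false for integer vectors'' conflates scaling the \emph{values} (which is always valid by homogeneity and is all you need) with scaling the \emph{counts} (which is the illegitimate operation your sub-claim is designed to avoid); the final statement you prove is the correct one. Your version is more self-contained; the paper's is shorter but leans on an external lemma. Neither proof addresses the edge case in which the capacity check $c \geq \lfloor\beta^j\rfloor$ in the definition of $W$ drops a level, which could in principle break the lower bound, but this does not arise for the vectors to which the lemma is applied.
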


\begin{proof}
The bound $\|x\|_{X} - \tau d \leq \|W(V(C(x)))\|_{X}$ follows by combining Proposition~\ref{prop:levels} and Proposition~\ref{prop:cut-small-level}, as well as the monotonicity of norms. The bound $\|W(V(C(x)))\|_{X} \leq \beta^2 \|x\|_{X}$ follows from Proposition~\ref{prop:levels}, Proposition~\ref{prop:cut-small-level}, as well as Lemma 3.5 from \cite{BBCKY15}.
\end{proof}

In order to simplify notation, we let $R \colon \R^d \to \R^d$ given by $R(x) = W(V(C(x)))$. 

\def\Eps{\mathcal{E}}
\def\calR{\mathcal{R}}

\begin{definition}
Let the set $\calL \subset \R^d_+$ be given by
\[ \calL = \{ y \in \R^d_+ \mid y_1 \geq \dots y_d \geq 0 \}. \]
Additionally, for an arbitrary symmetric norm $\|\cdot\|_{X}$ with dual norm $\|\cdot\|_{X^*}$, we let the set $\calR \subset \calL$ be given by
\[ \calR = \{ R(y) \in \R^d_+ \mid y \in \calL \cap B_{X^*} \}. \]
\end{definition}

\begin{definition}
Fix a vector $y \in \calL \setminus \{ 0 \}$ ($y$ has non-negative, non-increasing coordinates). Let the \emph{maximal
seminorm} with respect to $y$, $\|\cdot\|_{y} \colon \R^d \to \R$ be the
seminorm where for every $x \in \R^d$,
\[ \|x\|_{y} = \langle |x^*|, y\rangle. \]
\end{definition}

We first show there exists some setting of $c \in \R^d$ such that we may compute $\|x\|_{y}$ as $\oplus_{\ell_1}^d T^{(c)}$. 

\begin{lemma}
\label{lem:hehe}
For every vector $y \in \calL \setminus \{0 \}$, there exists
$c_1, \dots, c_d \geq 0$ where for all $x \in \R^d$,
\[ \|x\|_{y} = \|x\|_{T, 1}^{(c)}. \]
\end{lemma}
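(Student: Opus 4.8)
The plan is to express the maximal seminorm $\|x\|_y = \langle |x^*|, y\rangle$ as a non-negative combination of top-$k$ norms, which is exactly the content of the definition of $\|\cdot\|_{T,1}^{(c)}$. The key observation is the classical \emph{Abel summation} (summation by parts) identity. Writing $s_k(x) = \sum_{i=1}^k |x_i^*| = \|x\|_{T(k)}$ for $k \in [d]$ and $s_0(x) = 0$, we have
\[
\|x\|_y = \sum_{i=1}^d |x_i^*| \, y_i = \sum_{i=1}^d \bigl(s_i(x) - s_{i-1}(x)\bigr) y_i = \sum_{k=1}^d s_k(x)\,(y_k - y_{k+1}),
\]
where we set $y_{d+1} = 0$. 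So the natural choice is $c_k = y_k - y_{k+1}$ for $k \in [d-1]$ and $c_d = y_d$.

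First I would verify that this choice of coefficients is legitimate, i.e.\ that each $c_k \geq 0$. This is immediate from the assumption $y \in \calL \setminus \{0\}$, which says precisely that $y_1 \geq y_2 \geq \cdots \geq y_d \geq 0$; hence $y_k - y_{k+1} \geq 0$ for $k < d$ and $c_d = y_d \geq 0$. Second, I would carry out the Abel summation rearrangement above carefully, being explicit about the telescoping: expanding $\sum_{k=1}^d (y_k - y_{k+1}) s_k(x)$ and regrouping by the $s_k$'s recovers $\sum_i |x_i^*| y_i$ since the coefficient of $|x_i^*|$ becomes $\sum_{k \geq i}(y_k - y_{k+1}) = y_i$ (telescoping, using $y_{d+1}=0$). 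Finally I would observe that $\sum_{k=1}^d c_k \|x\|_{T(k)}$ is by definition $\|x\|_{T,1}^{(c)}$, so the two quantities agree for every $x \in \R^d$.

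This is a short, purely algebraic argument and I do not anticipate a genuine obstacle; the only point requiring a modicum of care is the bookkeeping in the summation-by-parts step (getting the index shifts and the boundary term $y_{d+1}=0$ right), and confirming that the coefficients land in $\R_+$ rather than merely in $\R$. It is worth noting that this decomposition is essentially the statement that the "maximal norm" construction from the introduction is a non-negative combination of top-$k$ norms, specialized here to a single sorted weight vector $y$; the lemma is the building block that, combined over a net $\calR$ of such $y$'s taken inside the dual unit ball, will yield the full $\ell_\infty$-over-$\ell_1$-over-top-$k$ embedding of Theorem~\ref{thm:symnorm}.
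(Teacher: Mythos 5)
Your proposal is correct and matches the paper's proof: the paper sets $c_k = y_k - y_{k+1}$ with $y_{d+1} = 0$ and performs the same summation-by-parts (exchange of summation order), with non-negativity of the $c_k$ following from $y \in \calL$. No substantive difference.
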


\begin{proof}
For $k \in [d]$, we let $c_k = y_k - y_{k+1}$, where $y_{d+1} = 0$. 
\[ \langle |x^*|, y\rangle = \sum_{i=1}^d |x^*_i| y_i = \sum_{i=1}^d |x^*_i| \left(\sum_{k=i}^d c_k \right) = \sum_{k=1}^d c_k \left(\sum_{i=1}^k |x^*_i| \right) = \sum_{k=1}^d c_k \|x\|_{T(k)} \]
\end{proof}

Given Lemma~\ref{lem:hehe}, it suffices to show that for an arbitrary symmetric norm $\|\cdot\|_{X}$, we may compute $\|x\|_{X}$ (with some distortion) as a maximum over many maximal seminorms. In the following lemma, we show that taking the maximum over maximal norms from $\calR$ suffices, but gives sub-optimal parameters. We then improve the parameters to prove Theorem~\ref{thm:symnorm}.

\begin{lemma}\label{lem:overR}
Let $\|\cdot\|_{X}$ be an arbitrary symmetric norm and let $\|\cdot\|_{X^{*}}$ be its dual norm. 
Then for any $\|x\|_{X} \leq 1$,
\[ \|x\|_{X} - \tau d \leq \max_{y \in \calR} \|x\|_{y} \leq \beta^2 \|x\|_{X}. \]
\end{lemma}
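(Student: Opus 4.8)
The plan is to establish both inequalities in Lemma~\ref{lem:overR} by exploiting the duality between $\|\cdot\|_X$ and $\|\cdot\|_{X^*}$ together with the rounding map $R$. Recall that for a symmetric norm, $\|x\|_X = \sup\{\langle |x^*|, z\rangle \mid z \in B_{X^*} \cap \calL\}$, since the supremum over the dual ball is attained on a non-negative, non-increasingly-sorted vector when $x$ is itself sorted with non-negative entries. So for $\|x\|_X \le 1$ we automatically have $\langle |x^*|, y \rangle \le \|x\|_X$ for any $y \in B_{X^*} \cap \calL$, and in fact the sup equals $\|x\|_X$.

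For the \emph{upper bound} $\max_{y \in \calR}\|x\|_y \le \beta^2 \|x\|_X$: every $y \in \calR$ has the form $y = R(z)$ for some $z \in \calL \cap B_{X^*}$. By Lemma~\ref{lem:roundedbound} applied to the dual norm $\|\cdot\|_{X^*}$, we have $\|R(z)\|_{X^*} \le \beta^2 \|z\|_{X^*} \le \beta^2$. Hence $y/\beta^2 \in B_{X^*}$, and since $y/\beta^2$ still lies in $\calL$ (the map $R$ preserves the sorted non-negative structure), we get $\|x\|_y = \langle |x^*|, y\rangle = \beta^2 \langle |x^*|, y/\beta^2\rangle \le \beta^2 \|x\|_X$ by the duality characterization above. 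Taking the max over $y \in \calR$ gives the claim.

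For the \emph{lower bound} $\|x\|_X - \tau d \le \max_{y \in \calR}\|x\|_y$: here I would pick the dual vector $z^\star \in \calL \cap B_{X^*}$ that achieves (or nearly achieves) $\|x\|_X = \langle |x^*|, z^\star\rangle$, and then argue that replacing $z^\star$ by $R(z^\star) \in \calR$ loses at most $\tau d$. The cleanest way is to pass through the intermediate bound: $\langle |x^*|, R(z^\star)\rangle \ge \langle |x^*|, z^\star\rangle - (\text{error from } C, V, W)$. Concretely, $C$ only zeroes out coordinates of $z^\star$ below $\tau$, so it decreases $\langle |x^*|, \cdot\rangle$ by at most $\tau \sum_i |x^*_i| \le \tau d \|x\|_\infty \le \tau d$ when $\|x\|_X \le 1$ (using $\|x\|_\infty \le \|x\|_X$ for a normalized symmetric norm, or more carefully bounding $\sum |x_i^*| \le d \|x\|_X$, giving additive loss $\tau d$); the level-vector map $V$ rounds each surviving coordinate \emph{down} by at most a factor $\beta$ — but rather than tracking that as multiplicative loss on the wrong side, I would instead use that $R(z^\star)$ weakly majorizes $z^\star$ minus a small vector, or directly invoke the already-proven sandwich $\|x\|_X - \tau d \le \|R(x)\|_X$ of Lemma~\ref{lem:roundedbound} in a dual form. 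The anticipated main obstacle is exactly this: making the additive $\tau d$ error come out on the correct side of the inequality for the lower bound, since $V$ rounds \emph{down} and $W$ rounds counts \emph{up}, so the net effect on $\langle |x^*|, \cdot \rangle$ is not monotone in an obvious way; I expect one must argue that $R(z^\star)$, after scaling by $\beta$, dominates $z^\star$ coordinatewise on the coordinates that matter, or symmetrically, that $z^\star$ restricted to coordinates above $\tau$ is weakly majorized by $\beta \cdot R(z^\star)$, so that $\langle |x^*|, \beta R(z^\star)\rangle \ge \langle |x^*|, C(z^\star)\rangle \ge \|x\|_X - \tau d$; but then one needs $\beta R(z^\star) \in \calR$ up to the $\beta^2$ slack, which is consistent with how $\calR$ is defined via $R(B_{X^*})$ rather than $B_{X^*}$ itself. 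Resolving this bookkeeping — matching the $\beta$-factors from $V$ and the $\beta^2$ from Lemma~\ref{lem:roundedbound} against the claimed bounds — is the one genuinely delicate point; everything else is the duality pairing and monotonicity of symmetric norms under weak majorization (the lemma from \cite{MOA11} quoted earlier).
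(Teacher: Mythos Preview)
Your upper-bound argument is correct and is exactly what the paper does: for $y=R(z)\in\calR$, Lemma~\ref{lem:roundedbound} gives $\|y\|_{X^*}\le\beta^2$, so $y/\beta^2\in B_{X^*}$ and duality yields $\|x\|_y\le\beta^2\|x\|_X$.

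For the lower bound, your perceived ``genuinely delicate point'' is an artifact of a sign error in your reading of the definition of $V$. The level-vector map rounds each surviving coordinate \emph{up}, not down: if $|z_i|\in(\beta^{-k-1},\beta^{-k}]$ then in $V(z)$ that coordinate becomes $\beta^{-k}$, the upper endpoint of the interval. Likewise $W$ rounds level \emph{counts} up. Consequently, for $z\in\calL$ the vector $R(z)=W(V(C(z)))$ dominates $C(z)$ coordinatewise (both already sorted in $\calL$), and since $|x^*|$ has non-negative entries,
\[
\langle |x^*|, R(z)\rangle \;\ge\; \langle |x^*|, C(z)\rangle \;=\; \langle |x^*|, z\rangle - \langle |x^*|, z-C(z)\rangle.
\]
The only loss is therefore from $C$, which zeroes coordinates below $\tau$; under the normalization $\|e_1\|_{X^*}=1$ (so that $\|x\|_X\le 1$ forces $\|x\|_\infty\le 1$) this loss is at most $\tau d$. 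Taking the supremum over $z\in\calL\cap B_{X^*}$ finishes the lower bound with no $\beta$-factor bookkeeping needed. This is precisely the paper's one-line argument (``$R(z)$, other than coordinates less than $\tau$, is monotonically above $z$''). Your speculative detours through weak majorization of $C(z^\star)$ by $\beta R(z^\star)$, or through membership of $\beta R(z^\star)$ in $\calR$, are unnecessary once the direction of rounding is corrected.
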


\begin{proof}
Without loss of generality, we rescale the norm so that $\|e_1\|_{X^*} = 1$, where $e_1$ is the first
standard basis vector. Consider any $x \in \R^d$ with $\|x\|_{X} \leq 1$. Then since $\|\cdot\|_{X}$ is symmetric, we may assume without loss of generality that all coordinates of $x$ are non-negative and in non-increasing order. Thus for each $y \in \calL \cap \{ 0 \}$, we have $\|x\|_{y} = \langle x, y \rangle$. 

The lower bound simply follows from the fact that $R(z)$, other than coordinates less than $\tau$, is monotonically above $z$, and all coordinates in $x$ are non-negative. More specifically,
\[ \|x\|_{X} = \sup_{z \in \calL \cap B_{X^*}} \langle x, z\rangle \leq \sup_{z \in \calL \cap B_{X^{*}}} \langle x, R(z) \rangle + \tau d = \max_{y \in \calR} \ \langle x, y \rangle + \tau d,\]
where $\tau d$ comes from the fact that because $\|x\|_{X} \leq 1$, every coordinate of $x$ is at most $1$. 
On the other hand, we have
\begin{align*}
\max_{y \in \calR}\ \langle x, y \rangle = \beta^2 \max_{y \in \calR} \ \langle x, \frac{y}{\beta^2} \rangle \leq \beta^2 \sup_{z \in B_{X^*}} \langle x, z \rangle = \beta^2 \|x\|_{X},
\end{align*}
where we used the fact that $\|\frac{y}{\beta^2}\|_{X^*} \leq 1$ by Lemma~\ref{lem:roundedbound}.
\end{proof}

Given Lemma~\ref{lem:overR}, it follows that we may linearly embed $X$ into $\oplus_{\ell_{\infty}}^t \oplus_{\ell_1}^d T^{(c)}$ where $t = |\calR|$, with distortion $\frac{\beta^2}{1 - \tau d} \leq \beta^3$ (where we used the fact $\tau = \frac{\beta}{d}$ and that $1 + \beta / d \leq \beta$ for a large enough $d$). The embedding follows by copying the vector $x$ into the $t$ spaces $\oplus_{\ell_1}^d T^{(c)}$ corresponding to each vector $y \in \calR$ given in Lemma~\ref{lem:hehe}. The one caveat is that this embedding requires $t$ copies of $\oplus_{\ell_1}^d T^{(c)}$, and $t$ is as large as $\left(\log_{\beta} d + 1 \right)^{2\log_{\beta} d} = d^{O(\log \log d)}$. This is because there are at most $2 \log_{\beta} d$ many levels, and each contains has number of coordinates being  some value in $\{ \beta^{i} \}_{i=0}^{\log_{\beta} d}$. Thus, our algorithm becomes inefficient once $d \geq 2^{\omega\left(\frac{\log n}{\log \log n}\right)}$. 

In order to avoid this problem, we will make the embedding more efficient by showing that we do not need all of $\calR$, but rather a fine net of $\calR$. In addition, our net will be of polynomial size in the dimension, which gives an efficient algorithm for all $\omega(\log n) \leq d \leq n^{o(1)}$. We first show that it suffices to consider fine nets of $\calR$, and then build a fine net of $\calR$ of size $\poly(d)$. 

\begin{lemma}
\label{lem:hehehe}
Fix an $\gamma \in (0, 1/2)$. Let $\|\cdot\|_{X}$ be an arbitrary
symmetric norm and $\|\cdot\|_{X^{*}}$ be its dual norm. If $N$ is a
$\gamma$-net of $\calR$ with respect to distance given by $\|\cdot\|_{X^*}$, then
\[ (1-\gamma - \tau d)\|x\|_{X} \leq \max_{y \in N} \|x\|_{y} \leq (\beta^2 + \gamma) \|x\|_{X} \]
\end{lemma}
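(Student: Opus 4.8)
The plan is to bound $\max_{y \in N}\|x\|_y$ from above and below by comparing $N$ to $\calR$, for which we already have Lemma~\ref{lem:overR}. Fix $x$ with $\|x\|_X \le 1$, and as in the proof of Lemma~\ref{lem:overR} reduce to the case where $x$ has non-negative, non-increasing coordinates, so $\|x\|_y = \langle x, y\rangle$ for every $y \in \calL \setminus \{0\}$. The upper bound is immediate from monotonicity: $N \subseteq \calR$, so $\max_{y \in N}\|x\|_y \le \max_{y \in \calR}\|x\|_y \le (\beta^2 + \gamma)\|x\|_X$, which in fact gives the even cleaner bound $\beta^2\|x\|_X$; I keep $\beta^2 + \gamma$ so the statement is symmetric and robust.

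For the lower bound, let $y^\star \in \calR$ achieve $\max_{y \in \calR}\|x\|_{y} = \langle x, y^\star\rangle$, which by Lemma~\ref{lem:overR} is at least $\|x\|_X - \tau d$. Since $N$ is a $\gamma$-net of $\calR$ in the $\|\cdot\|_{X^*}$ metric, pick $y' \in N$ with $\|y^\star - y'\|_{X^*} \le \gamma$. Then by the duality between $\|\cdot\|_X$ and $\|\cdot\|_{X^*}$ and the fact that $\|x\|_X \le 1$,
\[
\langle x, y'\rangle = \langle x, y^\star\rangle - \langle x, y^\star - y'\rangle \ge \langle x, y^\star\rangle - \|x\|_X\,\|y^\star - y'\|_{X^*} \ge \|x\|_X - \tau d - \gamma.
\]
Since $\|x\|_{y'} \ge \langle x, y'\rangle$ (the inner product with the sorted version $|x^*| = x$ can only be larger, but here they coincide), we conclude $\max_{y \in N}\|x\|_y \ge (1 - \gamma - \tau d)\|x\|_X$, using $\|x\|_X \le 1$ to absorb the additive $\tau d + \gamma$ into a multiplicative factor. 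Finally, since $\|\cdot\|_y$ and $\|\cdot\|_X$ are both (semi)norms, the inequality extends from $\|x\|_X \le 1$ to all $x$ by homogeneity.

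The only subtle point is making sure the net is used in the right metric: we need $\langle x, y^\star - y'\rangle$ controlled, and since $x$ ranges over $B_X$ the natural bound is $\|x\|_X \|y^\star - y'\|_{X^*}$, which is exactly why $N$ must be a $\gamma$-net of $\calR$ with respect to $\|\cdot\|_{X^*}$ rather than, say, $\ell_2$ or $\|\cdot\|_X$. There is no real obstacle here — the lemma is a routine net-refinement of Lemma~\ref{lem:overR}; the actual work is deferred to the next step, namely exhibiting a $\poly(d)$-size $\gamma$-net of $\calR$ in the $\|\cdot\|_{X^*}$ metric, which is where the $t = d^{O(\log(1/\delta)\delta^{-1})}$ bound of Theorem~\ref{thm:symnorm} will come from.
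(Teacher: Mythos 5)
Your argument follows the same route as the paper: reduce to non-negative sorted $x$, use Lemma~\ref{lem:overR}, and pay an extra $\gamma$ via $|\langle x, y^\star - y'\rangle| \le \|x\|_X\,\|y^\star - y'\|_{X^*}$. The lower bound matches the paper's essentially verbatim. Two caveats, one of which matters downstream. First, the minor one: passing from the additive bound $\max_{y\in N}\|x\|_y \ge \|x\|_X - \tau d - \gamma$ to the multiplicative bound $(1-\gamma-\tau d)\|x\|_X$ requires $\|x\|_X \ge 1$, not $\|x\|_X \le 1$ (for $\|x\|_X < 1$ the additive bound is the \emph{weaker} of the two). The fix is to normalize to $\|x\|_X = 1$ exactly, which homogeneity of both sides permits, and this is what the paper does.

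The more substantive point is your upper bound. You take it as "immediate from monotonicity" because "$N \subseteq \calR$", and even remark that this yields the cleaner constant $\beta^2$. But the net that this lemma is actually applied to --- the set $N = \{S(x) : x \in \calR\}$ of simplified rounded vectors from Lemma~\ref{lem:net} --- is \emph{not} a subset of $\calR$: the operation $S$ zeroes out levels and re-sorts, so $S(x)$ generally lies outside $\calR$ and is merely within $X^*$-distance $8(\beta-1)$ of the point $x \in \calR$ it came from. The paper's proof is written precisely to accommodate this: for the upper bound it decomposes each $y' \in N$ as $y' = y + v$ with $y \in \calR$ and $\|v\|_{X^*} \le \gamma$, which is where the $+\gamma$ in $\beta^2 + \gamma$ comes from. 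Under your reading of "$\gamma$-net" (an internal subset), that term is vacuous and your proof is fine, but the lemma would then not apply to the net constructed in the next section. You should either prove the upper bound the paper's way (allowing external net points that are $\gamma$-close to $\calR$), or explicitly record the stronger requirement on $N$ and verify it for $\{S(x)\}$ --- which it does not satisfy.
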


\begin{proof}
Since the embedding we build is linear, it suffices to show that every vector $x \in \R^d$ with $\|x\|_{X} = 1$ has 
\[ 1 - \gamma - \tau d \leq \max_{y \in N} \|x\|_y \leq (\beta^2 + \gamma). \]
Consider a fixed vector $x \in \R^d$ with $\|x\|_{X}=1$. Additionally, we may assume the coordinates of $x$ are non-negative and in non-increasing order. We simply follow the computation:
\begin{align*}
\|x\|_{X} &= \||x^*|\|_{X} \leq \max_{y \in \calR}\ \langle |x^*|, y \rangle + \tau d = \max_{y \in N} \left( \langle |x^*|, y\rangle + \langle |x^*|, v \rangle \right) + \tau d \leq \max_{y \in N} \langle |x^*|, y \rangle + \gamma \|x\|_{X} + \tau d,
\end{align*}
where we used Lemma~\ref{lem:overR} and the fact that $\|v\|_{X^*} \leq \gamma$ in a $\gamma$-net of $\calR$ with respect to the distance given by $\|\cdot\|_{X^*}$. On the other hand,
\begin{align*}
\max_{y \in N} \|x\|_y &= \max_{y \in \calR} \left( \langle |x^*|, y \rangle + \langle |x^*|, v \rangle \right) \leq \max_{y \in \calR} \|x\|_{y} + \gamma \|x\|_{X} \leq (\beta^2 + \gamma) \|x\|_{X},
\end{align*}
where again, we used Lemma~\ref{lem:overR} and the fact that $\|v\|_{X^*} \leq \gamma$. 
\end{proof}

Finally, we conclude the theorem by providing a $\gamma$-net for $\calR$ of size $d^{O(\log(1/\gamma)\gamma^{-1})}$.

\begin{lemma}
\label{lem:net}
Fix any symmetric space $X$ with dual $X^*$. There exists an $8(\beta - 1)$-net of size $d^{O(\log(1/(\beta - 1)) / \log \beta)}$ for $\calR$ with respect to distances given by $\|\cdot\|_{X^*}$.
\end{lemma}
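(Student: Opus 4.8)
The plan is to build $N$ explicitly, by showing that every $z = R(y) \in \calR$ lies within $8(\beta-1)$, in $\|\cdot\|_{X^*}$, of a vector that can be described by only $O(\log(1/(\beta-1))/\log\beta)$ numbers from $\{0,1,\dots,d\}$ together with $\poly(\log d)$ auxiliary choices; enumerating all such descriptions produces a set of vectors of size $d^{O(\log(1/(\beta-1))/\log\beta)}$, and this set is the claimed net.

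First I would record the structure of an element $z \in \calR$. Normalizing so that $\|e_1\|_{X^*}=1$, every $y \in \calL \cap B_{X^*}$ has all coordinates in $[0,1]$, since $y$ weakly majorizes $y_1 e_1$ and hence $y_1 \le \|y\|_{X^*} \le 1$. Therefore $z = W(V(C(y)))$ is a ``rounded staircase'': it has at most $L = O(\log_\beta d)$ nonempty levels, level $k$ having value $\beta^{-k}$ with multiplicity $b_k$ equal to a power of $\beta$, and $z$ is the non-increasing arrangement of these blocks. Writing $c_k = \sum_{j \le k} b_j$ for the cumulative counts (non-decreasing, $c_L \le d$, extended by $c_k := c_L$ for $k > L$) one has the nested representation $z = (1-1/\beta)\sum_{k \ge 0} \beta^{-k}\,\xi^{(c_k)}$. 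Each summand $(1-1/\beta)\beta^{-k}\xi^{(c_k)}$ is weakly majorized by $z$, so by the majorization lemma and Lemma~\ref{lem:roundedbound} applied to $X^*$ (which gives $\|z\|_{X^*} \le \beta^2\|y\|_{X^*}\le\beta^2$) we get $\|\xi^{(c_k)}\|_{X^*} \le \beta^{k+2}/(1-1/\beta)$ for every $k$: no single ``scale'' contributes much.

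Next I would show $z$ is determined, up to the allowed error, by a bounded number of scales. Looking at the non-decreasing function $k \mapsto \log_\beta c_k$, I would choose breakpoint levels $k_{\min} = j_0 < j_1 < \dots < j_r$ adaptively: a new breakpoint is placed whenever the cumulative count has grown by more than a fixed constant factor since the previous breakpoint, or whenever the value $\beta^{-k}$ has dropped by the factor $1/(\beta-1)$ since the previous breakpoint (so the tail past $j_r$ is safe to discard). The net element $z'$ records the pairs $(j_i, c_{j_i})$ — at most $2(r{+}1)$ numbers from $\{0,\dots,d\}$ and $\{0,\dots,L\}$, the latter contributing only a $\poly(\log d)$ factor — agrees with $z$ at the breakpoints, fills each block $(j_{i-1},j_i)$ by the geometric interpolation of the cumulative counts (rounded to a valid count), and zeroes out all levels $k > j_r$. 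The crux here is the bound $r = O(\log(1/(\beta-1))/\log\beta)$: the ``value dropped by $1/(\beta-1)$'' rule can fire only $O(\log_\beta d)$ times globally, but combined with a potential argument on $\log_\beta c_k$ — total variation $O(\log_\beta d)$, increments between ``count grew by a constant'' breakpoints bounded below, and the already-discarded tail removed — one gets that only $O(\log(1/(\beta-1))/\log\beta)$ breakpoints survive.

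Finally I would bound $\|z - z'\|_{X^*}$ block by block. On a block over which the count barely moved, $z$ and $z'$ differ on few coordinates, each changing by at most $(1-1/\beta)\beta^{-k}$, so that block's contribution is at most $(1-1/\beta)\beta^{-k}$ times $\|\xi^{(\cdot)}\|_{X^*}$ of the number of disagreeing coordinates; summing across the block and using subadditivity and monotonicity of $m \mapsto \|\xi^{(m)}\|_{X^*}$ together with the Step-1 bound on $\|\xi^{(c_k)}\|_{X^*}$ shows this is $O(\beta-1)$ times the $X^*$-mass of that block, and the block masses sum to $O(1)$. The discarded tail has all coordinates at most $(\beta-1)\,\beta^{-j_{\min}} \le (\beta-1)\|z\|_\infty$ in absolute value, so its $X^*$-norm is $O(\beta-1)\,\|z\|_{X^*}=O(\beta-1)$. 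Choosing the constants in the stopping rules and in $r$ appropriately makes the total at most $8(\beta-1)$. I expect the main obstacle to be exactly this error analysis in the regime where $z$ genuinely spreads its $X^*$-mass over all $\Theta(\log_\beta d)$ levels (for instance $X^* = \ell_1$ on a power-law vector): there $z$ cannot be approximated by any staircase with $O_\beta(1)$ distinct values, so one must instead verify that its cumulative-count profile is, up to relative error $\beta-1$, a short concatenation of geometric progressions, and that geometric interpolation of the counts costs only $O(\beta-1)$ in $\|\cdot\|_{X^*}$ — an estimate that leans delicately on subadditivity of $\|\xi^{(\cdot)}\|_{X^*}$ and on pairing each perturbed coordinate with the right scale.
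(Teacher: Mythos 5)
Your first step is sound (the staircase identity and the bound $\|\xi^{(c_k)}\|_{X^*}\le \beta^{k+2}/(1-1/\beta)$ via weak majorization), but the argument breaks at exactly the point you flag as the crux: the claim that $r=O(\log(1/(\beta-1))/\log\beta)$ breakpoints suffice. Your rule ``place a breakpoint whenever the cumulative count has grown by a fixed constant factor'' must fire $\Theta(\log d)$ times on any element of $\calR$ whose cumulative counts climb from $O(1)$ to $\Theta(d)$ --- the generic case --- since each firing multiplies the count by a constant; the potential $\log c_k$ increases monotonically by $\log d$ in total and each firing consumes only $\Theta(1)$ of it, so no potential argument can bring the count down to a quantity independent of $d$. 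You are then stuck on both horns: with $r=\Theta(\log d)$ breakpoints each recording a number from $\{0,\dots,d\}$ the net has size $d^{\Theta(\log d)}$, far exceeding the target; and if you cap $r$ and rely on geometric interpolation across long stretches, the error is $\Omega(1)$ rather than $O(\beta-1)$. A concrete witness for the second failure: take $X^*=\ell_1$, partition the $\Theta(\log_\beta d)$ levels into blocks of $m=\lceil 1/(\beta-1)\rceil$ consecutive levels, and in block $j$ place a single nonzero level of count $\lfloor\beta^{jm}\rfloor$ either at the top or at the bottom of the block. All $2^{\Theta(\log d)}$ resulting vectors share the same cumulative counts at every block boundary, yet moving a block's mass down $m-1$ levels rescales its $\ell_1$ contribution by $\beta^{-(m-1)}\approx 1/e$, so (after normalizing into $B_{\ell_1}$) a constant-rate code among these vectors is pairwise $\Omega(1)$-separated; one interpolant cannot serve them all, and the distinguishing information lives entirely in the within-block placement, invisible to cumulative counts at $O_\beta(1)$ breakpoints.

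The paper's proof sidesteps this by never trying to make the number of recorded levels small. It (i) zeroes out each level whose count is dominated by the maximum count among levels at least $3\log_\beta(1/(\beta-1))$ positions above it, paying only $2(\beta-1)$ times the norm via the aggregation/charging argument of Lemma~\ref{lem:sum} and Lemma~\ref{lem:aggregate-net}; and (ii) encodes \emph{every} surviving level, but records only the ratio $t_k$ of its count to that running maximum, with a prefix-free code of length $O(1+\log\max(t_k,1))$. The telescoping bound $\prod_k\max(t_k,1)\le d^{3\log_\beta(1/(\beta-1))}$ then caps the total description at $O(\log d\cdot\log(1/(\beta-1))/\log\beta)$ bits even though $\Theta(\log_\beta d)$ levels may survive. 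The missing idea in your proposal is precisely this relative, variable-length encoding: you must charge each level a number of bits proportional to how much it exceeds its recent predecessors, not a flat $\log d$ bits at a sparse set of locations.
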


We defer the proof of Lemma~\ref{lem:net} to the next section. 
The proof of Theorem~\ref{thm:symnorm} follows by combining
Lemma~\ref{lem:hehe}, Lemma~\ref{lem:hehehe}, and Lemma~\ref{lem:net}. In particular, given a $\frac{\beta-1}{8}$-net of $\calR$, we get an embedding with distortion at most $(\beta^2 + 8(\beta - 1)) (1 + (8(\beta - 1) + \tau d)^2)$ from Lemma~\ref{lem:hehehe}. We let $\tau = \frac{\beta}{d^2}$ and $\beta = \sqrt{1 + \delta/100}$ to get the desired linear embedding with distortion $1 + \delta$. We now proceed to proving Lemma~\ref{lem:net}, which gives the desired upper bound on the size of the net.

\subsection{Proof of Lemma~\ref{lem:net}: bounding the net size}

We now give an upper bound on the size of a fine net of
$\calR$. We proceed by constructing a further simplification of $R(x)$. Intuitively we show that one can ignore the higher levels if there are fewer coordinates in the higher levels than some lower level.

\begin{lemma}\label{lem:sum}
Let $\|\cdot\|_X$ be a symmetric norm. Consider any nonnegative vector $x \in \R_+^d$ as well as two indices $u,v\in[d]$. Let $y \in \R_+^d$ be the vector with:
\[ y_k =\left\{ \begin{array}{cc} x_k & k \in [d] \setminus \{ u,v \} \\
						x_u + x_v & k = u \\
						0 		& k = v \end{array} \right. . \] 
Then $\|y\|_X\ge \|x\|_X$.
\end{lemma}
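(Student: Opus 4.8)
The plan is to show that $y$ weakly majorizes $x$, and then invoke the majorization lemma (Theorem B.2 in \cite{MOA11}, stated earlier) which gives $\|y\|_X \ge \|x\|_X$ for every symmetric norm. Since $x$ is already nonnegative, $|x^*|$ is just the sorted (non-increasing) rearrangement of the entries of $x$, and similarly for $y$. So the whole task reduces to the purely combinatorial claim: for every $m \in [d]$, the sum of the $m$ largest entries of $y$ is at least the sum of the $m$ largest entries of $x$.

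The key observation is that $y$ is obtained from $x$ by the operation ``move all the mass from coordinate $v$ onto coordinate $u$.'' Write $a = x_u$, $b = x_v$ (WLOG $a \ge b$, since the roles of $u,v$ are symmetric in the hypothesis up to relabeling — although actually the statement fixes which coordinate receives the sum, so I should not assume this; instead I argue directly). Fix $m$ and let $S \subseteq [d]$ be any set of size $m$; I want to exhibit a set $S'$ of size $m$ with $\sum_{k\in S'} x_k \le \sum_{k\in S} y_k$, where $S$ achieves the top-$m$ sum of $y$. The natural choice: take $S$ to be a top-$m$ set for $y$. If $v \in S$ then $y_v = 0$ contributes nothing, so $\sum_{k\in S} y_k = \sum_{k \in S\setminus\{v\}} y_k$, and $S\setminus\{v\}$ has size $m-1$; but we can do better — since $y_v = 0$ is the smallest possible value, we may assume $v \notin S$ for a top-$m$ set (swap $v$ out for any coordinate not in $S$). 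So assume $v \notin S$. Two cases: if $u \notin S$, then $\sum_{k \in S} y_k = \sum_{k\in S} x_k$ and $S$ is a valid competitor for the top-$m$ sum of $x$, done. If $u \in S$, then $\sum_{k\in S} y_k = (x_u + x_v) + \sum_{k \in S\setminus\{u\}} x_k = \sum_{k \in (S\setminus\{u\})\cup\{?\}} x_k$ — here I compare against the set $S$ itself for $x$: $\sum_{k\in S} y_k = \sum_{k\in S} x_k + x_v \ge \sum_{k\in S} x_k \ge$ (top-$m$ sum of $x$ is $\ge$... no). Let me restate cleanly: in the case $u \in S$, $\sum_{k\in S} y_k = x_v + \sum_{k\in S} x_k \ge \sum_{k\in S} x_k$, and since $\sum_{k\in S}x_k$ is at most the top-$m$ sum of $x$, this direction is the wrong way. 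So I instead compare with the set $T = (S \setminus \{u\}) \cup \{v\}$ for $x$: this has size $m$, and $\sum_{k\in T} x_k = \sum_{k\in S} x_k - x_u + x_v$, which need not dominate either. The correct argument: take $T$ to be a top-$m$ set for $x$; if $v \notin T$ then $\sum_{k\in T} y_k = \sum_{k\in T}x_k$ (if also $u\notin T$) or $\sum_{k\in T} y_k = \sum_{k\in T}x_k + x_v \ge \sum_{k\in T}x_k$ (if $u \in T$), and in both cases the top-$m$ sum of $y$ is $\ge \sum_{k\in T} y_k \ge \sum_{k\in T} x_k = $ top-$m$ sum of $x$. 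If $v \in T$ but $u \notin T$, replace $T$ by $T' = (T\setminus\{v\})\cup\{u\}$: since $x_u \ge x_v$ (as both are in a top-$m$ set and... no, not necessarily). Hmm — this is the genuinely fiddly point.

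The clean way to settle it: observe that the multiset of entries of $y$ is obtained from that of $x$ by replacing the two values $\{a, b\}$ (where $a = x_u, b = x_v$) with $\{a+b, 0\}$, and $\{a+b, 0\}$ weakly majorizes $\{a, b\}$ as two-element multisets (indeed $a+b \ge \max(a,b)$ and $a+b+0 = a+b$). Then use the standard fact that if multiset $A$ is obtained from multiset $B$ by replacing a sub-multiset $B_0$ with a sub-multiset $A_0$ of the same cardinality and same sum such that $A_0$ weakly majorizes $B_0$, then $A$ weakly majorizes $B$ (this is immediate from the characterization of weak majorization via top-$m$ sums: the top-$m$ sum of $A$ minus top-$m$ sum of $B$ equals a difference that is controlled by the top-$j$ sums of $A_0$ vs $B_0$ for $j \le |A_0|$). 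I would state and prove this one-line combinatorial fact inline. The main obstacle is exactly nailing this bookkeeping cleanly without an unwieldy case analysis; framing it through the ``replace a sub-multiset by a majorizing one of equal sum and size'' principle is the way to keep it short. Once weak majorization of $y$ over $x$ is established, Theorem B.2 of \cite{MOA11} finishes the proof.
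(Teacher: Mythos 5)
Your argument is correct, but it takes a genuinely different route from the paper. The paper's proof is a three-line duality argument: take $z \in B_{X^*}$ with $\langle x, z\rangle = \|x\|_X$, swap its $u$- and $v$-coordinates so that the larger of $z_u, z_v$ sits at position $u$ (this permuted vector $z'$ is still in $B_{X^*}$ because the dual of a symmetric norm is symmetric), and observe that $\langle y, z'\rangle \ge \langle x, z\rangle$ since $(x_u + x_v)\max\{z_u,z_v\} \ge x_u z_u + x_v z_v$ for nonnegative $x_u, x_v$. Your proof instead establishes that $y$ weakly majorizes $x$ and invokes the majorization lemma (Theorem B.2 of \cite{MOA11}) already stated in the preliminaries; this is equally valid and arguably more in the spirit of the surrounding machinery, though your writeup meanders before landing on the multiset-replacement principle. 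Note that the case you flag as ``genuinely fiddly'' ($v$ in a top-$m$ set $T$ of $x$ but $u \notin T$) resolves without needing $x_u \ge x_v$: just evaluate $y$ on $T' = (T\setminus\{v\})\cup\{u\}$, which gives $\sum_{k\in T'} y_k = \sum_{k\in T} x_k + x_u \ge \sum_{k\in T} x_k$, so a direct four-case analysis on a top-$m$ set of $x$ is all you need. The duality proof buys brevity and avoids the combinatorial bookkeeping entirely; your majorization proof buys a statement-level insight (the operation of merging two coordinates only moves a vector up in the weak majorization order), which is reusable beyond norms.
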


\begin{proof}
Consider the vector $z \in B_{X^*}$ where $\langle x, z\rangle = \|x\|_{X}$. Now, we let $z' \in \R^d$ be given by
\[ z_k' = \left\{ \begin{array}{cc} z_k & k \in [d] \setminus \{ u, v\} \\
						 \max\{ z_u, z_v \} & k = u \\
						 \min\{z_u, z_v\}  & k = v \end{array} \right.\]
Note that $z'$ is a permutation of $z$, so $z' \in B_{X^*}$. Now, 
\begin{align*}
\langle y, z' \rangle = (x_u + x_v) \max\{ z_u, z_v \} + \sum_{k \in [d] \setminus\{u, v\}} x_k z_k \geq \sum_{k \in [d]} x_k z_k = \langle x, z \rangle = \|x\|_{X}.
\end{align*}
\end{proof}

\begin{definition}
Consider a vector $x\in \calR$. We define the {\em simplified rounded vector} $S(x)$ as the vector returned by the following procedure.
\begin{algorithmic}[1]
\State Initialize $z = x$
\For{$k=0, 1, \ldots, 2\log_{\beta}(d)-1$}
\If{$b_k(z) \le \max_{j < k+3\log_{\beta} (\beta-1)} b_j(z)$} 
\State Set all coordinates of $z$ of value $\beta^{-k}$ to $0$ i.e. set $b_k(z) = 0$.
\EndIf
\EndFor
\State Sort the coordinates of $z$ in non-increasing order and return $z$.
\end{algorithmic}

\end{definition}

Next we show that the simplified rounded vector is close to the rounded counts vector.

\begin{lemma}\label{lem:aggregate-net}
Let $\|\cdot\|_X$ be a symmetric norm and let $x\in\calR$. Then $\|S(x)-x\|_X \le 2(\beta-1) \|x\|_X$.
\end{lemma}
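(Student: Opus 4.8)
The plan is to bound the error incurred by $S$ one level at a time. Recall that $S(x)$ is obtained from $x\in\calR$ by running over levels $k=0,1,\dots,2\log_\beta(d)-1$ and zeroing out level $k$ whenever $b_k(z)\le \max_{j<k+3\log_\beta(\beta-1)}b_j(z)$, i.e. whenever level $k$ is no larger than some earlier (higher-valued) level shifted by a constant number of levels. Write $\Delta$ for the vector of coordinates that $S$ deletes from $x$, so $x = S(x) + \Delta$ (up to sorting, which does not affect any symmetric norm). By the triangle inequality, $\|S(x)-x\|_X = \|\Delta\|_X$, so it suffices to show $\|\Delta\|_X \le 2(\beta-1)\|x\|_X$. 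The key point is that each deleted level $k$ contributes at most $b_k(z)\,\beta^{-k}$ worth of mass concentrated at value $\beta^{-k}$, and the deletion rule guarantees there is a surviving earlier level $j=j(k)$ with $j < k + 3\log_\beta(\beta-1)$ (hence $\beta^{-j} > (\beta-1)^3\beta^{-k}$, a constant factor) and $b_j \ge b_k$.

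Concretely, I would argue that $\Delta$ is weakly majorized by a small multiple of $x$ itself, and then invoke the monotonicity lemma (Theorem~B.2 of \cite{MOA11}) quoted in the preliminaries. For each deleted level $k$, charge its $b_k$ coordinates of value $\beta^{-k}$ to the $b_k$ largest coordinates of the witnessing surviving level $j(k)$, each of value $\beta^{-j(k)} \ge (\beta-1)^3\beta^{-k}$; since distinct deleted levels $k,k'$ can be mapped to witnesses so that no surviving coordinate is over-charged by more than a bounded number of times (here one uses that the window $3\log_\beta(\beta-1)$ has bounded width, so only $O(1)$ deleted levels point at any given surviving level, and within a level one must be slightly careful to spread the charge — this is where I need to be careful), we get that some fixed constant multiple $c(\beta)\cdot x$ weakly majorizes $\Delta$. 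Then $\|\Delta\|_X \le c(\beta)\|x\|_X$, and one checks $c(\beta)\to 0$ as $\beta\to1$, with $c(\beta)\le 2(\beta-1)$ for $\beta$ close enough to $1$ (which is the regime of interest, since $\beta\in(1,2)$ is taken close to $1$).

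An alternative, cleaner route avoids the combinatorial charging: bound $\|\Delta\|_X$ directly via the dual norm. Let $z\in B_{X^*}\cap\calL$ achieve $\langle \Delta^*, z\rangle = \|\Delta\|_X$. Since $x\in\calR$ means $x = R(y)$ for some $y$ with $\|y\|_{X^*}\le 1$, and $R$ only inflates counts to powers of $\beta$ and drops tiny coordinates, the structure of $x$ controls how the level sizes grow. I would show that for any $z\in B_{X^*}$ the mass $\langle \Delta, z\rangle$ that $z$ can extract from the deleted levels is at most $2(\beta-1)$ times the mass it extracts from the surviving part of $x$: each deleted level $k$ has a surviving witness level $j(k)$ with strictly more coordinates of strictly (constant-factor) larger value, so the "profile" of $\Delta$ sits underneath a $(\beta-1)$-scaled copy of the profile of $x$ after accounting for the bounded window width. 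The main obstacle in either approach is the bookkeeping in the charging step: ensuring the window of width $3\log_\beta(\beta-1)$ (a negative quantity — note $\beta-1<1$, so $\log_\beta(\beta-1)<0$, meaning the witness level $j$ is genuinely \emph{below} $k$, i.e. of larger magnitude $\beta^{-j}>\beta^{-k}$) is wide enough that every deleted level finds a witness but narrow enough that each witness is reused only $O(1)$ times, so that the accumulated charge telescopes to $O(\beta-1)\|x\|_X$ rather than blowing up by the number of levels. Once that is set up, the conclusion follows from the weak-majorization/monotonicity lemma and the choice of $\beta$ close to $1$.
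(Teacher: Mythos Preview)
Your overall plan --- charge each deleted level to a surviving ``witness'' level and then invoke weak majorization --- is the right one and is essentially what the paper does, but two of your key claims are wrong, and together they miss the actual mechanism.

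First, the reduction $\|S(x)-x\|_X = \|\Delta\|_X$ is false. After zeroing coordinates and re-sorting, the coordinatewise difference $x - S(x)$ is \emph{not} a permutation of the deleted entries (try $x=(1,1,\tfrac12,\tfrac12,\tfrac12,\tfrac14)$ and delete the $\tfrac12$'s: then $x-S(x)=(0,0,\tfrac14,\tfrac12,\tfrac12,\tfrac14)$, not a rearrangement of $(\tfrac12,\tfrac12,\tfrac12,0,0,0)$). The inequality $\|x-S(x)\|_X \le \|\Delta\|_X$ \emph{does} hold, since decreasing rearrangement is a contraction for every symmetric norm, but you neither state nor prove this. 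The paper avoids the issue entirely: it works with $x - S(x)$ directly, noting that a coordinate can differ either because it was zeroed or because sorting shifted a smaller surviving value into its position; it then charges in two stages, first from the differing coordinates $C_k$ to the zeroed coordinates $T_j$, and then from $T_j$ to the witness level $B_{q_j}$, using Lemma~\ref{lem:sum} to aggregate.

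Second, and more seriously, your charging arithmetic is inverted. From $j < k + 3\log_\beta(\beta-1)$ one gets $\beta^{-j} > \beta^{-k}/(\beta-1)^3$, not $\beta^{-j} > (\beta-1)^3\beta^{-k}$: the witness value exceeds the deleted value by a factor $1/(\beta-1)^3$, which \emph{blows up} as $\beta\to 1$. Correspondingly, the window width $3|\log_\beta(\beta-1)|$ is not ``bounded'' --- it tends to infinity as $\beta\to 1$ --- and a single witness level can serve \emph{arbitrarily many} deleted levels, so ``each witness is reused only $O(1)$ times'' is simply false and cannot be made true. The mechanism that actually produces the factor $\beta-1$ is the opposite one: a witness coordinate at level $j$ may absorb charges from all deleted levels $k > j + 3|\log_\beta(\beta-1)|$, but those charges $\beta^{-k}$ form a geometric series whose first term is at most $(\beta-1)^3\beta^{-j}$; summing costs one factor of $1/(\beta-1)$, and the second stage of charging (coming from the sorting shift) contributes one more nested geometric sum and hence one more factor of $1/(\beta-1)$, leaving $y_i \le \beta(\beta-1)x_i$. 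The constant~$3$ in the window was chosen precisely to cancel these two factors of $1/(\beta-1)$, not to bound the number of reuses.
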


\begin{proof}
Consider some $k \in [2\log_{\beta}d - 1]$ and let $C_{k}(x) \subset [d]$ be set of coordinates where $x$ is at level $k$ and does not equal $S(x) = z$, i.e.,
\[ C_{k}(x) = \{ i \in [d] \mid x_i = \beta^{-k} \text{ and } x_i \neq z_i \}. \]
Additionally, for each $k \in [2\log_{\beta} d - 1]$, let $T_k \subset [d]$ be the coordinates at level $k$ in $x$ which trigger line 3 of $S(x)$, and thus become 0's in $z$ (we do not need to consider the case $k = 0$ since line 3 never triggers, in fact, we do not need to consider $k \in [-3\log_{\beta}(b-1)]$ either). In other words,
\[ T_k(x) = \{ i \in [d] \mid x_i = \beta^{-k} \text{ and at iteration $k$ of } S(x), b_k(z) \leq \max_{j < k + 3\log_{\beta}(\beta - 1)} b_j(z) \}. \]
Note that $T_1(x), \dots, T_{2\log_{\beta}d-1}(x)$ are all disjoint, and $|C_{k}(x)| \leq \sum_{j \in [k]} |T_j(x)|$, since whenever we zero out coordinates in levels less than or equal to $k$, $S(x)$ will shift the coordinates when we sort, causing $x_i \neq z_i$. Thus, we may consider an injection $s_{k} \colon C_k(x) \to \bigcup_{j \in [k]} T_j(x)$, which charges coordinates in $C_k(x)$ to coordinates which were zeroed out in line 3 of $S(x)$. 

Additionally, for each $j \in [2\log_{\beta} d - 1]$ where $T_j(x) \neq \emptyset$, we let $q_j$ be the integer between $0$ and $j + 3\log_{\beta}(\beta - 1)$ which triggered line 3 of $S(x)$ at $k = j$. More specifically, $0 \leq q_j \leq j + 3\log_{\beta}(\beta - 1)$ is the integer for which $b_{j}(x) \leq b_{q_j}(x)$.

Finally, for each $j \in [2\log_{\beta}d-1]$ where $T_j(x) \neq \emptyset$, we let $g_j \colon T_j(x) \to B_{q_j}(x)$ (recall that $B_{q_j}(x) \subset [d]$ are the indices of $x$ at level $q_j$) be an arbitrary injection. Such an injection exists because $b_{j}(x) \leq b_{q_j}(x)$.
We may consider the mapping $F \colon \bigcup_{k \in [2\log_{\beta}d - 1]} C_k(x) \to [d]$ where
\[ F(i) = g_j(s_k(i)) \qquad \text{where $k$ and $j$ are such that } i \in C_{k}(x) \text{ and } s_k(i) \in T_{j}(x). \]
See Figure~\ref{fig:charging} for an example of a coordinate being mapped by $F$. 
Let $y$ be the vector where we ``aggregate'' coordinates of $\bigcup_{k \in [2\log_{\beta}(d) - 1]} C_k(x)$ of $x$ according to the map $F$ according to Lemma~\ref{lem:sum}. In particular, we define $y \in \R^d$ where for $i \in [d]$, we let
\[ y_i = \sum_{i' \in F^{-1}(i)} x_{i'}. \]
Note that for each $i \in [d]$, $0 \leq (x - z)_i \leq x_i$, and $\bigcup_{k \in [2\log_{\beta}(d) - 1]} C_k(x) \subset [d]$ are the non-zero coordinates of $x - z$. Thus, from Lemma~\ref{lem:sum}, we conclude that $\|x - z\|_{X} \leq \|y\|_{X}$. We now turn to upper-bounding $\|y\|_{X}$. 

Fix some $i \in [d]$ where $x_i = \beta^{-j}$. Then
\begin{align*}
y_i &= \sum_{k > j - 3\log_{\beta}(\beta - 1)} \left(\sum_{k' \geq k} x_{s_{k'}^{-1}(g_k^{-1}(i))} \right),
\end{align*}
where we interpret $x_{s_{k'}^{-1}(g_k^{-1}(i))}$ as $0$ when $g_{k}^{-1}(i) = \emptyset$, or $s_{k'}^{-1}(g_k^{-1}(i)) = \emptyset$. Note that whenever $x_{s_{k'}^{-1}(g^{-1}_k(i))} \neq 0$, $x_{s_{k'}^{-1}(g^{-1}_k(i))} = \beta^{-k'}$. Thus, 
\begin{align*}
y_i &\leq \sum_{k > j - 3\log_{\beta}(\beta - 1)} \left(\sum_{k' \geq k} \beta^{-k'}\right) \leq \sum_{k > j - 3\log_{\beta}(\beta - 1)} \dfrac{\beta^{1 - k}}{\beta - 1} \leq \dfrac{\beta^{1 - j + 3\log_{\beta}(\beta - 1)}}{(\beta - 1)^2} \leq \beta (\beta - 1)\cdot\beta^{-j}.
\end{align*}
Recall that $x_i = \beta^{-j}$, so $\beta(\beta - 1) x$ weakly majorizes $y$, and thus 
\[ \|x - z\|_{X} \leq \|y\|_{X} \leq \beta(\beta - 1) \cdot\|x\|_{X}. \]
Hence, when $\beta \leq 2$, we have $\|x - z\|_{X} \leq 2 (\beta - 1) \|x\|_{X}$.
\end{proof}

\begin{figure}\label{fig:charging}
\centering
\begin{picture}(370, 70)
\put(0, 0){\includegraphics[width=0.9\linewidth, height=2.5cm]{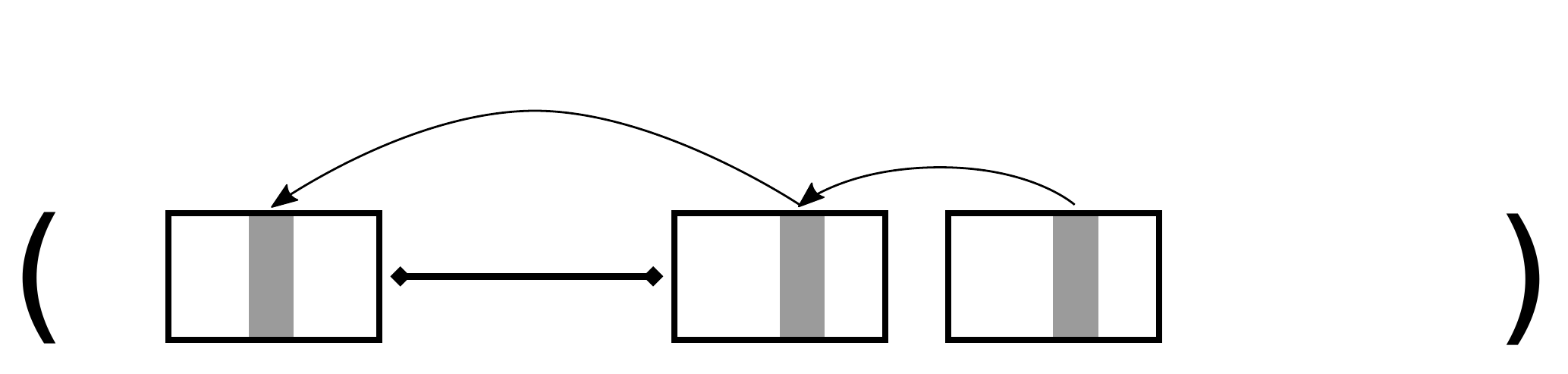}}
\put(-20, 15){$x = $}
\put(60, -7){$B_{q_j}(x)$}
\put(70, 15){$i$}
\put(108, 25){$-3\log_{\beta}(\beta - 1)$}
\put(213, 15){$\ell$}
\put(200, -7){$T_j(x)$}
\put(288, 15){$i'$}
\put(275, -7){$C_k(x)$}
\put(245, 43){$s_k$}
\put(140, 56){$g_j$}
\end{picture}
\caption{Example mapping of particular levels of $x$ with $F$ showing aggregation of coordinates. The coordinate $i' \in C_k(x)$ belongs to level $k$ and will be non-zero in $x - S(x)$. In particular, coordinate $i' \in C_k(x)$ is mapped via $s_k$ to coordinate $\ell \in T_j(x)$, which is $\beta^{-j}$ in $x$ and 0 in $S(x)$. Then coordinate $\ell$ is mapped to $i \in B_{q_j}(x)$, where $q_j$ is the level below $j$ with $b_{q_j} \geq b_j$. Thus, the composed map $F$ sends $i'$ to $i$.}
\end{figure}

\begin{proof}[Proof of Lemma~\ref{lem:net}]
We now prove the theorem by showing that the set
\[ N = \{ S(x) \in \R^d \mid x \in \calR \} \]
is a $\gamma$-net of $\calR$, and we give an upper bound on the size.  By Lemma~\ref{lem:aggregate-net}, $\|x - S(x)\|_{X} \leq 2(\beta-1) \|x\|_{X} \leq 8(\beta-1)$. So it suffices to give an upper bound on the size of $N$.

We bound from above the number of net points by an encoding argument. Let $z=S(x)$ and let 
\[ t_k = \dfrac{b_k(z)}{\mathop{\max}_{j < k+3\log_{\beta}(\beta-1)} b_j(z)}. \]
 Let $k^* \in \{ 0, \dots, 2\log_{\beta} d - 1\}$ be the smallest level $k$ with non-zero $b_k(z)$. For all $j>k^*-3\log_{\beta}(\beta-1)$, we either have $t_j(z)=0$ or $t_j(z)\ge 1$. Additionally, $z$ has $d$ coordinates, so 
\[ \prod_{j=k^*-3\log_{\beta}(\beta-1)}^{2\log_{\beta}d-1} \max(t_j, 1) \le d^{-3\log_{\beta}(\beta - 1)}, \] 
since terms of the product ``cancel'' except for at most $-3\log_{\beta}(\beta - 1)$, which are each at most $d$. 

We will encode $z \in N$ in three steps. In the first step, we use $2 \log_{\beta}d -1$ bits in order to encode whether $b_{k}(z) = 0$ or $b_{k}(z) > 0$. In the second step, we then encode the values $b_{k^* + j}(z)$ for $j \in \{0, \dots, 3\log_{\beta}(1/(\beta - 1))\}$. Finally, we go through $j > k^* + 3\log_{\beta}\left( \frac{1}{\beta - 1}\right)$, and encode $t_i$ using a prefix-free code, where writing $t_i$ uses at most $O\left(\log \max(t_i, 1) \right)$ many bits.
Thus, in total, the number of bits we use is
\begin{align*} 
&O\left( \log_{\beta} d + \log d \log_{\beta}\left( \frac{1}{\beta - 1} \right) + \sum_{j=k^*-3\log_{\beta}(\beta - 1)}^{2\log_{\beta}d-1} \log \max(t_j, 1)\right) \\
&\qquad= O\left( \dfrac{\log d \cdot \log\left( \frac{1}{\beta - 1}\right)}{\log \beta} + \log\left(\prod_{j=k^* - 3\log_{\beta}(\beta - 1)}^{2\log_{\beta}d - 1} \max(t_j, 1) \right)\right) \\
&\qquad = O\left( \dfrac{\log d \cdot \log\left( \frac{1}{\beta - 1}\right)}{\log \beta}\right).
\end{align*}
Thus, we obtain $N$ is a $8(\beta - 1)$-net, and the size of $N$ is $d^{O(\log(1/(\beta - 1))/\log\beta)}$.
\end{proof}

\section{Proof of the main theorem: ANN for symmetric norms}

We now prove our main result, Theorem~\ref{main_thm}.  The algorithm here
achieves approximation
$$O\left(\frac{\log^2\log n\cdot \log \log d}{\eps^5}\right).$$ We proceed by giving an algorithm for $\bigoplus_{\ell_{\infty}}^t \bigoplus_{\ell_1}^d T^{(c)}$ using
Theorem~\ref{thm:l-infty-ds}, Theorem 5.1.2 from \cite{A09}, and
Theorem~\ref{thm:maxProduct}.

\begin{lemma}
\label{lem:l-inftykyfan}
Fix some $c_1, \dots, c_d \geq 0$. Let $\bigoplus_{\ell_{\infty}} T^{(c)}$ be the space with $\|\cdot \|_{T, \infty}^{(c)} \colon \R^{d^2} \to \R$ seminorm where for every $x = (x_1, \dots, x_d) \in \R^{d^2}$,
\[ \|x\|_{T, \infty}^{(c)} = \max_{k \in [d]} c_k\|x_k\|_{T(k)}. \] 
For every $\eps \in (0, 1/2)$, there exists a data structure for ANN over $\|\cdot \|_{T, \infty}^{(c)}$ which achieves approximation $O\left( \frac{\log \log n \cdot \log \log d}{\eps^3} \right)$ using space $O\left( d^2\cdot n^{1 + \eps}\right)$ and query time $O\left(d^2\cdot n^{\eps}\right)$.
\end{lemma}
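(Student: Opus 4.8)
The plan is to recognize $\bigoplus_{\ell_\infty} T^{(c)}$ as an honest $\ell_\infty$-product of at most $d$ (scaled) top-$k$ normed spaces and then compose two black boxes already in hand: the per-factor ANN scheme for top-$k$ norms, and the $\ell_\infty$-product reduction of Theorem~\ref{thm:maxProduct}. First I would discard every block $k$ with $c_k = 0$: such a term contributes $0$ to $\max_{k\in[d]} c_k\|x_k\|_{T(k)}$, so it is irrelevant for ANN (and if \emph{all} $c_k = 0$ the seminorm is identically $0$ and the statement is trivial). After relabelling, assume $c_k > 0$ for all $k$ in an index set $K \subseteq [d]$, so that $\|\cdot\|_{T,\infty}^{(c)}$ is exactly the $\ell_\infty$-product $\bigoplus_{\ell_\infty}(X_k)_{k\in K}$ with $X_k = (\R^d,\ c_k\|\cdot\|_{T(k)})$.

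Next I would build an ANN data structure for each factor $X_k$. Scaling a norm by a positive constant merely rescales the distance threshold, so an ANN scheme for $c_k\|\cdot\|_{T(k)}$ with approximation $c$ is the same object as one for $\|\cdot\|_{T(k)}$ with approximation $c$. Such a scheme is available from Lemma~\ref{lem:kyfanmap} together with the argument proving Theorem~\ref{thm:dsorlicz}, which applies to top-$k$ norms precisely because Lemma~\ref{lem:kyfanmap} supplies a randomized linear map with the guarantees of Lemma~\ref{lem:orliczmap}: for every $\eps\in(0,1/2)$ it gives approximation $O(\log\log d/\eps^2)$, space $O(dn^{1+\eps})$, and query time $O(dn^\eps)$. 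Increasing the number of inner repetitions by an absolute constant factor boosts its success probability to $0.99$ (blowing up space and time only by $O(1)$), meeting the hypothesis of Theorem~\ref{thm:maxProduct}; the technical ``reports the point of smallest priority'' requirement is satisfied here exactly as in all of this paper's constructions (see the remark after Theorem~\ref{thm:maxProduct}). Crucially, the per-factor approximation $O(\log\log d/\eps^2)$ does not depend on $k$, so all factors share the same approximation $c := O(\log\log d/\eps^2)$, as required.

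Finally I would feed these $|K|\le d$ factor data structures into Theorem~\ref{thm:maxProduct} with its own accuracy parameter set to $\eps$. Since the product reduction multiplies the common per-factor approximation $c$ by $O(\eps^{-1}\log\log n)$ (cf. the $O(c\log\log n)$ bound discussed in Section~\ref{sec:proofOverview}), the resulting approximation is $O(\log\log d/\eps^2)\cdot O(\eps^{-1}\log\log n) = O(\log\log n\cdot\log\log d/\eps^3)$; the space is $S(n)\cdot O(|K|\,n^\eps) = O(dn^{1+\eps})\cdot O(dn^\eps) = O(d^2 n^{1+2\eps})$; and the query time is $O(Q(n)\log n + d^{O(1)}\log n) = O(d^2 n^{O(\eps)})$, using that one top-$k$ distance is computed in $O(d\log d)$ time and $|K|\le d$. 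Rescaling $\eps$ by an absolute constant turns these into space $O(d^2 n^{1+\eps})$ and query time $O(d^2 n^\eps)$ while changing the approximation only by a constant factor, which yields the claimed guarantees.

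I do not expect a genuine mathematical obstacle: the lemma is a bookkeeping composition of existing results. The only points demanding care are (i) the probability amplification of the top-$k$ factor data structure to $0.99$ and the accompanying ``smallest-priority'' technicality needed to legitimately invoke Theorem~\ref{thm:maxProduct}, and (ii) correctly tracking how the two $\eps$-parameters and the two approximation factors multiply, so that the final bound lands at $O(\log\log n\cdot\log\log d/\eps^3)$ rather than something weaker, and so that the $\poly(d)$ and $\log n$ overheads are absorbed after rescaling $\eps$.
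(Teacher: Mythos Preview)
Your proposal is correct and follows essentially the same approach as the paper: build a per-factor ANN for each scaled top-$k$ norm via Lemma~\ref{lem:kyfanmap} together with the argument of Theorem~\ref{thm:dsorlicz}, then combine the factors using the $\ell_\infty$-product reduction. The paper's proof is a terse four sentences and cites Theorem~\ref{thm:l-infty-ds} at the end, but the $\log\log n$ factor in the stated approximation confirms that the intended tool is indeed Theorem~\ref{thm:maxProduct}, exactly as you invoke it; your version is simply more explicit about the parameter bookkeeping, the $c_k=0$ case, and the success-probability amplification.
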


\begin{proof}
Given the randomized embedding from Lemma~\ref{lem:kyfanmap}, we can build a data structure for $c_k\|\cdot \|_{T(k)}$ achieving approximation $O(\frac{\log \log d}{\eps^2})$ using space $O(d^2n^{1 + \eps/2})$ and query time $O(d^2n^{\eps/2})$. This data structure works in the same way as in the proof of Theorem~\ref{thm:dsorlicz}. We handle the constant $c_k$ by rescaling the norm, and since the embeddings are linear, it does not affect the correctness of the data structure. Then we apply Theorem~\ref{thm:l-infty-ds}.
\end{proof}

\begin{lemma}
\label{lem:l-1-kyfan}
Fix some $c_1, \dots, c_d \geq 0$. Let $\bigoplus_{\ell_1} T^{(c)}$ be the space with $\| \cdot \|_{T, 1}^{(c)} \colon \R^{d^2} \to \R$ seminorm where $x = (x_1, \dots, x_m) \in \R^{d^2}$, 
\[ \|x\|_{T, 1}^{(c)} = \sum_{k=1}^d c_k \|x_k\|_{T(k)}. \]
For every $\eps \in (0, 1/2)$, there exists a data structure for ANN over $\|\cdot \|_{T,1}^{(c)}$ which achieves approximation $O\left( \frac{\log \log n \cdot\log\log d}{\eps^4} \right)$ using space $O(d^2\cdot n^{1+\eps})$ and query time $O(d^2\cdot n^{\eps})$. 
\end{lemma}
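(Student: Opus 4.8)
The plan is to mirror the proof of Lemma~\ref{lem:l-inftykyfan}, but replace the outer $\ell_\infty$-aggregation by the reduction from ANN over an $\ell_1$-product to ANN over its factors, i.e. Theorem 5.1.2 of~\cite{A09}. The factor spaces here are the $d$ copies $(\R^d, c_k\|\cdot\|_{T(k)})$, $k\in[d]$; factors with $c_k=0$ contribute nothing to $\|\cdot\|_{T,1}^{(c)}$ and are discarded.

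First I would build an ANN data structure for each factor. Rescaling the $k$-th factor norm by $1/c_k$ -- harmless, since it is applied identically to data points and queries -- reduces this to ANN under $\|\cdot\|_{T(k)}$, which is exactly the setting of Lemma~\ref{lem:kyfanmap}: with parameters $\mu=n^{-\Theta(\eps)}$, $\alpha=\Theta(1/\eps)$, $D=\Theta(\log\log d/\eps)$ it gives a randomized linear map into $\ell_\infty^d$, and composing it with Theorem~\ref{thm:l-infty-ds} together with $n^{\Theta(\eps)}$ independent repetitions (exactly as in the proof of Theorem~\ref{thm:dsorlicz}) yields, for every $k$, an ANN data structure for the $k$-th factor with approximation $c'=O(\log\log d/\eps^2)$, per-factor space $O(d\cdot n^{1+\eps/2})$, per-factor query time $O(d\cdot n^{\eps/2})$, and success probability $0.99$.

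Next I would feed these $d$ factor data structures into Theorem 5.1.2 of~\cite{A09}. That reduction runs on the same ``divide by a random scalar'' idea as Lemma~\ref{lem:orliczmap} with $G(t)=t$, now applied at the level of \emph{whole factors}: one independently rescales the $i$-th factor metric by $1/u_i$ with $u_i$ drawn from the distribution with CDF $1-\mu^{t}$, so that an $\ell_1$-sum of value $\le r$ collapses to an $\ell_\infty$-max $\le r$ with probability $\ge\mu$ while an $\ell_1$-sum exceeding $\alpha D' r$ gives an $\ell_\infty$-max exceeding $D' r$ with probability $\ge 1-\mu^{\alpha}$; one then applies the $\ell_\infty$-product data structure of Theorem~\ref{thm:maxProduct} to the rescaled factors and repeats $n^{\Theta(\eps)}$ times. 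Since ANN for a rescaled factor is just ANN for the factor, Theorem~\ref{thm:maxProduct} applies with per-factor approximation $c'$ and gives approximation $A=O(\eps^{-1}\log\log n\cdot c')$ for the $\ell_\infty$-product; taking $D'=A$ so that ``far'' points in the $\ell_1$-product remain unreturned, the final approximation is $\alpha D'=O(\eps^{-2}\log\log n\cdot c')=O(\log\log n\cdot\log\log d/\eps^4)$. The space is (per-factor space)$\cdot n^{O(\eps)}=O(d^2\cdot n^{1+O(\eps)})$ and the query time $O(d^2\cdot n^{O(\eps)})$ once the $\poly(\log n)$ overheads of Theorems~\ref{thm:l-infty-ds} and~\ref{thm:maxProduct} and the $\poly(d)=n^{o(1)}$ factors are absorbed; rescaling $\eps$ by a constant gives the statement.

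The routine parts are the probability bookkeeping (a union bound over the $\le n$ far points within each repetition, plus a constant probability that some repetition preserves the near point, exactly as in Theorem~\ref{thm:dsorlicz}) and collapsing the nested $n^{O(\eps)}$ and $\poly(\log n)$ overheads into the stated bounds. The one point that genuinely needs care -- and the main obstacle to a fully clean write-up -- is the technical ``smallest-priority'' reporting condition that the product reductions of~\cite{I02,A09} require of each factor data structure: this holds for the structures we build, since they sit on top of Theorem~\ref{thm:l-infty-ds}, which has this property, and it is preserved under the linear embeddings of Lemma~\ref{lem:kyfanmap} and the scalar rescalings used in the $\ell_1$-reduction.
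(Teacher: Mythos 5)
Your proposal is correct and follows essentially the same route as the paper, whose proof is simply to combine Lemma~\ref{lem:l-inftykyfan} (per-factor top-$k$ data structures assembled via the $\ell_\infty$-product reduction) with Theorem 5.1.2 of~\cite{A09} for the $\ell_1\to\ell_\infty$ reduction; you have merely unrolled those two black boxes, and your accounting of the approximation factors ($O(\log\log d/\eps^2)$ per factor, times $O(\eps^{-1}\log\log n)$ for the $\ell_\infty$-product, times $O(\eps^{-1})$ for the $\ell_1$ reduction) matches the stated bound.
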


\begin{proof}
The proof follows from Theorem 5.1.2 in \cite{A09} and Lemma~\ref{lem:l-inftykyfan}.
\end{proof}

Finally, we are combine the above results to get an improved algorithm
for general symmetric norms.

\begin{theorem}\label{thm:main-formal}
For every $d$-dimensional symmetric norm $\|\cdot\|_{X}$ and every
$\eps \in (0, 1/2)$, there exists a data structure for ANN over
$\|\cdot \|_{X}$ which achieves approximation $O(\frac{\log^2 \log
n \log \log d}{\eps^5})$ using space $d^{O(1)} \cdot O(n^{1
+ \eps})$ and query time $d^{O(1)} \cdot O(n^{\eps})$.
\end{theorem}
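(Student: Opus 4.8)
The plan is to assemble Theorem~\ref{thm:main-formal} directly from the three ingredients established earlier: the structural embedding of Theorem~\ref{thm:symnorm}, the ANN data structure for a single $\ell_1$-product of top-$k$ norms from Lemma~\ref{lem:l-1-kyfan}, and the $\ell_\infty$-product reduction of Theorem~\ref{thm:maxProduct}. First I would invoke Theorem~\ref{thm:symnorm} with the distortion parameter set to a small constant $\delta$, say $\delta = 1/2$: this gives a linear embedding of $\|\cdot\|_X$ into $Y = \bigoplus_{\ell_\infty}^t \bigoplus_{\ell_1}^d T^{(c)}$ with distortion $1+\delta = O(1)$, where $t = d^{O(\log(1/\delta)\delta^{-1})} = d^{O(1)}$. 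Since linear embeddings with constant distortion preserve ANN up to a constant factor in the approximation (just solve ANN in $Y$ on the image of the dataset with threshold scaled by the distortion), it suffices to build an ANN data structure for $Y$.

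Next I would observe that $Y$ is an $\ell_\infty$-product of $t$ metric spaces, each of which is a copy of $\bigoplus_{\ell_1}^d T^{(c^{(i)})}$ for some nonnegative coefficient vector $c^{(i)} \in \R^d$. By Lemma~\ref{lem:l-1-kyfan}, each such factor admits an ANN data structure with approximation $O\bigl(\frac{\log\log n \cdot \log\log d}{\eps^4}\bigr)$, space $O(d^2 n^{1+\eps})$, and query time $O(d^2 n^{\eps})$; rescaling $\eps$ by a constant I may assume the success probability is $0.99$ as required by Theorem~\ref{thm:maxProduct} (run $O(1)$ independent copies if needed). Then I apply Theorem~\ref{thm:maxProduct} to the $t$-fold $\ell_\infty$-product with this value $c = O\bigl(\frac{\log\log n\log\log d}{\eps^4}\bigr)$ as the per-factor approximation: the resulting data structure for $Y$ has approximation $O(\eps^{-1}\log\log n) \cdot c = O\bigl(\frac{\log^2\log n \cdot \log\log d}{\eps^5}\bigr)$, query time $O(Q(n)\log n + d k \log n) = d^{O(1)} \cdot O(n^{\eps})$ (here $k = t = d^{O(1)}$, and the ``distance computation time'' in each factor is $\poly(d)$), and space $S(n)\cdot O(k n^{\eps}) = d^{O(1)}\cdot O(n^{1+\eps})$. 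Composing with the constant-distortion embedding from the first step only multiplies the approximation by $O(1)$ and changes the space/time bounds by constant factors, yielding exactly the claimed guarantees.

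One technical point to handle carefully is the ``smallest priority'' condition mentioned after Theorem~\ref{thm:maxProduct}: the data structures from Lemma~\ref{lem:l-1-kyfan} (and ultimately from Theorem~\ref{thm:l-infty-ds}) must report the point with smallest priority, which the paper notes is satisfied in all its scenarios, so I would simply cite that. A second point is that $Y$ is a \emph{semi}norm rather than a norm (the embedding collapses coordinates below threshold $\tau$), but this is harmless: the embedding guarantee~\eqref{eq:embedding} controls the seminorm of the image of any $x$ by $\|x\|_X$ from both sides, and the ANN reduction only ever evaluates distances between images of dataset points and the image of the query, so the seminorm behaves exactly like a norm on the relevant set of differences. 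The main ``obstacle'' is really just bookkeeping — tracking how the $\eps$'s and the two extra $\log\log$ factors (one from Lemma~\ref{lem:l-inftykyfan}/Theorem~\ref{thm:l-infty-ds}, one from the product reduction) accumulate through the composition — rather than any genuine difficulty; all the substantive work was done in Theorem~\ref{thm:symnorm} and Lemma~\ref{lem:l-1-kyfan}.
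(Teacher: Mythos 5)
Your proposal is correct and follows exactly the same route as the paper's (very terse) proof: embed via Theorem~\ref{thm:symnorm} with constant distortion, handle each $\bigoplus_{\ell_1}^d T^{(c)}$ factor with Lemma~\ref{lem:l-1-kyfan}, and combine via Theorem~\ref{thm:maxProduct}, with the approximation factors multiplying to $O(\eps^{-5}\log^2\log n\cdot\log\log d)$. Your accounting of the parameters and the remarks on the priority condition and the seminorm issue are more detailed than what the paper writes, but substantively identical.
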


\begin{proof}
Given Theorem~\ref{thm:symnorm}, we embed $\|\cdot\|_{X}$ into $\bigoplus_{\ell_{\infty}} \bigoplus_{\ell_1} T^{(c)}$ with approximation $(1 \pm \frac{1}{10})$. The result from Lemma~\ref{lem:l-1-kyfan} allows we to apply Theorem~\ref{thm:maxProduct} to obtain the desired data structure.
\end{proof}

Theorem~\ref{thm:main-formal} implies our main result
Theorem~\ref{main_thm} stated in the introduction.

\section{Lower bounds}

\label{sec:linearLB}

In this section, we show that our techniques do not extend to general
norms. In particular, we show there does not exist a \emph{universal}
norm $U$ for which any norm embeds (possibly randomized) with constant
distortion, unless the blow-up in dimension is exponential. Hence the
result from below applies to cases of $U=\ell_\infty$ as well as an
(low-dimensional) product spaces.

\begin{theorem}
\label{thm:linearLB}
For any $\eps > 0$, let $U$ be a $d'$-dimensional normed space such that for any $d$-dimensional normed space $X$, there exists a distribution $\calD$ supported on linear embeddings $f \colon \R^d \to \R^{d'}$ where for every $x \in \R^d$, 
\[ \|x\|_{X} \leq \|f(x)\|_{U} \leq D \|x\|_{U} \]
holds with probability at least $\frac{2}{3}$ over the draw of $f \sim \calD$, for $D = O(d^{1/2 - \eps})$. Then $d' = \exp\left( \Omega(d^{2\eps})\right)$. 
\end{theorem}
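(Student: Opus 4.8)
The plan is to show that if $U$ is a universal host space that accepts every $d$-dimensional norm via a randomized linear embedding with distortion $D$, then $U$ must be high-dimensional by a counting/volume argument over the space of $d$-dimensional norms. The key tension: there are ``many'' (in an appropriate metric-entropy sense) essentially non-isometric $d$-dimensional normed spaces, but a low-dimensional $U$ cannot host all of them with small distortion, because every $d$-dimensional subspace-section of $U$ looks geometrically constrained. Concretely, I would first observe that a randomized linear embedding with distortion $D$ that succeeds with probability $2/3$ yields, by an averaging/support argument, a \emph{deterministic} linear embedding $f \colon X \to U$ with distortion $D$ for each fixed $X$ (one just needs existence of a good $f$ in the support of $\calD$, which follows since the ``good event'' has positive probability). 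So it suffices to rule out a $d'$-dimensional $U$ into which every $d$-dimensional $X$ linearly $D$-embeds, with $d' \le \exp(o(d^{2\eps}))$.

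The heart of the argument is a volumetric/packing estimate in the Banach--Mazur-type flavor. A linear $D$-embedding of $X$ into $U$ means there is a $d$-dimensional subspace $W \subseteq \R^{d'}$ and a linear iso $T \colon X \to W$ with $\|x\|_X \le \|Tx\|_U \le D\|x\|_X$; thus the unit ball of $X$ is sandwiched, up to factor $D$, between $T^{-1}(B_U \cap W)$ and $D\cdot T^{-1}(B_U\cap W)$. Hence every $X$ is $D$-close in Banach--Mazur distance to the $d$-dimensional section $W \cap B_U$ of a fixed convex body $B_U \subseteq \R^{d'}$. The strategy is: (i) lower-bound the number of $d$-dimensional normed spaces that are pairwise at Banach--Mazur distance $> D^2$ — a standard fact is that there are at least $\exp(d^{c})$ such spaces for a suitable power, and more precisely one can produce $\exp\bigl(\Omega(d^{2\eps})\bigr)$-many that remain pairwise far even at distance $D = d^{1/2-\eps}$, using e.g. random subspaces of $\ell_1^N$ or $\ell_\infty^N$, or random-polytope constructions, whose Banach--Mazur distances concentrate; (ii) upper-bound the number of $d$-dimensional sections of $B_U$ up to Banach--Mazur distance $D$ by a net argument over the Grassmannian $G_{d',d}$, whose metric entropy at scale depending on $D$ is $(d'/\text{scale})^{O(d d')}$, so the count is at most $\exp\bigl(O(d d' \log(d' D))\bigr)$ — crucially \emph{polynomial} in $d'$ in the exponent, times $d$. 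Matching (i) against (ii): if every far-apart $X$ must be captured by a distinct section, then $\exp(\Omega(d^{2\eps})) \le \exp(O(d\, d' \log(d'D)))$, which forces $d' \ge \exp(\Omega(d^{2\eps}))/\poly$, i.e. $d' = \exp(\Omega(d^{2\eps}))$ after absorbing logs. One must be slightly careful that a single section $W$ can absorb all spaces within Banach--Mazur distance $2D$ of it (not just distance $D$), so one nets the Grassmannian at the right scale and uses the triangle inequality for $d_{BM}$; the counting inequality then reads $\#\{\text{far spaces}\} \le \#\{\text{sections up to }d_{BM}\le 2D^2\}$.

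The main obstacle — and the step that needs the most care — is part (i): producing a family of $\exp\bigl(\Omega(d^{2\eps})\bigr)$ $d$-dimensional normed spaces that are pairwise at Banach--Mazur distance strictly greater than (roughly) $D^2 = d^{1-2\eps}$. The generic bound ``$\exp(\Omega(d))$ spaces pairwise at distance $\Omega(\sqrt d)$'' (Gluskin-type results) is about distance $\sqrt d$, and here we need the family to survive at the \emph{smaller} scale $d^{1/2-\eps}$, which means one expects fewer of them — exactly the $\exp(d^{2\eps})$ count. The natural route is to take random $d$-dimensional subspaces (or random quotients) of $\ell_\infty^N$ for $N = d^{O(1)}$ chosen so that a random such space has, with overwhelming probability, Banach--Mazur distance $\ge d^{1/2-\eps}$ from any \emph{fixed} other space with the complementary probability small enough to union-bound over $\exp(d^{2\eps})$ pairs; estimating this concentration is where the real work sits, and it is presumably done via a net over the relevant orthogonal group together with large-deviation bounds on the operator norms of Gaussian-type matrices, tuned so the failure probability beats the number of pairs. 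Once that packing family is in hand, the contradiction with the Grassmannian-net upper bound is essentially a one-line comparison of exponents, yielding $d' = \exp\bigl(\Omega(d^{2\eps})\bigr)$ and, since $U=\ell_\infty^{d'}$ and iterated product spaces are low-dimensional, ruling out the naive ``embed into $\ell_\infty$'' approach for general norms.
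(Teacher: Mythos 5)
Your approach is genuinely different from the paper's, which is an information-theoretic reduction to the one-way communication complexity of \textsc{Index}: Alice builds a norm $\|x\|_a = C\max_{y\in P: a_y=1}|\langle x,y\rangle|$ from a family $P$ of $\exp(\Omega(d^{2\eps}))$ near-orthogonal sign vectors, transmits the embedding matrix restricted to a spanning subset in $\tilde O(dd')$ bits, and Bob evaluates $\|My\|_U$ by linearity to recover one bit of $a$, forcing $dd'\ge\tilde\Omega(|P|)$. Your packing plan, however, has two gaps that I believe are fatal. First, the derandomization step is invalid: the hypothesis is a \emph{per-point} guarantee (for each fixed $x$, the distortion bound holds with probability $2/3$ over $f\sim\calD$), and this does not yield a single $f$ in the support that $D$-embeds all of $X$ --- the good events for different $x$ need not intersect. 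Since the paper's own positive results (Lemmas~\ref{lem:orliczmap} and~\ref{lem:kyfanmap}) produce exactly such per-point randomized embeddings, a lower bound only against deterministic (or uniformly successful) embeddings would prove a strictly weaker statement than the theorem. The communication protocol sidesteps this because Bob holds a single vector $y$ and needs the guarantee only at $y$.

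Second, and independently, your final comparison of exponents is off by an exponential. A net of the Grassmannian $G_{d',d}$ at the relevant scale has size $\exp\bigl(O(dd'\log(d'D))\bigr)$, and your packing family has size $\exp\bigl(\Omega(d^{2\eps})\bigr)$; the inequality $\exp\bigl(\Omega(d^{2\eps})\bigr)\le \exp\bigl(O(dd'\log(d'D))\bigr)$ yields only $d'\gtrsim d^{2\eps-1}/\log(d'D)$, which is vacuous for $\eps<1/2$ --- not $d'=\exp\bigl(\Omega(d^{2\eps})\bigr)$. To force $d'$ to be exponentially large by a Banach--Mazur packing argument you would need a \emph{doubly} exponential family, i.e.\ $\exp\bigl(\exp(\Omega(d^{2\eps}))\bigr)$ spaces pairwise far at scale roughly $D^2$, which you neither construct nor claim, and which is far from clear at the scale $D=d^{1/2-\eps}$. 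The paper's family $\{\|\cdot\|_a\}_{a\in\{0,1\}^P}$ does have doubly exponential cardinality, but its members are distinguished only along single directions $y\in P$ rather than in Banach--Mazur distance; that is precisely why the argument is run through communication complexity rather than through a volumetric packing of the Banach--Mazur compactum.
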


We will prove the above theorem by showing that if there exists a universal normed space $U$ satisfying the conditions of Theorem~\ref{thm:linearLB} above, then two parties, call them Alice and Bob, can use the embeddings to solve the communication problem \textsc{Index} with only a few bits. Let $U$ be a proposed $d'$-dimensional normed space satisfying the conditions of Theorem~\ref{thm:linearLB}. By the John's theorem \cite{B97a}, we may apply a linear transform so that:
\[ B_{\ell_2} \subset B_U \subset \sqrt{d'} B_{\ell_2} \]

\begin{lemma}
For any $\eps > 0$, there exists a set of $\exp\left( \Omega(d^{2\eps}) \right)$ many points on the unit sphere $S^{d-1}$ such that pairwise inner-products are at most $\frac{1}{d^{1/2 - \eps}}$. In fact, these points may consist of points whose coordinates are $\pm \frac{1}{\sqrt{d}}$. 
\end{lemma}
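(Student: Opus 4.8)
The plan is to give an explicit construction followed by a standard probabilistic-method / coding-theory argument. First, I would realize the desired point set as (scaled) codewords of a binary error-correcting code: consider vectors $v \in \{\pm 1/\sqrt{d}\}^d$, which automatically lie on $S^{d-1}$. For two such vectors $v, w$ coming from sign patterns $a, b \in \{\pm 1\}^d$, one has $\langle v, w\rangle = \tfrac{1}{d}\sum_{i} a_i b_i = 1 - \tfrac{2}{d}\,\Delta(a,b)$, where $\Delta$ is the Hamming distance between $a$ and $b$. Hence requiring $\langle v, w\rangle \le d^{-(1/2-\eps)}$ is exactly requiring $\Delta(a,b) \ge \tfrac{d}{2}\bigl(1 - d^{-(1/2-\eps)}\bigr) = \tfrac{d}{2} - \tfrac12 d^{1/2+\eps}$. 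So it suffices to produce a binary code of block length $d$, with $\exp(\Omega(d^{2\eps}))$ codewords, whose minimum distance is at least $\tfrac{d}{2} - O(d^{1/2+\eps})$.

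The main step is to exhibit such a code. I would do this by the probabilistic method: pick $M$ codewords $a^{(1)}, \dots, a^{(M)}$ independently and uniformly at random from $\{\pm1\}^d$. For a fixed pair $j \ne \ell$, the coordinate-wise products $a^{(j)}_i a^{(\ell)}_i$ are i.i.d.\ uniform $\pm 1$, so $\langle v^{(j)}, v^{(\ell)}\rangle = \tfrac1d \sum_i a^{(j)}_i a^{(\ell)}_i$ is an average of $d$ i.i.d.\ Rademacher variables. By a Chernoff/Hoeffding bound,
\[
\Pr\Bigl[\,|\langle v^{(j)}, v^{(\ell)}\rangle| > d^{-(1/2-\eps)}\,\Bigr] \le 2\exp\bigl(-\tfrac12 d\cdot d^{-(1-2\eps)}\bigr) = 2\exp\bigl(-\tfrac12 d^{2\eps}\bigr).
\]
A union bound over the $\binom{M}{2} < M^2$ pairs shows that if $M^2 \cdot 2\exp(-\tfrac12 d^{2\eps}) < 1$, i.e.\ $M = \exp\bigl(\Theta(d^{2\eps})\bigr)$ with a small enough constant, then with positive probability \emph{all} pairwise inner products have absolute value at most $d^{-(1/2-\eps)}$, and in particular at most $d^{-(1/2-\eps)}$ as required. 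Fixing such a realization yields the claimed set of $\exp(\Omega(d^{2\eps}))$ points on $S^{d-1}$ with coordinates $\pm 1/\sqrt d$.

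The argument is essentially routine — the only thing to be careful about is the parameter bookkeeping in the Chernoff bound, making sure that the deviation we ask for, $d^{-(1/2-\eps)}$, is large enough relative to $1/\sqrt d$ that the tail bound $\exp(-\tfrac12 d^{2\eps})$ beats the number of pairs $\exp(O(d^{2\eps}))$ with room to spare. This is why the exponent comes out as $d^{2\eps}$ rather than $d$. I would also remark that one could instead invoke a known explicit family (e.g.\ dual BCH codes or a Gilbert–Varshamov-type bound) to get a deterministic construction with the same parameters, but the random construction above is the cleanest. Once the lemma is in hand, the points $v^{(1)}, \dots, v^{(M)}$ become the "hard instance'' on which Alice and Bob will run the universal embedding of $U$ to solve \textsc{Index}, completing the proof of Theorem~\ref{thm:linearLB}.
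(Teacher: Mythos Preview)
Your proposal is correct and takes essentially the same approach as the paper: pick $\exp(\Omega(d^{2\eps}))$ random $\pm\frac{1}{\sqrt d}$ vectors, apply a concentration bound (you use Hoeffding, the paper uses Bernstein) to get a $2\exp(-\Omega(d^{2\eps}))$ tail on any one pairwise inner product, and union bound over all pairs. The coding-theory reformulation you add up front is a pleasant way to think about it but is not used in the actual argument, which is the same probabilistic-method calculation as in the paper.
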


\begin{proof}
Consider picking two random points $x, y \in S^{d-1}$ where each entry is $\pm \frac{1}{\sqrt{d}}$. Then by Bernstein's inequality,
\[ \Pr_{x, y} \left[ |\langle x, y \rangle| \geq \frac{1}{d^{1/2 - \eps}} \right] \leq 2\exp\left( -\Omega(d^{2\eps}) \right) \]
We may pick $\exp\left( \Omega(d^{2\eps})\right)$ random points and union bound over the probability that some pair has large inner product.
\end{proof}

Fix $\eps > 0$ and $C = d^{1/2 - \eps}$, and let $P$ be set a set of unit vectors with pairwise inner-product at most $\frac{1}{C}$ of size $\exp(\Omega(d^{2\eps}))$. For each $a \in \{0, 1\}^{P}$ consider the following norm:
\[ \|x\|_{a} = C\cdot \max_{y \in P : a_y = 1} |\langle x, y \rangle|. \]
Assume there exists a randomized linear embedding $f \colon \R^d \to \R^{d'}$ with the following guarantees: 
\begin{itemize}
\item For every $x \in \R^d$, $\|x\|_{a} \leq \|f(x)\|_{U} \leq D \|x\|_{a}$ with probability at least $\frac{2}{3}$. 
\end{itemize}

Note the embedding $f$ can be described by $M$, a $d' \times d$ matrix of real numbers. Additionally, we consider rounding each entry of $M$ by to the nearest integer multiple of $\frac{1}{\poly(d)}$ to obtain $M'$. For each $x \in S^{d-1}$, $\|(M - M') x \|_{U} \leq \|(M - M') x\|_{2} \leq \frac{1}{\poly(d)}$. Thus, we may assume each entry of $M$ is an integer multiple of $\frac{1}{\poly(d)}$, and lose $(1 \pm \frac{1}{\poly(d)})$ factor in the distortion of the embedding for vectors in $B_2$. 

We now show that the existence of the randomized embedding implies a one-way randomized protocol for the communication problem \textsc{Index}. We first describe the problem. In an instance of \textsc{Index}:
\begin{itemize}
\item Alice receives a string $a \in \{0, 1\}^n$. 
\item Bob receives an index $i \in [n]$. 
\item Alice communicates with Bob so that he can output $a_i$. 
\end{itemize}

\begin{theorem}[\cite{KNR99}]
\label{thm:indexing}
The randomized one-way communication complexity of \textsc{Index} is $\Omega(n)$. 
\end{theorem}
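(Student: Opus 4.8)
\textbf{Proof proposal for Theorem~\ref{thm:indexing}.} The plan is to prove the $\Omega(n)$ lower bound by the standard two-step route: first reduce, via Yao's minimax principle, to a distributional lower bound under the uniform product distribution on inputs; then bound the communication of any deterministic one-way protocol on that distribution by an information-theoretic argument using the chain rule and Fano's inequality. (This is the argument of~\cite{KNR99}; I only sketch it, leaving routine entropy manipulations implicit.)

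First I would fix a constant $\delta>0$ and suppose $\Pi$ is a randomized one-way protocol for \textsc{Index} that errs with probability at most $1/2-\delta$ on \emph{every} input $(a,i)$; the goal is to show its worst-case communication cost $c$ is $\Omega(n)$ (with the hidden constant depending on $\delta$). By Yao's principle it suffices to produce a distribution $\mu$ on inputs for which every \emph{deterministic} one-way protocol of error at most $1/2-\delta$ under $\mu$ has cost $\Omega(n)$: averaging $\Pi$'s success probability over $\mu$ and applying Markov's inequality lets us fix all of Alice's and Bob's coins to a value for which the resulting deterministic protocol still has cost $\le c$ and error $\le 1/2-\delta$ under $\mu$. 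Take $\mu$ to be the product distribution in which $a$ is uniform on $\{0,1\}^n$ and $i$ is uniform and independent on $[n]$.

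Next, consider any deterministic one-way protocol on $\mu$. Here Alice's message is a function $M=M(a)$ of her input alone, and Bob outputs a guess $g(M,i)$; one-wayness is exactly what makes $M$ independent of $i$. Since $M$ takes at most $2^c$ distinct values, $c \ge H(M) \ge I(M;a) = H(a) - H(a\mid M) = n - H(a\mid M)$, and by subadditivity of entropy $H(a\mid M)\le \sum_{i=1}^{n} H(a_i\mid M)$. Let $p_i = \Pr_a[\,g(M(a),i)\neq a_i\,]$ be the protocol's error on coordinate $i$; Fano's inequality gives $H(a_i\mid M)\le H_2(p_i)$, where $H_2(p) = -p\log_2 p - (1-p)\log_2(1-p)$ is the binary entropy, using that $H_2$ is symmetric about $1/2$ to absorb the case $p_i>1/2$. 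The assumption that the error under $\mu$ is at most $1/2-\delta$ is precisely $\tfrac1n\sum_i p_i \le 1/2-\delta$, so by concavity of $H_2$, $\tfrac1n\sum_i H_2(p_i)\le H_2(1/2-\delta)$. Putting the pieces together, $c \ge n\bigl(1-H_2(1/2-\delta)\bigr) = \Omega(n)$, since $1-H_2(1/2-\delta)=\Theta(\delta^2)>0$.

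The delicate points are not in any single calculation but in the setup: one must choose the hard distribution so that Bob's target bits $a_i$ are genuinely independent (the uniform product distribution), one must use one-wayness to ensure $M$ depends only on $a$ (a two-way protocol lets Bob send $i$ in $O(\log n)$ bits and trivializes \textsc{Index}), and one must route all randomness through Yao's principle so the remaining analysis is purely about deterministic protocols. I expect the main obstacle a reader would flag is justifying the inequality $c\ge H(M)\ge I(M;a)$ and the per-coordinate application of Fano cleanly; both are standard, and an alternative combinatorial ``fooling-set''/counting argument avoids entropy entirely if preferred. Since we only need this theorem as a black box in Section~\ref{sec:linearLB}, we simply cite~\cite{KNR99}.
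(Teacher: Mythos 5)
The paper does not prove this statement at all --- it is imported as a black box from \cite{KNR99}, so there is no internal proof to compare against. Your sketch is the standard (and correct) information-theoretic argument for the \textsc{Index} lower bound: Yao's principle to reduce to deterministic protocols under the uniform product distribution, then $c \ge I(M;a) \ge n - \sum_i H(a_i \mid M)$ via subadditivity, with Fano plus concavity and monotonicity of $H_2$ on $[0,1/2]$ giving the per-coordinate bound; this matches the cited source's approach, and citing \cite{KNR99} as the paper does is the right call.
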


\def\span{\text{span}}

We give a protocol for \textsc{Index}:
\begin{enumerate}
\item Suppose Alice has input $a \in \{0, 1\}^P$. She will
  generate the norm $\|\cdot\|_a$ described above. Note that $f \sim \calD$ has that for each $x \in \R^d$, the embedding preserves the norm of $x$ up to $D$ with probability $\frac{2}{3}$. In particular, if Bob's input is $i \in |P|$, corresponding to point $y$, then an embedding $f \in \calD$, which we represent as a $d' \times d$ matrix $M$, satisfies: 
 \[\|y\|_a \leq \|My\|_{U} \leq D\|y\|_{a} \]
 with probability $\frac{2}{3}$. In particular, with probability $\frac{2}{3}$:
\begin{itemize}
\item If $a_i = 0$, then $\|y\|_a \leq 1$, which implies $\|My\|_{U} \leq D$.
\item If $a_i = 1$, then $\|y\|_a \geq C$, which implies $\|My\|_{U}\geq C$. 
\end{itemize}
Alice computes the set $P_{c} \subset P$ of vectors which satisfy the above property (i.e. the embedding $M$ preserves increases the norm by at most a factor $D$).
\item Alice finds a subset $B \subset P_c$ of linearly independent
vectors such that every $x \in P_c$ we have $x \in \span(B)$. Note
that $|B| \leq d$ and for all $x \in B$,
$\|Mx\|_{2} \leq \sqrt{d'} \|Mx\|_{U} \leq C\cdot
D \cdot \sqrt{d'}$. Therefore, each $Mx \in \R^{d'}$ can be written
with $\tilde{O}(d')$ bits. So Alice sends the set $B$, as well as
$Mx$ for each $x \in B$ using $\tilde{O}(dd')$ bits.
\item In order for Bob to decode $a_i$, he first checks whether $y \in \span(B)$, and if not, he guesses. If $y \in \span(B)$, which happens with probability $\frac{2}{3}$, then Bob writes
\[ y = \sum_{b_i \in B} c_i b_i \]
and $My = \sum_{b_i \in B} c_i Mb_i$. If $\|My\|_{U} \leq D$, then $a_i = 0$ and if $\|My\|_{U} \geq C$ then $a_i = 1$. Thus, if $D < \frac{C}{2}$, Bob can recover $a_i$ with probability $\frac{2}{3}$. 
\end{enumerate}

Alice communicates $\tilde{O}(dd')$ bits, and Bob is able to recover $a_i$ with probability $\frac{2}{3}$. By Theorem~\ref{thm:indexing}, $dd' \geq \tilde{\Omega}\left(|P|\right)$, which in turn implies $d' \geq \exp\left( \Omega(d^{2\eps}) \right)$.

\section*{Acknowledgments}

We thank Piotr Indyk for suggesting the proof of
Theorem~\ref{thm:linearLB}. We also thank Assaf Naor for discussions
related to these research questions. Thanks to Cl\'{e}ment Canonne
for pointing us to relevant literature about symmetric norms.

This work is supported in part by Simons Foundation, Google, and NSF
(CCF-1617955), as well as by the NSF Graduate Research Fellowship
(Grant No. DGE-16-44869).

{\small
\bibliography{bibfile}
\bibliographystyle{alpha}
}

\appendix

\section{Bounding space in Theorem \ref{thm:maxProduct}}
\label{apx:space}

Here we justify the space bound of the algorithm from Theorem
\ref{thm:maxProduct} (from \cite{I02}). We note that the improved
bound was also claimed in \cite{AIK09}, albeit without a proof.

First of all, as suggested at the end of Section~3 of \cite{I02}, one
modifies the algorithm to obtain space of the form of
$n^{1+\eps}$, at the expense of increasing the approximation to
$O(\eps^{-1}\log\log n)$. This is done by replacing the conditions in
Case 2 and 3 by respectively:
$$
\left[
\frac{|B(s,R(s)+c+1)\cap S_{|i}|}{|S_{|i}|}
\right]^{1+\eps}
<
\frac{|B(s,R(s))\cap S_{|i}|}{|S_{|i}|},
$$
and
$$
\left[
\frac{|S_{|i}-B(p,R')|}{|S|}
\right]^{1+\eps}
<
\frac{|S_{|i}-B(s,R'+2)|}{|S|}.
$$

With the remaining algorithm being precisely the same, our only task
here is to argue the space bound. First of all we bound {\em the sum
  of the number of points stored in all the leaves}. For a tree with
$m$ nodes, let $L(m)$ be an upper bound on this count. We would like
to prove that $L(m)\le m^{1+\eps}$.  As in \cite{I02}, we only need
to focus on cases 2 and 3 of the construction, as case 1 does not
replicate the points.  We will consider the case 2 (case 3 is exactly
similar). 

Let $m_j=|S_j|$ and $m_j'=|S_{|i}\cap \cup_{s\in S_j}B(s,c+1)|$,
whereas $|S|=m$. By construction, we have that $\sum m_j=m$ and
$m_j/m>(m_j'/m)^{1+\eps}$ for all $j$.

By induction, assume $L(m_j')\le (m'_j)^{1+\eps}$ for all
children. Then, we have that:
$$
L(m)\le \sum_j L(m_j')\le \sum_j (m_j')^{1+\eps}<
m^\eps\sum_j m_j=m^{1+\eps}.
$$

We now argue the total space is $O(S(n)\cdot k\log n\cdot
n^{\epsilon})$.  Since the depth of the tree is $O(k\log n)$, we have
that the total number of points stored in the ANN data structures is
$O(k\log n\cdot C(n))=O(k\log n\cdot n^{1+\epsilon})$. Since each ANN
is on at most $n$ points, we have that, for each occurrence of a point
in the ANN data structure, we have an additional factor of
$S(n)/n$.\footnote{Here we assume the natural condition that $S(n)$ is
  increasing, which is, otherwise, easy to guarantee.} Hence the total
space occupied by all the ANN data structures is $O(S(n)/n\cdot k\log
n\cdot n^{1+\epsilon})$. Using a smaller $\eps$ (to hide the $\log n$
factor), we obtain the stated space bound of $O(S(n)\cdot k\cdot n^{\epsilon})$.

\section{$\widetilde{O}(\log d)$-ANN for symmetric norms}
\label{sec:simple-sym}
We provide a simple ANN algorithm for general symmetric norm achieving $O(\log d \log \log d)$ approximation using near-linear space and sub-linear query time. The algorithm will leverage the results in the previous section by relating general symmetric norms to Orlicz norms. Recall the definition of level vectors in Definition~\ref{def:levels}.

\begin{definition}
Let $\|\cdot \|_{X}$ be any symmetric norm. Let $L_k > 0$ be the minimum number of coordinates needed at level $k$ to have norm at least $1$. In other words,
\[ L_k = \min \{ j \in [d] \mid \| \beta^{-i} \xi^{(j)} \|_{X} > 1 \}. \]
\end{definition}

At a high level, we will relate the norm of a vector $x \in \R^d$ to the norm of its level vectors $V_k(x)$. The definition above gives a way to measure the contribution of level $k$ to the norm. For example, if $x \in \R^d$ has norm $\|x\|_{X} \geq D$, and there are only $2\log_\beta d$ non-zero levels with respect to $x$, then some level vector $\|V_k(x)\|_{X} \geq \frac{D}{2\log_\beta d}$. This implies $b_k = \Omega(\frac{DL_k}{\log_\beta d})$, since we may divide $V_k(x)$ into a sum of vectors with $L_k$ coordinates at level $k$. 

On the other hand, if $x \in \R^d$ has $\|x\|_{X} \leq 1$, then $b_k < L_k$ for each $k$. Since we consider only $2 \log_\beta d$ relevant levels, for $\|x\|_{S} \leq 1$,
\[ \sum_{k=0}^{2\log_{\beta} d-1} \dfrac{b_k}{L_k} \leq 2 \log_{\beta} d. \] 
Additionally, $\sum_{k=0}^{2\log_{\beta} d-1} (b_k/L_k)$ can be decomposed as an additive contribution of coordinates. In particular, coordinate $x_i$ contributes $1/L_k$ if  $i \in B_k$. Therefore, we can hope to approximate the symmetric norm by an Orlicz norms and apply the arguments from Lemma~\ref{lem:orliczmap}. 

The lemma below formalizes the ideas discussed above.

\begin{lemma}
Let $\|\cdot \|_{X}$ be any symmetric norm. For any $D, \alpha > 1$, there exists a non-decreasing function $G \colon \R_+ \to \R_+$ with $G(0) = 0$ and $G(t) \to \infty$ as $t \to \infty$, where every vector $x \in \R^d$ satisfies the following:
\begin{itemize}
\item If $\|x\|_{X} \leq 1$, then $\sum_{i=1}^d G(|x_i|) \leq 2\log_{\beta} d$. 
\item If $\|x\|_{X} > \alpha D \cdot 7 \log_{\beta} d$, then $\sum_{i=1}^d G\left(\frac{|x_i|}{D}\right) \geq \alpha \cdot 2\log_{\beta} d$. 
\end{itemize}
\end{lemma}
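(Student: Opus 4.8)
The plan is to build $G$ from the behaviour of $\|\cdot\|_X$ on the flat vectors $\xi^{(j)}$. Rescale $\|\cdot\|_X$ so that $\|e_1\|_X=1$ (harmless, as the scaling can be absorbed into $G$); then $|x_i|\le\|x\|_X$ and $\|x\|_X\le\|x\|_1$ for every $x$. Put $\Phi(j):=\|\xi^{(j)}\|_X$ for $j\in[d]$, which is non-decreasing, subadditive, and has $\Phi(1)=1$. Define $\Psi(t):=\max\{j\in[d]:\Phi(j)\le 1/t\}$ for $t\in(0,1]$ (well-defined since $\Phi(1)=1$), and set
\[
  G(t):=\begin{cases} 0 & t=0,\\ 1/\Psi(t) & 0<t\le 1,\\ t & t>1.\end{cases}
\]
Since $\Psi$ is non-increasing and $1/\Psi(1)\le 1$, the map $G$ is non-decreasing; clearly $G(0)=0$ and $G(t)\to\infty$ as $t\to\infty$, so $G$ has all the required qualitative properties. (Level-wise, a coordinate at level $k\ge 0$ receives weight $G(t)\in[1/(L_{k+1}-1),\,1/(L_{k}-1)]$ with $L_k=\min\{j:\Phi(j)>\beta^k\}$ as in the statement; the point of using $\Psi$ is that it is the \emph{true} largest $j$ with $\Phi(j)\le 1/t$, not a crude per-level constant.) It remains to check the two displayed inequalities.

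\textbf{First inequality.} Let $\|x\|_X\le 1$ and let $x^*$ be the non-increasing rearrangement of $|x|$. For each $i$ the vector $x^*$ weakly majorizes $x^*_i\,\xi^{(i)}$, so $x^*_i\,\Phi(i)=\|x^*_i\xi^{(i)}\|_X\le\|x\|_X\le 1$; hence $\Phi(i)\le 1/x^*_i$, i.e. $i\le\Psi(x^*_i)$, and therefore $G(x^*_i)=1/\Psi(x^*_i)\le 1/i$ (we are in the middle branch since $x^*_i\le\|x\|_X\le 1$; if $x^*_i=0$ the bound is trivial). Summing, $\sum_{i=1}^d G(|x_i|)\le\sum_{i=1}^d 1/i\le 1+\ln d\le 2\log_\beta d$, the last step because $\log_\beta d=\ln d/\ln\beta$ and $\ln\beta<1$.

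\textbf{Second inequality.} Suppose $\|x\|_X>\alpha D\cdot 7\log_\beta d=:M_0$. Since $|x_i|\mapsto G(|x_i|/D)$ is non-decreasing in $|x_i|$, we may assume $\|x\|_X\le 2M_0$ (otherwise scale $x$ down to norm $2M_0$). Split the coordinates into $B=\{i:|x_i|>D\}$ and $S=\{i:0<|x_i|\le D\}$; by the triangle inequality $\|x_B\|_X+\|x_S\|_X\ge\|x\|_X$, so one of the two exceeds $M_0/2$. If $\|x_B\|_X>M_0/2$, then each $i\in B$ has $|x_i|/D>1$, so $\sum_{i\in B}G(|x_i|/D)=\sum_{i\in B}|x_i|/D=\|x_B\|_1/D\ge\|x_B\|_X/D>M_0/(2D)\ge 2\alpha\log_\beta d$, and we are done. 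If instead $\|x_S\|_X>M_0/2$, put $y:=x_S/D\in[0,1]^d$, so $\|y\|_X>M_0/(2D)=:M'$. Discard from $y$ all coordinates below $M'/(4d)$; by Proposition~\ref{prop:cut-small-level} the resulting vector $z$ has $\|z\|_X\ge\|y\|_X-(M'/4d)\cdot d>\tfrac34 M'$, and its nonzero coordinates lie in $[M'/(4d),1]$, hence occupy $O(\log_\beta d)$ levels. Decomposing $z=\sum_k z^{(k)}$ over levels, $\sum_k\|z^{(k)}\|_X\ge\|z\|_X$. For each level $k$ with $b_k$ coordinates: each is $\le\beta^{-k}$, so $\beta^k\|z^{(k)}\|_X\le\Phi(b_k)$; combined with subadditivity of $\Phi$ and $\Phi(L_{k+1}-1)\le\beta^{k+1}$ this yields $\Phi(b_k)\le\lceil b_k/(L_{k+1}-1)\rceil\,\beta^{k+1}\le(b_k/(L_{k+1}-1)+1)\beta^{k+1}$, hence $b_k/(L_{k+1}-1)\ge\|z^{(k)}\|_X/\beta-1$, which is $\ge\|z^{(k)}\|_X/(2\beta)$ once $\|z^{(k)}\|_X\ge 2\beta$. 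On the other hand every level-$k$ coordinate $z_i>\beta^{-k-1}$ has $G(z_i)=1/\Psi(z_i)\ge 1/\Psi(\beta^{-k-1})=1/(L_{k+1}-1)$. Using also $G(z_i)\le G(y_i)=G(|x_i|/D)$ for $i\in S$,
\[
  \sum_{i=1}^d G(|x_i|/D)\ \ge\ \sum_i G(z_i)\ \ge\ \sum_k\frac{b_k}{L_{k+1}-1}\ \ge\ \frac{1}{2\beta}\sum_{k:\,\|z^{(k)}\|_X\ge 2\beta}\|z^{(k)}\|_X\ \ge\ \frac{1}{2\beta}\Big(\tfrac34 M'-O(\beta\log_\beta d)\Big),
\]
which is $\ge 2\alpha\log_\beta d$ once the absolute constants are arranged correctly (a sharper form of the counting estimate—e.g. passing through the rounded-counts vector $R$ of Section~\ref{sec:prod-space}, which loses only multiplicative $\beta$-factors—is what makes the specific constants $7$ and $2$ work out).

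The technical heart, and the main obstacle, is the second inequality: converting ``$\|x\|_X$ large'' into ``$\sum_i G(|x_i|/D)$ large'' forces the split into large and small coordinates, a level-by-level counting estimate relating the number of coordinates in a level to that level's contribution to the norm, and a truncation of negligible coordinates. Threading the numerical constants through this chain—the truncation level, the threshold $D$, and in particular the index offset $L_{k+1}$ versus $L_k$ in the definition of $G$ (which is exactly what makes the $\sum 1/i$ bound in the first inequality go through)—so that they come out compatible with the stated $7$ and $2$ is the delicate part; the qualitative properties of $G$ and the first inequality are straightforward.
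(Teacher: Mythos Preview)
Your $G$ differs from the paper's. The paper takes the piecewise-constant choice $G(t)=1/L_k$ on $(\beta^{-k-1},\beta^{-k}]$ (zero below level $2\log_\beta d-1$) together with an $\alpha$-\emph{dependent} linear tail $G(t)=\alpha\cdot 2\log_\beta d\cdot t$ for $t>1$; you take $G(t)=1/\Psi(t)$ on $(0,1]$ and $G(t)=t$ for $t>1$. Your first-inequality argument---bound $G(x^*_i)\le 1/i$ via $x^*_i\,\Phi(i)\le\|x\|_X\le 1$ and sum the harmonic series---is different from and arguably slicker than the paper's, which simply observes $\sum_i G(|x_i|)=\sum_k b_k/L_k$ and uses that each term is $<1$ over $2\log_\beta d$ levels.

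The second inequality is where you leave a genuine gap, and your own concluding paragraph concedes it. The paper's argument is short and gets the constants: after dividing by $D$ and truncating coordinates below $\beta/d^2$, the $\alpha$-weighted tail of $G$ disposes of any coordinate exceeding $1$ in a single stroke, so one may assume all nonzero coordinates lie in levels $0,\dots,2\log_\beta d-1$. Then split levels into $A=\{k:b_k<L_k\}$ and $B=\{k:b_k\ge L_k\}$. The crucial point is that for $k\in A$ one has $\|V_k\|_X\le 1$ \emph{absolutely}, independent of $\alpha$, so $\sum_{k\in A}\|V_k\|_X\le 2\log_\beta d$; and for $k\in B$ one has $\|V_k\|_X\le 2b_k/L_k$. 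If $\sum_{k\in B}b_k/L_k<2\alpha\log_\beta d$, the two bounds together force $\sum_k\|V_k\|_X< 4\alpha\log_\beta d+2\log_\beta d$, contradicting $\|x\|_X>6\alpha\log_\beta d$ since $\alpha>1$.

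Your route instead lower-bounds $G$ on level $k$ by $1/(L_{k+1}-1)$ (the best your $\Psi$-based $G$ allows), and upper-bounds $\|z^{(k)}\|_X$ by $\beta\bigl(b_k/(L_{k+1}-1)+1\bigr)$. This introduces an unavoidable extra factor $\beta$, and your displayed inequality then reduces to $21\alpha/8\ge 4\alpha\beta+O(\beta)$, which is false for every $\beta>1$ and every $\alpha>1$. It is not a matter of ``arranging the absolute constants correctly'': with your $G$ the constant $7$ genuinely cannot be achieved, only some larger constant. The rounded-counts machinery of Section~\ref{sec:prod-space} that you point to is orthogonal here---it controls multiplicative distortion of the \emph{norm} of a vector under rounding, not the level-count sum $\sum_k b_k/(L_{k+1}-1)$ that you need to bound from below. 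The fix is to adopt the paper's piecewise-constant $G$ so that on level $k$ one gets exactly $1/L_k$, which makes the $A$/$B$ split at $b_k=L_k$ land cleanly.
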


\begin{proof}
For $i \geq 0$, let $A_i = (\beta^{-i-1} , \beta^{-i}]$. The function $G \colon \R_+ \to \R_+$ is defined as
\begin{equation}
\label{eq:g}
 G(t) = \sum_{i = 0}^{2\log_\beta d - 1} \dfrac{\chi_{A_i}(t)}{L_i} + \alpha \cdot 2 \log_{\beta} d \cdot t \cdot \chi_{(1, \infty)}(t)
 \end{equation}
Note that $G(0) = 0$ and $G(t) \to \infty$ as $t\to \infty$. 

Recall the norm satisfies, $\|\xi^{(1)}\|_{X} = 1$, so if $\|x\|_{X} \leq 1$, then $|x_i|\leq 1$ for all $i \in [d]$. This means $\chi_{(1,\infty)}(|x_i|) = 0$ so the second term of the RHS of (\ref{eq:g}) is zero. Therefore,
\[ \sum_{i=1}^{d} G(|x_i|) = \sum_{i=1}^d \sum_{k=0}^{2\log_\beta d - 1} \dfrac{\chi_{A_k}(|x_i|)}{L_k} = \sum_{k=0}^{2\log_\beta d - 1} \dfrac{b_k}{L_k} \]
where $b_k$ is defined with respect to $x$. Since, $b_k < L_k$ for all $0 \leq k < 2\log_{\beta} d$,
\[ \sum_{i=1}^d G(|x_i|) \leq 2\log_{\beta} d. \]
If $x \in \R^d$ where $\|x\|_{X} > \alpha D\cdot 7 \log_{\beta} d$, then the vector $\|\frac{x}{D}\|_{X} > \alpha \cdot 7 \log_{\beta} d$. So it suffices to prove that for any vector $x \in \R^d$ with $\|x\|_{X} > \alpha \cdot 7 \log_{\beta} d$,
\[ \sum_{i=1}^d G(|x_i|) \geq \alpha \cdot 2\log_{\beta} d \]
Additionally, for any vector $x \in \R^d$, we may consider the vector $C(x) \in \R^d$ for $\tau = \frac{\beta}{d^2}$ from Definition~\ref{def:cut-small}. By Proposition~\ref{prop:cut-small-level}, $\|C(x)\|_{X} \geq \|x\|_{X} - \frac{\beta}{d} > \alpha \cdot 6 \log_{\beta} d$. Therefore, we may assume $x \in \R^d$ has $\|x\|_{X} > \alpha \cdot 6 \log_{\beta} d$, and that all non-zero coordinates have absolute values greater than $\frac{\beta}{d^2}$. Equivalently, $b_k = 0$ for all $k \geq 2\log_{\beta} d$. If for some $i \in [d]$, $|x_i| \geq 1$, then the second term in the RHS of (\ref{eq:g}) is non-zero, and $G(|x_i|) \geq \alpha \cdot 2 \log_{\beta} d$. So we may further assume all coordinates of $x$ lie in levels $k = 0, \dots, 2\log_{\beta} d - 1$. Note that 
\[ \sum_{i=1}^d G(|x_i|) = \sum_{k=0}^{2\log_{\beta} d - 1} \sum_{i=1}^d G(|V_k(x)_i|), \]
and for each $0 \leq k < 2\log_{\beta} d$, $\sum_{i=1}^d G(|V_k(x)_i|) = \dfrac{b_k}{L_k}$.

We partition the levels into two groups, 
\[ A = \left\{ k \mid \frac{b_k}{L_k} < 1 \right\} \qquad \text{ and } \qquad B = \left\{ k \mid \frac{b_k}{L_k} \geq 1\right\} . \]
For all $k \in B$,
\[ \|V_k(x)\|_{X} \leq \left\lceil \dfrac{b_k}{L_k}\right\rceil \leq \frac{2b_k}{L_k} \]
since by the triangle inequality, we can break $V_k(x)$ into at most $\left\lceil \dfrac{b_k}{L_k} \right\rceil$ vectors with $L_k$ coordinates at level $k$ each having norm at least $1$. 

Suppose for the sake of contradiction that $\sum_{k \in B} \frac{b_k}{L_k} \leq \alpha \cdot 2 \log_{\beta} d$. Then
\[ \alpha \cdot 4 \log_{\beta} d \geq \sum_{k \in B} \frac{2b_k}{L_k} \geq \sum_{k \in B} \|V_k(x)\|_{X}. \]
Additionally, since $\|x\|_{X} > \alpha \cdot 6 \log_{\beta} d$, and 
\[ \alpha \cdot 6 \log_{\beta} d < \|x\|_{X} \leq \sum_{k \in A} \|V_k(x)\|_{X} + \sum_{k \in B}\|V_k(x)\|_{X},\]
it follows that 
\[ \sum_{k \in A} \|V_k(x)\|_X > \alpha \cdot 2\log_{\beta} d. \]
However, this is a contradiction for since $|A| \leq 2 \log_{\beta} d$ and $\|V_k(x)\|_{X} \leq 1$.
\end{proof}

\begin{lemma}
For any $\eps \in (0, 1/2)$, there exists a data structure for ANN over any symmetric norm $\|\cdot\|_{X}$ which achieves approximation $O\left(\frac{\log d \log \log d}{\eps^2}\right)$ using space $O(dn^{1 +\eps})$ and query time $O(dn^{\eps})$.
\end{lemma}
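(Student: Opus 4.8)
The plan is to plug the preceding lemma — which, for any symmetric norm $\|\cdot\|_X$, produces a non-decreasing $G\colon\R_+\to\R_+$ with $G(0)=0$, $G(t)\to\infty$, and the two coordinate-sum bounds — into exactly the max-stable embedding pipeline of Section~\ref{sec:orlicz}, following the remark after Theorem~\ref{thm:dsorlicz}. The only twist is that the lemma's near- and far-thresholds differ from the clean Orlicz normalization by a factor $\Theta(\log_\beta d)=\Theta(\log d)$, and it is precisely this factor that will produce the extra $\log d$ over the $O(\log\log d)$ achieved for Orlicz and top-$k$ norms.

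First I would rescale. Set $\tilde G = G/(2\log_\beta d)$, so that $\|x\|_X\le 1$ implies $\sum_i \tilde G(|x_i|)\le 1$, while $\|x\|_X > 7\alpha D\log_\beta d$ implies $\sum_i \tilde G(|x_i|/D)\ge \alpha$. Next, instantiate the randomized linear map of Lemma~\ref{lem:orliczmap} verbatim: sample i.i.d.\ $u_1,\dots,u_d$ from the distribution $\calD$ with CDF $F(t)=1-\mu^{\tilde G(t)}$ and set $f(x)=(x_1/u_1,\dots,x_d/u_d)$. Repeating the one-line computations of Lemma~\ref{lem:orliczmap} gives: if $\|x\|_X\le 1$ then $\Pr_f[\|f(x)\|_\infty\le 1]=\mu^{\sum_i\tilde G(|x_i|)}\ge\mu$, and if $\|x\|_X > 7\alpha D\log_\beta d$ then $\Pr_f[\|f(x)\|_\infty\le D]=\mu^{\sum_i\tilde G(|x_i|/D)}\le\mu^\alpha$. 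In other words $f$ is exactly the map of Lemma~\ref{lem:orliczmap} with ``$\alpha D$'' replaced throughout by ``$7\alpha D\log_\beta d$''.

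Then I would feed $f$ into the data structure of Theorem~\ref{thm:dsorlicz} unchanged: build $n^\eps$ independent copies $f_1,\dots,f_{n^\eps}$ with $\mu=n^{-\eps}$, $\alpha=2/\eps$, and $D=\Theta(\log\log d/\eps)$ chosen so Theorem~\ref{thm:l-infty-ds} yields an $\ell_\infty^d$ ANN with approximation $D$, space $O(dn^{1+\eps})$, and query time $O(d\log n)$; store each image dataset $\{f_i(p):p\in P\}$ in such a structure $T_i$; on query $q$ probe every $T_i$ with $f_i(q)$ and return any retrieved preimage within $X$-distance $7\alpha D\log_\beta d$ of $q$. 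Correctness is the union bound of Theorem~\ref{thm:dsorlicz} verbatim: if some $p$ has $\|p-q\|_X\le 1$ then with probability $\ge 3/5$ some $f_i$ contracts $p-q$ to $\ell_\infty$-norm at most $1$, and for that $i$ every $p'$ with $\|p'-q\|_X > 7\alpha D\log_\beta d$ survives in $T_i$ with probability at most $\mu^\alpha=n^{-2}$, so a union bound over the at most $n$ far points leaves success probability $\ge 3/5-1/n$. Finally I would tally: approximation $7\alpha D\log_\beta d=O(\log d\log\log d/\eps^2)$ (using $\log_\beta d=\Theta(\log d)$ for constant $\beta$), space $n^\eps\cdot O(dn^{1+\eps})=O(dn^{1+2\eps})$, query time $n^\eps\cdot O(d\log n)=O(dn^\eps\log n)$, and then shrink $\eps$ by a constant factor to absorb the $\log n$ and the factor $2$ in the exponent, giving the stated $O(dn^{1+\eps})$ space and $O(dn^\eps)$ query time.

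I do not expect a genuine obstacle here: the substantive work — compressing a general symmetric norm into a single scalar cost function whose coordinate-wise sum sandwiches the norm up to a $\Theta(\log d)$ factor — is already done in the preceding lemma, and everything above is bookkeeping on top of the established Orlicz data structure. The one point that needs care is tracking the $\Theta(\log_\beta d)$ gap between the near- and far-thresholds through the rescaling and confirming it is the \emph{only} place a $\log d$ enters, so that the final approximation comes out as $O(\log d\log\log d/\eps^2)$ and not worse.
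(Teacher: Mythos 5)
Your proposal is correct and is essentially the paper's argument: the paper's proof is a one-liner stating that the construction of Theorem~\ref{thm:dsorlicz} goes through after absorbing the $2\log_\beta d$ factor from the preceding lemma (the paper phrases this as rescaling after the embedding, you phrase it as rescaling $G$ itself; these are equivalent), and you have simply filled in the bookkeeping it omits. The accounting of where the extra $\log d$ enters the approximation matches the paper's intended derivation.
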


\begin{proof}
We fix $\beta = \frac{3}{2}$. The proof of this lemma follows in the same way as the proof of Theorem~\ref{thm:dsorlicz}. The one difference is that we rescale the $\ell_{\infty}$ norm by $\frac{1}{2 \log_{\beta} d}$ after applying the embedding. 
\end{proof}

\subsection{The $\log^{\Omega(1)} d$-approximation is necessary}
\label{logd_lower}

Let us remark that we cannot push the technique much further. Namely, any $G(\cdot)$ (even non-convex) requires approximation
$\Omega(\sqrt{\log d})$ for the following norm. Define the norm of a vector to be
$$
  \|x\| = \max_{1 \leq k \leq d} \left(\frac{x^*_1 + x^*_2 + \ldots x^*_k}{\sqrt{k}}\right).
  $$
  This is the minimal norm for $a_k = \sqrt{k}$ (see Section~\ref{ex_symmetric} for the definition). It is not hard to check that an approximation
  with any $G(\cdot)$ ends up having a distortion $\Omega(\sqrt{\log d})$. 

  The idea is to consider the following vectors: for every $1 \leq k \leq d$, we consider a vector
  $$
  \Bigl(\underbrace{1, 1, \ldots, 1}_{k}, 0, 0, \ldots, 0\Bigr),
  $$
  and besides, we consider a vector
  $$
  \Bigl(1, \sqrt{2} - 1, \sqrt{3} - \sqrt{2}, \ldots, \sqrt{d} - \sqrt{d - 1}\Bigr).
  $$
  The remaining calculation is a simple exercise.

\section{Lower bound for arbitrary metrics: expander graphs}
\label{apx:lbMetrics}

We give an example of a {\em metric} that is hard for current
approaches to ANN search. The lower bound is based on the notion of robust
expansion, which implies all known lower bounds for
ANN \cite{PTW10, ALRW16a}. In what follows, we will refer to $d = \log
N$ as the dimension of a finite metric space of size $N$.

Our example of a hard metric will be the shortest path metric on any
spectral expander graph. We note that a similar theorem to the one
below is also known for a finite subset of
the high-dimensional Earth-Mover Distance \cite{KP12}.

Fix $M$ to be the metric induced by the shortest path distance on a
3-regular expander $G$ on $N$ nodes. In particular, assume that $1 -
\lambda(G) > c$, where $c$ is an absolute constant, and $\lambda(G) \in
(0,1)$ is the second-largest eigenvalue of the normalized adjacency
matrix of $G$. Let $d$ be the dimension $d=\log N$.

\begin{theorem}
\label{thm:cell-probe}
For any approximation $\alpha>1$, and data set size $n\ge 1$ with
$d^{\Omega(1)}\le n\le N^{O(1)}$, any $\alpha$-ANN data structure on
$n$ points which makes $t$ cell probes (with cells of size at most
$w\le (d\log n)^{O(1)}$), and has success probability at least
$\gamma>n^{-1+o(1)}$, must use space
$m=\gamma^{\Omega(1/t)}{N^{\Omega(1/(\alpha
t))}}=\gamma^{\Omega(1/t)}2^{\Omega(d/(\alpha t))}$.
\end{theorem}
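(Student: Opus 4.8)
The plan is to reduce the theorem to a single estimate --- a lower bound on the \emph{robust expansion} of the expander metric $M$ --- and then feed that estimate into the robust-expansion framework for cell-probe lower bounds of Panigrahy, Talwar and Wieder~\cite{PTW10}, in the form used in~\cite{ALRW16a}. That framework takes a hard distribution for $(\alpha,r)$-ANN together with a lower bound $\Phi$ on its robust expansion at the appropriate (small) scale, and outputs a cell-probe space lower bound of the shape $m \ge \big(\gamma^{O(1)}\cdot\Phi\big)^{\Omega(1/t)}$, where the hidden polynomial and $(wt)^{O(t)}$ overheads are harmless in the stated regime precisely because $w\le(d\log n)^{O(1)}$. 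So it suffices to (i) exhibit a suitable hard distribution on $M$ with multiplicative gap $\alpha$, and (ii) show that its robust expansion is $2^{\Omega(r)}$ for a choice $r=\Theta(d/\alpha)$; this gives $\Phi = 2^{\Omega(d/\alpha)} = N^{\Omega(1/\alpha)}$ and hence $m \ge \gamma^{\Omega(1/t)}\,N^{\Omega(1/(\alpha t))}=\gamma^{\Omega(1/t)}\,2^{\Omega(d/(\alpha t))}$.

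\emph{The hard distribution.} Pick $r = \Theta(d/\alpha)$ small enough that $\alpha r \le \log N - \log n - \omega(1)$ (possible provided $\log n \le (1-\Omega(1))d$). Since $G$ is $3$-regular, $|B(v,\alpha r)| \le 3\cdot 2^{\alpha r} = o(N/n)$ for every vertex $v$. The instance draws a planted vertex $p^\ast$ uniformly at random, sets the data set to $P = \{p^\ast\}\cup\{n-1\ \text{uniform i.i.d.\ vertices}\}$, and lets the query $q$ be the endpoint of an $r$-step lazy random walk on $G$ started at $p^\ast$; thus $d_M(p^\ast,q)\le r$ always. Since $q$ is marginally uniform and the other $n-1$ points are uniform, a union bound over $|B(q,\alpha r)|=o(N/n)$ shows that with probability $1-o(1)$ the point $p^\ast$ is the unique point of $P$ within distance $\alpha r$ of $q$; hence a correct $\alpha$-ANN data structure must return $p^\ast$ with probability $\ge\gamma-o(1)$.

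\emph{Robust expansion from the spectral gap.} The lazy walk $W=\tfrac12(I+A)$, with $A$ the normalized adjacency matrix of $G$, is symmetric, hence reversible with respect to the uniform measure, and has second eigenvalue $\lambda' = \tfrac12(1+\lambda(G)) \le 1-\tfrac c2 < 1$. For $A',B\subseteq V(G)$, decomposing $\mathbf 1_B = \tfrac{|B|}{N}\mathbf 1 + g$ with $g\perp\mathbf 1$ and $\|g\|_2^2\le|B|$, reversibility gives
\[
\Pr\!\big[p^\ast\in B,\ q\in A'\big] \;=\; \tfrac1N\sum_{u\in A'}(W^r\mathbf 1_B)(u) \;\le\; \tfrac{|A'|\,|B|}{N^2} \;+\; \tfrac{\sqrt{|A'|\,|B|}}{N}\,(\lambda')^{r},
\]
and dividing by $\Pr[q\in A']=|A'|/N$ shows that if $|B|\le\tfrac12 N$ and $B$ captures a constant fraction of the walk mass leaving $A'$, then $|B|\ge 2^{\Omega(r)}\,|A'|$ (the constant in $\Omega(\cdot)$ depends only on $c$, via $(\lambda')^{-2r}\ge(1-c/2)^{-2r}=2^{\Omega(r)}$). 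Therefore the robust expansion of this instance is $2^{\Omega(r)}$ at every scale $\delta\le 2^{-\Omega(r)}$, which is the regime the framework needs; with $r=\Theta(d/\alpha)$ this yields $\Phi=2^{\Omega(d/\alpha)}$.

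\emph{Conclusion, and where the difficulty sits.} Combining the two ingredients with~\cite{PTW10,ALRW16a} gives the claimed bound; the hypothesis $\gamma > n^{-1+o(1)}$ is exactly what keeps the success probability above the threshold the framework requires and produces the factor $\gamma^{\Omega(1/t)}$. I expect the main obstacle to be the parameter bookkeeping that makes the constraints simultaneously satisfiable: $r$ must be large enough that $2^{\Omega(r)}$ already reaches $N^{\Omega(1/\alpha)}$, yet small enough that $\alpha r\le\log N-\log n-\omega(1)$ preserves uniqueness of the planted near neighbor, while the $\poly(w,t,\log N)$ and $(wt)^{O(t)}$ slack in the cell-probe reduction must stay below $N^{\Omega(1/(\alpha t))}$ --- this last point is why the theorem restricts the word length to $(d\log n)^{O(1)}$. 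By contrast, the spectral estimate itself is routine once the hard distribution is built from the random walk rather than from metric balls, since reversibility makes the \emph{robust} (not merely worst-case) expansion fall out of a single second-eigenvalue bound on $W^r$.
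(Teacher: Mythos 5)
Your proposal is correct and follows essentially the same route as the paper: a hard distribution given by an $r$-step random walk from a uniformly random point with $r=\Theta(d/\alpha)$, a robust-expansion lower bound of $2^{\Omega(r)}$ obtained from the expander mixing lemma (the paper applies it to $G^r$ directly, you phrase it spectrally for the lazy walk $W^r$, which is the same estimate), and then the \cite{PTW10} cell-probe framework to convert $\Phi_r=\Omega(\gamma^2/\lambda^{2r})$ into $m=\gamma^{\Omega(1/t)}N^{\Omega(1/(\alpha t))}$. Your explicit union-bound check that the planted neighbor is unique is just a spelled-out version of the weak-independence condition the paper asserts for $n<N^{1/4}$ and large enough $\alpha$.
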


We proceed by introducing a few definitions from \cite{PTW10},
and then prove lower bounds on the robust expansion.

\begin{definition}[\cite{PTW10}]
\label{def:gns}
In the Graphical Neighbor Search problem (GNS), we are given a
bipartite graph $H = (U, V, E)$ where the dataset comes from $U$ and
the queries come from $V$. The dataset consists of pairs $P = \{ (p_i,
x_i) \mid p_i \in U, x_i \in \{0, 1\}, i \in [n] \}$. On query $q\in
V$, if there exists a unique $p_i$ with $(p_i, q) \in E$, then we want
to return $x_i$.
\end{definition}

One can use the GNS problem to prove lower bounds on $c$-ANN as
follows: build a GNS graph $H$ by taking $U = V = [N]$, and connecting
two points $u\in U, v\in V$ iff they are at a distance at most $r$
(see details in \cite{PTW10}). We will also need to make sure that in
our instances $q$ is not closer than $cr$ to other points except the
near neighbor.

We now introduce the notion of robust expansion, used in
\cite{PTW10} to prove lower bounds.

\begin{definition}[Robust Expansion \cite{PTW10}]
For a GNS graph $H=(U,V,E)$, fix a distribution $e$ on $E\subset
U\times V$, and let $\mu$ be the marginal on $U$ and $\eta$ be the
marginal on $V$.  For $\delta, \gamma\in(0,1]$, the robust expansion
$\Phi_r(\delta, \gamma)$ is defined as follows:
$$
\Phi_r(\delta, \gamma)=\min_{A\subset V : \eta(A)\le \delta} \min_{B\subset U :
    \frac{e(A\times B)}{e(A\times V)}\ge \gamma} \frac{\mu(B)}{\eta(A)}.
$$
\end{definition}

We now prove a lower bound on the robust expansion $\Phi_r(\delta,
\gamma)$ for a GNS graph arising from the shortest path metric on the
expander graph $G$. Fix $r=d/\alpha$. The hard distribution $e$ is
defined as follows: pick $p$ at random from $M$ and obtain $q$ by
running a random walk of length $r$ starting at $p$. Note that for
$n<N^{1/4}$ and sufficiently high constant $\alpha$, the distribution
satisfies the weak-independence condition required for applying the
results in~\cite{PTW10}.

Fix any sets $A,B\subset M$, where $a=|A|/N$ and $b=|B|/N$. By the
expander mixing lemma applied to $G^r$, we obtain that:
$$
\left|E_{G^r}(A,B)-\tfrac{|A|\cdot |B|}{3^r N}\right|\le \lambda^3
3^r\sqrt{|A|\cdot |B|}.
$$
Considering that $\Pr[q\in B\mid p\in
  A]=\tfrac{E_{G^r}(A,B)}{aN\cdot 3^r}$, we have that:
$$ 
\Pr[q\in B\mid p\in A]\le b+\lambda^r\sqrt{b/a}.
$$ Restricting to sets $A,B$ such that $\Pr[q\in B\mid p\in A]\ge
\gamma$, for which we must have that $\Phi_r=\Phi_r(a,\gamma)\ge b/a$ (by definition),
we conclude:
$$
\gamma\le \Phi_r\cdot a+\lambda^r\sqrt{\Phi_r}.
$$

Hence, either $\Phi_r=\Omega(\gamma/a)$ or $\Phi_r=\Omega(\gamma^2/\lambda^{2r})$. 

\begin{proof}[Proof of Theorem~\ref{thm:cell-probe}]
Applying Theorem 1.5 from \cite{PTW10}, we have that, for $t\ge 1$
cell probes, either:
\begin{itemize}
\item
$m^tw/n\ge \Omega(\gamma\cdot m^t)$, an impossibility;
\item
or
$m^tw/n\ge \Omega(\gamma^2/\lambda^{2r})$, or
$m^t=\Omega(\frac{n}{w}\gamma^2/ \lambda^{2r})$, implying $m=\gamma^{2/t}N^{\Omega(1/(\alpha
  t))}$.
\end{itemize}
\end{proof}

To show how bad the situation for expander metrics is, we state a lower bound on
for $\alpha$-ANN on the expander metric described above in the \emph{list-of-points} model, which captures the hashing-based algorithms of \cite{ALRW16a} and in the decision tree model of \cite{I01}. The proofs follow from a simple derivation using the robust expansion lower bounds in Section~7 of \cite{ALRW16a} and a reduction of decision trees to $O(\log m)$-cell-probe data structures similar to Appendix A in~\cite{ACP08}.

\begin{theorem}
Any list-of-points data structure for $(c, r)$-ANN for random instances of $n$ points in the expander metric of dimension $d$ (described above) with query time $t$ and space $m$ has either $t = \Omega(n)$, or $m = \exp\left( \Omega(d)\right)$.
\end{theorem}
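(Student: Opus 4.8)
The plan is to run the robust-expansion lower bound for the list-of-points model from \cite[Section~7]{ALRW16a} (a sharpening of the framework of \cite{PTW10}) with the expansion profile computed above for the expander metric $M$. Recall that with $r=d/\alpha$ and the hard distribution $e$ on pairs $(p^{\ast},q)$ obtained by drawing $p^{\ast}$ uniformly from $M$ and running a length-$r$ random walk of $G$ from $p^{\ast}$ to get $q$, we showed that for all $a,\gamma\in(0,1]$ one has $\Phi_r(a,\gamma)\ge c_0\min\{\gamma/a,\ \gamma^2/\lambda^{2r}\}$ for an absolute constant $c_0>0$. Since $1-\lambda(G)>c$ is an absolute constant, $\lambda^{2r}=\lambda^{2d/\alpha}=2^{-\Theta(d/\alpha)}$; set $R:=\lambda^{-2r}$, so $R=2^{\Theta(d/\alpha)}=\exp(\Omega(d))$ for $\alpha=O(1)$ (and $\exp(\Omega(d/\alpha))$ in general).

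First I would set up the list-of-points bookkeeping. Such a structure is given by query regions $A_1,\dots,A_M\subseteq M$ and, for each data point $p$, the set of lists it is inserted into; on query $q$ it examines all lists $L_i$ with $q\in A_i$ and scans every stored point. Let $\mu,\eta$ be the marginals of $e$ on the data and query sides (both uniform on $[N]$, since uniform is stationary for the walk on a regular graph), and put $a_i=\eta(A_i)$, $b_i=\mu(B_i)$ where $B_i$ is the set of points whose insertion rule includes list $i$, and $\gamma_i=\Pr_{(p^{\ast},q)\sim e}[\,p^{\ast}\in B_i\mid q\in A_i\,]$, so $e(A_i\times B_i)=a_i\gamma_i$. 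After amplifying the success probability to a constant close to $1$ and then fixing the internal randomness by a standard averaging argument, I may assume the four facts: $\sum_i a_i=O(t)$ (expected number of touched lists), $n\sum_i a_ib_i=O(t)$ (expected scan cost), $M+n\sum_i b_i=O(m)$ (space), and $\sum_i a_i\gamma_i\ge\Omega(1)$. The last holds because, with high probability over the hard instance, the only data point within distance $cr$ of $q$ is the planted $p^{\ast}$, so a correct query must scan $p^{\ast}$, and a union bound over the lists gives the claimed lower bound on $\sum_i a_i\gamma_i$; this weak-independence of the planted-walk distribution is exactly what was verified for Theorem~\ref{thm:cell-probe}.

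I would then carry out a short case split. By definition of $\Phi_r$ on the pair $(A_i,B_i)$ and the profile above, $b_i\ge a_i\Phi_r(a_i,\gamma_i)\ge c_0\min\{\gamma_i,\ a_i\gamma_i^2 R\}$. Partition the lists into $I_1=\{i:a_i\gamma_i R\ge 1\}$, where $b_i\ge c_0\gamma_i$, and $I_2$, where $b_i\ge c_0 a_i\gamma_i^2 R$. For $I_1$, the scan-cost bound gives $\sum_{i\in I_1}a_i\gamma_i\le c_0^{-1}\sum_i a_ib_i=O(t/n)$, so if $t\le\eps n$ for a small enough constant $\eps$ the lists in $I_1$ carry at most half of the success mass, forcing $\sum_{i\in I_2}a_i\gamma_i\ge\Omega(1)$. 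Cauchy--Schwarz and $\sum_{i\in I_2}a_i\le\sum_i a_i=O(t)$ then give $\sum_{i\in I_2}a_i\gamma_i^2\ge\Omega(1/t)$, while the space bound gives $\sum_{i\in I_2}a_i\gamma_i^2\le(c_0R)^{-1}\sum_i b_i=O(m/(nR))$. Comparing, and using $t\le n$, yields $m\ge\Omega(nR/t)\ge\Omega(R)=\exp(\Omega(d))$. Thus either $t=\Omega(n)$ or $m=\exp(\Omega(d))$, as claimed. (The decision-tree statement in the surrounding text follows the same way, by first simulating a size-$m$ decision tree with an $O(\log m)$-probe cell-probe structure in the style of \cite[Appendix~A]{ACP08} and applying Theorem~\ref{thm:cell-probe}.)

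Conceptually this is light --- it is the robust-expansion recipe of \cite{PTW10,ALRW16a} fed with our expansion profile --- so I expect no genuinely hard step. The two points that require care are the averaging argument that fixes the internal randomness while keeping all four facts above simultaneously (handled by amplifying first and union-bounding a constant number of low-probability ``bad'' events), and confirming the weak-independence of the planted-walk distribution so that ``correct output'' really entails scanning $p^{\ast}$ --- and the latter is inherited verbatim from the setup already developed for Theorem~\ref{thm:cell-probe}.
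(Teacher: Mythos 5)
Your proposal is correct and follows exactly the route the paper intends: the paper itself gives no details beyond citing the robust-expansion machinery of Section~7 of \cite{ALRW16a} fed with the expansion profile $\Phi_r \ge \Omega(\min\{\gamma/a,\ \gamma^2/\lambda^{2r}\})$ derived for the expander walk, and your case split over the two branches of that profile, together with the standard bookkeeping ($\sum_i a_i = O(t)$, $n\sum_i a_i b_i = O(t)$, $n\sum_i b_i = O(m)$, $\sum_i a_i\gamma_i = \Omega(1)$) and the Cauchy--Schwarz step, is precisely that derivation written out. No gaps.
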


\begin{theorem}
Let $d = \Omega(\log^{1 + \eps} n)$ for some $\eps > 0$. Any decision tree of size $m$ and depth $t$ and word size $w$ succeeding with probability $\gamma$ satisfies:
\[ \dfrac{m^{O(\log m)} t w }{n} \geq \Phi_r\left(\frac{1}{m^{O(\log m)}}, \frac{\gamma}{O(\log m)} \right). \]
In particular, for any $\rho > 0$, if $w \leq n^{\rho}$, either $t \geq \widetilde{\Omega}(n^{1-\rho})$ or $m = \exp\left(\Omega( d^{\eps/(1+\eps)}) \right) \poly(n)$. 
\end{theorem}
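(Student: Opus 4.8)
The plan is to reduce the decision‑tree statement to a cell‑probe lower bound, and then feed that cell‑probe bound into the robust‑expansion machinery of~\cite{PTW10} together with the expansion estimate already established above for the shortest‑path metric of the expander~$G$. The first and most delicate step is the reduction itself: following Appendix~A of~\cite{ACP08}, a decision tree of size $m$, depth $t$ and word size $w$ can be simulated by a cell‑probe data structure that makes only $t_{\mathrm{cp}} = O(\log m)$ \emph{adaptive} probes, uses space $M = m^{O(1)}$, has word size $w_{\mathrm{cp}} = \widetilde{O}(tw)$ (each probed cell encoding the $O(t)$ tests and pointers along one ``band'' of a balanced/centroid decomposition of the tree), and succeeds with probability $\gamma_{\mathrm{cp}} = \gamma/O(\log m)$ after a union bound over the $O(\log m)$ rounds. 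I expect getting the probe count down to $O(\log m)$ rather than $O(t)$ via the balanced decomposition, and checking that the word‑size and success‑probability losses are only the claimed polylogarithmic ones, to be the main technical obstacle; the rest is bookkeeping on top of already‑proven ingredients.

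Next I would apply Theorem~1.5 of~\cite{PTW10} to this cell‑probe data structure, exactly as in the proof of Theorem~\ref{thm:cell-probe}, for the GNS instance arising from the shortest‑path metric of $G$ with radius $r = d/\alpha$. This yields $\frac{M^{t_{\mathrm{cp}}} w_{\mathrm{cp}}}{n} \ge \Phi_r\!\left(\frac{1}{M^{t_{\mathrm{cp}}}},\, \gamma_{\mathrm{cp}}\right)$, and substituting $M = m^{O(1)}$, $t_{\mathrm{cp}} = O(\log m)$, $w_{\mathrm{cp}} = \widetilde O(tw)$ and $\gamma_{\mathrm{cp}} = \gamma/O(\log m)$ gives precisely the first displayed inequality $\frac{m^{O(\log m)}\, tw}{n} \ge \Phi_r\!\left(\frac{1}{m^{O(\log m)}},\, \frac{\gamma}{O(\log m)}\right)$. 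Here one must verify the mild regularity assumptions that~\cite{PTW10} requires — in particular the weak‑independence of the hard distribution on $E$ (a random walk of length $r$ from a uniform start) — but these were already noted to hold for the expander instance before Theorem~\ref{thm:cell-probe}.

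Finally I would substitute the two‑case lower bound on $\Phi_r$ derived above for $G^r$, namely that for every $a,\gamma'$ either $\Phi_r(a,\gamma') = \Omega(\gamma'/a)$ or $\Phi_r(a,\gamma') = \Omega(\gamma'^2/\lambda^{2r})$, taking $a = m^{-O(\log m)}$ and $\gamma' = \gamma/O(\log m)$. In the first case the $m^{O(\log m)}$ factors cancel and the inequality reduces to $\widetilde{O}(tw)/n \ge \Omega(\gamma')$, which under $w \le n^\rho$ forces $t = \widetilde\Omega(n^{1-\rho})$ — the first disjunct of the ``in particular''. In the second case, using that $\lambda\in(0,1)$ is a constant and $r = \Theta(d)$ so that $\lambda^{-2r} = 2^{\Omega(d)}$, the inequality becomes $m^{O(\log m)}\cdot \widetilde{O}(tw) \ge 2^{\Omega(d)}\cdot\gamma'^{O(1)}$; if $t < \widetilde\Omega(n^{1-\rho})$ (otherwise the first disjunct already holds) then $tw \le n\cdot\mathrm{polylog}(n)$ since $w\le n^\rho$, so after cancelling and taking logarithms one obtains $O(\log^2 m) \ge \Omega(d) - O(\log n) - O(\log\log(mn))$. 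Invoking the hypothesis $d = \Omega(\log^{1+\eps} n)$ to absorb the $O(\log n)$ and lower‑order terms into $\Omega(d)$, and solving the resulting inequality (which is quadratic in $\log m$), gives $\log m = \Omega(\sqrt d) \ge \Omega(d^{\eps/(1+\eps)})$, i.e.\ $m = \exp\!\left(\Omega(d^{\eps/(1+\eps)})\right)\cdot\poly(n)$. The two places to be careful in this last step are the self‑referential appearance of $\log m$ in the exponent $m^{O(\log m)}$ (dealt with by just solving the quadratic‑in‑$\log m$ inequality) and confirming that the assumption $d = \Omega(\log^{1+\eps} n)$ is exactly strong enough to keep the $\Omega(d)$ term dominant; the genuinely nontrivial ingredient, the robust‑expansion bound for the expander, is already in hand.
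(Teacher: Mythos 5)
Your proposal follows exactly the route the paper itself indicates --- the paper gives no detailed proof, only the one-line recipe of reducing the decision tree to an $O(\log m)$-probe cell-probe data structure \`a la Appendix~A of \cite{ACP08} and then invoking the robust-expansion lower bound for the expander metric via \cite{PTW10}/\cite{ALRW16a} --- so your reconstruction matches the intended argument, and the two technical points you flag (the centroid-decomposition simulation with cell size $O(tw)$, and the weak-independence of the hard distribution) are indeed the only things to check. One quantitative caveat: your final step yields $\log m = \Omega(\sqrt{d})$, and the inequality $\sqrt{d} \geq d^{\eps/(1+\eps)}$ you use to match the stated exponent holds only for $\eps \leq 1$; for $\eps > 1$ the claimed exponent $d^{\eps/(1+\eps)}$ (which is $d/\log n$ when $d = \log^{1+\eps} n$) exceeds $\sqrt{d}$, so either the theorem's exponent arises from a slightly different accounting than the one you (and the natural reading of the paper's sketch) give, or your conclusion should simply be stated as $m \geq \exp\bigl(\Omega(\sqrt{d})\bigr)$.
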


\end{document}